\documentclass[12pt]{article}
\usepackage[utf8]{inputenc}
\usepackage{physics}
\usepackage{parskip}
\usepackage{lscape}
\usepackage{hyperref}
\usepackage{tcolorbox}
\usepackage{caption}
\usepackage{amsmath}
\usepackage{amsthm}
\usepackage{amssymb}
\usepackage{relsize}
\usepackage{mathrsfs}
\usepackage{braket}
\usepackage{mathtools}
\usepackage[symbol]{footmisc}
\usepackage[letterpaper, portrait, margin=1in]{geometry}
\usepackage{lmodern}
\usepackage[T1]{fontenc}
\usepackage{graphicx}
\graphicspath{{./}}
\usepackage{csquotes}
\usepackage{environ}
\usepackage{trimspaces}
\usepackage{tikz}

\theoremstyle{plain}
\newtheorem{theorem}{Theorem}[section]
\newtheorem{corollary}[theorem]{Corollary}
\newtheorem{proposition}[theorem]{Proposition}

\theoremstyle{definition}
\newtheorem{definition}[theorem]{Definition}

\numberwithin{equation}{section}

\title{Quasi-rectifiable Lie algebras \\ for partial differential equations}
\author{A.M. Grundland$^{*,**}$ and J. de Lucas\footnote{Corresponding author: Javier de Lucas. Email: j.de-lucas-ara@uw.edu.pl. }$^{*,***}$\footnote{{\bf Declarations of interest: none}} \\$^*$Centre de Recherches Mathématiques, Université de Montréal,\\Succ. Centre-Ville, CP 6128, Montréal (QC) H3C 3J7, Canada.\\$^{**}$D\'epartement de Math\'ematiques et d'Informatique, Université du Québec,\\ CP 5000, Trois-Rivières (QC), G9A 5H7, Canada.\\ $^{***}$Department of Mathematical Methods in Physics, University of Warsaw,\\ ul. Pasteura 5, 02-093, Warszawa, Poland.}
\date{}

\begin{document}
\maketitle
\begin{abstract}
    We introduce families of quasi-rectifiable vector fields and study their geometric and algebraic aspects. Then, we analyse their applications to systems of partial differential equations. Our results explain, in a simpler manner, previous findings about hydrodynamic-type equations. Facts concerning families of quasi-rectifiable vector fields, their relation to Hamiltonian systems, and practical procedures for studying such families are developed. We introduce and analyse quasi-rectifiable Lie algebras, which are motivated by geometric and practical reasons. We classify different types of quasi-rectifiable Lie algebras, e.g. indecomposable ones up to dimension five. New methods for solving systems of hydrodynamic-type equations are established to illustrate our results. In particular, we study hydrodynamic-type systems admitting $k$-wave solutions through quasi-rectifiable Lie algebras of vector fields. We develop techniques for obtaining the submanifolds related to quasi-rectifiable Lie algebras of vector fields and systems of partial differential equations admitting a nonlinear superposition rule: the PDE Lie systems.
\end{abstract}
{\bf Keywords:} systems of partial differential equations, Lie algebra, quasi-rectifiable Lie algebras, classification of Lie algebras, PDE Lie system.

{\bf MSC 2020:} 35Q53 (primary); 35A30, 35Q58, 53A05 (secondary).

\newpage
\tableofcontents
\section{Introduction}
The study of hydrodynamic-type equations \cite{Pe72,RY83,Za79,Za80} via the method of characteristics using Riemann invariants includes a description of certain integrability conditions under which the method is applicable \cite{Je76,Ma84,Mi58,Pe85,Po08,Ri58,Za79,Za80}. In particular, the method requires the existence of a family of vector fields $\{X_1,\ldots,X_r\}$ on the manifold $N$ describing the dependent variables of the hydrodynamic-type equations such that each Lie bracket $[X_i,X_j]$ is a linear combination, with certain arbitrary functions on $N$, of $X_i$ and $X_j$. We will call these families of vector fields {\it quasi-rectifiable} due to their mathematical properties, or {\it elastic} because of their relationship with the method of characteristics and hydrodynamic-type equations associated with Riemann invariants. In the literature, the method of characteristics involving Riemann invariants assumes that $X_1,\ldots,X_r$ can be rescaled by multiplying such vector fields with functions $f_1,\ldots,f_r$ so that $[f_iX_i,f_jX_j]=0$ for $1\leq i<j\leq r$ (see \cite{GV91} for further details). 

The interest in the modified Fr\"obenius theorem by rescaling is due to its ability to demonstrate the viability of certain practical simplifications within the method of characteristics involving Riemann invariants \cite{Gr74II,GL23,GV91,Pe74}. This is used to obtain, in a simple manner, a system of partial differential equations (PDEs) whose solutions describe parametrisations of $k$-wave solutions of hydrodynamic-type equations \cite{GT96,GV91,GZ83a,GZ83b}. More specifically, the work \cite[p. 239]{GV91} claims without proof that a simple calculation could rescale a quasi-rectifiable family of vector fields $X_1,\ldots,X_r$ in a manner that would lead to the commutation of the rescaled vector fields $f_1X_1,\ldots,f_rX_r$ between themselves. However, if $r>2$, the functions $f_1,\ldots,f_r$ must satisfy a complicated system of $r(r-1)/2$ partial differential equations whose solution must be proven to exist. The existence of a rescaling is guaranteed by the modified Fr\"obenius theorem by rescaling \cite{GL23}. This rescaling significantly simplifies the method of characteristics involving Riemann invariants. It should be noted that the Fr\"obenius theorem cannot be used to rescale $X_1,\ldots,X_r$ as needed in that method because it merely guarantees that the integrable distribution $\mathcal{D}$ spanned by $X_1,\ldots,X_r$ is generated by some commuting vector fields $Y_1,\ldots,Y_r$ taking values in $\mathcal{D}$. However, the vector fields $Y_1,\ldots,Y_r$ obtained by the Fr\"obenius theorem are derived from certain linear combinations of $X_1,\ldots,X_r$. In contrast, within the generalised method of characteristics, the vector fields $X_1,\ldots,X_r$ can only be rescaled to commute, rendering the Fr\"obenius theorem inadequate to ensure their rescaling. Note that the above differences with respect to the Fr\"obenius theorem and its fields of application explain why the modified Fr\"obenius theorem by rescaling \cite{GL23} has not previously been studied in depth. 

Despite the significance of the modified Fr\"obenius theorem by rescaling, its statement only ensures the existence of $f_1,\ldots,f_r$. In other words, it does not give a method for their explicit determination. Moreover, one may wonder about the theoretical properties of quasi-rectifiable families of vector fields. In particular, one may try to look for a coordinate system putting the vector fields in a `canonical', so called {\it quasi-rectifiable form}.  Analysing this problem is the first aim of this paper. 

As a first result, we formalise the notion of a quasi-rectifiable family of vector fields. This is used to put in a rigorous way what has already been used in the literature in an intuitive manner (cf. \cite[pg. 349]{Pe74}). A precise definition allows us to provide a new, simpler approach to the modified Fr\"obenius theorem by rescaling. Consequently, we introduce a new concept of {\it quasi-rectifiable Lie algebras of vector fields}, which is studied in this paper and whose applications in hydrodynamic-type equations are investigated. Quasi-rectifiable Lie algebras of vector fields are associated with elastic superposition of waves described by hydrodynamic-type systems (\cite{Pe85}).

Next, we develop new techniques for obtaining the associated functions $f_1,\ldots,f_r$. This has immediate applications in the theory of the generalised method of characteristics involving Riemann invariants, as it is done so as to obtain an appropriate parametrisation of solutions \cite{Gr23,GL23,GV91}. Some of our new methods involve the generalisation of a classical method for solving systems of PDEs  \cite{Sn06}. This involves a new application of the so-called evolution vector fields in contact geometry, which appeared recently in \cite{SLLM20} as a method for the study of thermodynamic systems, and has been receiving attention \cite{ELLM21,LLM21}. Moreover, our new techniques have applications,  such as in the theory of Lie systems in order to obtain superposition rules for certain classes of Lie systems. This avoids the necessity of putting Lie systems in canonical form as in \cite{Wi83} or of using geometric structures as in the Poisson coalgebra method in \cite{LS20}. Our method is  simpler than the standard methods based on the integration of families of vector fields and the solving of algebraic equations \cite{Wi83} or the solving of systems of PDEs \cite{CGM07}. Moreover, as a consequence, Proposition \ref{Prop:NewMethod} provides new approaches for obtaining solutions of systems of PDEs of the form (\ref{eq:DifEq2}).  Some of our techniques have applications in the solving of general systems of first-order PDEs. Moreover, they are also useful for the determination of properties related to quasi-rectifiable Lie algebras of vector fields. For example, Theorem \ref{Th:FroRecBett} and Corollary \ref{Cor:Double} provide the functions rescaling a quasi-rectifiable family of vector fields into commuting vector fields as well as the coordinates putting such vector fields into quasi-rectifiable form.

We study the relation between quasi-rectifiable families of vector fields and the integration of systems of ordinary differential equations. In particular, their appearance in the study of Lie symmetries and Sundman transformations for systems of ordinary differential equations  is studied. Additionally, it is determined when a quasi-rectifiable family of vector fields consists of Hamiltonian vector fields and when it can be put into a quasi-rectifiable form that consists of Hamiltonian vector fields as well. It is proved that Hamiltonian families of quasi-rectifiable vector fields admit families of Hamiltonian functions of a particular type, which reassembles the commutation relations appearing in the theory of Poisson algebra deformations of the so-called Lie--Hamilton systems \cite{BCFHL17}. This gives rise to the definition of quasi-rectifiable families of Hamiltonian functions.  In particular, integrable Hamiltonian systems (in a symplectic sense) give rise to quasi-rectifiable families of Hamiltonian functions of a very particular type.

Additionally, we define {\it quasi-rectifiable Lie algebras}. These are Lie algebras admitting a basis $\{e_1,\ldots,e_r\}$ such that each commutator $[e_i,e_j]$ is spanned by $e_i$ and $e_j$. We study the properties of such Lie algebras and classify quasi-rectifiable Lie algebras of dimension up to five, considering the case of indecomposable Lie algebras of dimension four and five. Moreover, other cases of higher-dimensional quasi-rectifiable Lie algebras are studied. 

Finally, some applications of our techniques appearing in the theory of hydrodynamic-type equations are analysed. In particular, our study is first concerned with hydrodynamic equations on a $(1+1)$-dimensional manifold. This case is related to a quasi-rectifiable Lie algebra of vector fields and our methods are applied to put a basis of such a Lie algebra into a quasi-rectifiable form. Then, $k$-wave solutions for the hydrodynamic equations of a barotropic fluid in $(1+1)$-dimensions \cite{Gr23} are found. In particular, this illustrates the existence of certain quasi-rectifiable Lie algebras of vector fields and the usefulness of our methods for studying practical problems. Finally, we provide a method that allows one to construct systems of PDEs admitting $k$-wave solutions. As an example, our classification of quasi-rectifiable Lie algebras is used to obtain one system of PDEs admitting a three-wave solution related to a quasi-rectifiable Lie algebra of vector fields isomorphic to $\mathfrak{r}_{3,-1}$ given in Table \ref{fig:Lie-algebra-classification}.  Moreover, instead of putting a basis of such a Lie algebra into quasi-rectifiable form, as classically done in the generalised method of characteristics, we provide a so-called Lie system of PDEs \cite{CGM07,CL11} in order to obtain a parametrisation of the $k$-wave solutions. The properties of this system of PDEs is related to the structure of a quasi-rectifiable Lie algebra of vector fields, which also justifies the use of our classification in Section \ref{Sec::ClaRecLieAlg}.

This paper is structured as follows. Section \ref{Sec:Elastic} is devoted to the study of quasi-rectifiable Lie algebras of vector fields and several methods related to them. Moreover, theoretical applications of our results to contact geometry, nonlinear superposition rules, and hydrodynamic-type systems are described. Section \ref{Sec:Direct} analyses direct methods for putting families of vector fields in a quasi-rectifiable form. Section \ref{Sec::Int} analyses the relation between quasi-rectifiable families of vector fields and Lie symmetries of ordinary differential equations and integrable Hamiltonian systems relative to symplectic manifolds. Section \ref{Sec::ClaRecLieAlg} analyses abstract quasi-rectifiable Lie algebras and classifies them for several types of Lie algebras. Finally, some specific applications of our results to hydrodynamic-type equations are presented in Section \ref{Se:Hydro}. Our classification of indecomposable four- and five-dimensional quasi-rectifiable Lie algebras is summarised in tables in the Appendix.

\section{Quasi-rectifiable families of vector fields}\label{Sec:Elastic}

In the study of the solving of hydrodynamic-type equations via Riemann invariants, an interesting concept appears: families of vector fields satisfying a certain class of commutation relations \cite{GL23,Ri58}. Their definition and the study of their existence and main properties is the aim of this section. To stress the main points of our presentation, we will assume all structures to be globally defined and smooth. 
Note that, in what follows, we do not use the Einstein notation over repeated indices. For simplicity, all structures are assumed to be globally defined and smooth unless otherwise stated. From now on, given a list $a_1,\ldots,a_k$ of $k$ elements, $a_1,\ldots,\widehat{a}_i,\ldots, a_k$ stands for the list of $(k-1)$-elements obtained by skipping the term $a_i$ in the previous one. Moreover, $N$ stands for an $n$-dimensional manifold.

Let us start by giving a new geometric characterisation of a relevant class of families of vector fields appearing in the Riemann invariants method \cite{GV91}.

\begin{theorem}\label{Th::RecBasis}Let $X_1,\ldots ,X_r$ be a family of vector fields on  $N$ such that $X_1\wedge\ldots\wedge X_r$ does not vanish on $N$. There exists a  coordinate system $\{x^1,\ldots,x^n\}$  on $N$ such that the integral curves of each $X_i$ are given by $x^1=k_1,\ldots,{x}^{i-1}=k_{i-1},{x}^{i+1}=k_{i+1},\ldots,x^n=k_n$  for some constants $k_1,\ldots,\hat{k}_{i},\ldots, k_n\in \mathbb{R}$, if and only if
\begin{equation}\label{eq:CommRel}
[X_i,X_j]=f_{ij}^iX_i+f_{ij}^jX_j,\qquad 1\leq i<j\leq r,
\end{equation}
for a family of $r(r-1)$ functions $f_{ij}^i,f_{ij}^j\in C^\infty(N)$ with $1\leq i<j\leq r$.
\end{theorem}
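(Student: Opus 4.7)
The plan is to prove the two directions separately. The forward implication is a short coordinate computation; the backward implication is where the work lies and will be reduced to Fr\"obenius's theorem applied to the nested integrable distributions that the commutation relations (\ref{eq:CommRel}) automatically produce.

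For the forward direction, the condition that the integral curves of $X_i$ are given by fixing every coordinate except $x^i$ forces $X_j(x^i)=0$ for $j\neq i$ and $X_i(x^i)\neq 0$; equivalently $X_i=g_i\partial_{x^i}$ for some nowhere-vanishing $g_i\in C^\infty(N)$. The identity $[fY,gZ]=fg[Y,Z]+f(Yg)Z-g(Zf)Y$ applied to $X_i$ and $X_j$ and then rewritten in terms of $X_i$ and $X_j$ yields
\[
[X_i,X_j]=\frac{g_i\,\partial_{x^i}g_j}{g_j}\,X_j-\frac{g_j\,\partial_{x^j}g_i}{g_i}\,X_i,
\]
which is (\ref{eq:CommRel}) with $f_{ij}^i=-g_j(\partial_{x^j}g_i)/g_i$ and $f_{ij}^j=g_i(\partial_{x^i}g_j)/g_j$.

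For the backward direction, I would first record two involutivity statements that (\ref{eq:CommRel}) supplies for free. Because $X_1\wedge\cdots\wedge X_r$ does not vanish, the distribution $\mathcal{D}=\langle X_1,\ldots,X_r\rangle$ has constant rank $r$, and it is integrable since every $[X_i,X_j]$ already lies in $\mathcal{D}$. Moreover, for each fixed $i\in\{1,\ldots,r\}$, the corank-one subdistribution $\mathcal{D}_i=\langle X_1,\ldots,\widehat{X}_i,\ldots,X_r\rangle$ has rank $r-1$ and is also involutive: for $j,k\neq i$ the bracket $[X_j,X_k]$ is a combination of $X_j$ and $X_k$ alone, hence stays inside $\mathcal{D}_i$. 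Fr\"obenius then furnishes, locally around any chosen point, $n-r$ functionally independent common first integrals $x^{r+1},\ldots,x^n$ of $\mathcal{D}$, and for each $i$ a family of $n-r+1$ independent first integrals of $\mathcal{D}_i$. Since $\dim \mathcal{D}_i^{\circ}=\dim\mathcal{D}^{\circ}+1$, I can select one first integral $x^i$ of $\mathcal{D}_i$ whose differential lies in $\mathcal{D}_i^{\circ}\setminus\mathcal{D}^{\circ}$; equivalently $X_j(x^i)=0$ for every $j\neq i$ while $X_i(x^i)\neq 0$.

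It remains to check that $\{x^1,\ldots,x^n\}$ is a coordinate system and to read off the shape of $X_i$ in these coordinates. Independence of the differentials follows by pairing a hypothetical relation $\sum_{i\leq r}a_i\,dx^i+\sum_{k>r}b_k\,dx^k=0$ with each $X_j$: only the term $dx^j(X_j)\neq 0$ survives, forcing $a_j=0$, after which independence of $dx^{r+1},\ldots,dx^n$ yields $b_k=0$. Expanding $X_i$ in this coordinate basis, every component except the $i$-th vanishes (either because $X_i\in\mathcal{D}_j$ when $j\leq r$ and $j\neq i$, or because $X_i\in\mathcal{D}$ when $j>r$), so $X_i=X_i(x^i)\,\partial_{x^i}$ with $X_i(x^i)$ nowhere zero, and the integral curves of $X_i$ are the prescribed coordinate lines.

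The main obstacle I anticipate is not the logical skeleton but ensuring that the choices of the $x^i$ can be made compatibly so that the resulting $n$ functions are genuinely functionally independent; the codimension comparison $\mathcal{D}^{\circ}\subsetneq\mathcal{D}_i^{\circ}$ is what rules out accidental dependencies. A secondary point is the passage from the local statement supplied by Fr\"obenius to a coordinate system on all of $N$; in line with the paper's standing smooth/local conventions, I would state the conclusion as the existence of such coordinates around each point of $N$, which is all that the subsequent applications require.
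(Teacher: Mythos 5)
Your proposal is correct and follows essentially the same route as the paper: the forward implication is the same coordinate computation, and the converse applies Fr\"obenius to the full distribution $\mathcal{D}$ and to each corank-one subdistribution $\mathcal{D}_i=\langle X_1,\ldots,\widehat{X}_i,\ldots,X_r\rangle$ to produce the first integrals $x^{r+1},\ldots,x^n$ and $x^1,\ldots,x^r$ with $X_jx^i=0$ for $j\neq i$ and $X_ix^i\neq 0$. The only cosmetic difference is that you verify functional independence of $\{x^1,\ldots,x^n\}$ by pairing a putative linear relation among the differentials with the $X_j$, whereas the paper contracts the $X_j$ into $dx^1\wedge\ldots\wedge dx^r$ to exhibit a volume form; the two checks are equivalent.
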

\begin{proof} If the coordinate system $\{x^1,\ldots,x^n\}$ exists, $X_ix^j=0$ for $i\neq j$, $i=1,\ldots,r$ and $j=1,\ldots,n$. Hence,
\begin{equation}\label{eq:AlmRecVec}
    X_i=
g^i(x^1,\ldots,x^n)\frac{\partial}{\partial x^i},\qquad i=1,\ldots,r,
\end{equation}
for some functions $g^1,\ldots,g^r:N\rightarrow \mathbb{R}$. Then, the relations (\ref{eq:CommRel}) follow. 

Conversely, if the relations (\ref{eq:CommRel}) hold, then the distribution spanned by $X_1,\ldots,X_r$ has rank $r$ by assumption and it admits $n-r$ first integrals $x^{r+1},\ldots,x^n$ that are common for $X_1,\ldots,X_r$ and  functionally independent, i.e. $dx^{r+1}\wedge \ldots\wedge dx^n$ does not vanish at any point of $N$. Moreover, each distribution 
$$
\mathcal{D}^{(i)}_x=\langle X_1(x),\ldots,\hat{X}_{i}(x),\ldots,X_r(x)\rangle,\qquad x\in N,\qquad i=1,\ldots,r,
$$  
is integrable and has rank $r-1$. Since $N$ is an $n$-dimensional manifold, the vector fields $X_1,\ldots,\hat{X}_{i},\ldots,X_r$  admit a common non-constant first integral $x^i$, i.e. $X_jx^i=0$ for $j=1,\ldots,\hat{i},\ldots,r$, such that $\Upsilon^i=dx^i\wedge dx^{r+1}\wedge\ldots\wedge dx^n$ is not vanishing. Note that $\mathcal{D}^{(i)}$ is the distribution spanned by the vector fields $X$ on $N$ taking values in the kernel of $\Upsilon^i$, namely $\iota_X\Upsilon^i=0$, where $\iota_X\Upsilon^i$ stands for the contraction of the vector field $X$ with the differential one-form $\Upsilon^i$. Moreover, $\iota_{X_i}\Upsilon^i=(X_ix^i)dx^{r+1}\wedge \ldots \wedge dx^n\neq 0$ and $X_ix^i\neq 0$. Since the contractions of the vector fields $X_1,\ldots,X_r$ with $dx^1\wedge\ldots\wedge dx^r$ satisfy
$$
\iota_{X_r}\ldots\iota_{\hat{X}_{i}}\ldots\iota_{X_1}dx^1\wedge\ldots\wedge dx^r=\left[(-1)^{r-i}\prod_{1\leq j\neq i\leq r}(X_jx^j)\right]dx^i\neq 0,\qquad i=1,\ldots,r,
$$
it follows that $dx^1\wedge \ldots\wedge dx^n$ is a volume form and the coordinates $\{x^1,\ldots,x^n\}$ form a local coordinate system on $N$.  On the coordinate system $\{x^1,\ldots,x^n\}$, the vector fields $X_1,\ldots,X_r$ take the form (\ref{eq:AlmRecVec}) and the converse part of our theorem follows.
\end{proof}

Theorem \ref{Th::RecBasis} and the results in this section motivate the following definition.
\begin{definition} A family of vector fields $X_1,\ldots,X_r$ on $N$ is said to be {\it quasi-rectifiable} if there exists a coordinate system $\{x^1,\ldots,x^n\}$ on $N$ such that
\begin{equation}\label{eq:ConElaCoor}
X_i=g^i(x^1,\ldots,x^n)\frac{\partial}{\partial x^i},\qquad i=1,\ldots,r,\qquad \prod_{i=1}^rg^i(x^1,\ldots,x^n)\neq 0,
\end{equation}
for some functions $g^1,\ldots,g^r:N\rightarrow \mathbb{R}$. 
Otherwise, the family $X_1,\ldots,X_r$ is called {\it non quasi-rectifiable}. The coordinate expression (\ref{eq:ConElaCoor}) is called a {\it quasi-rectifiable} form for $X_1,\ldots,X_r$.
	
\end{definition}

For hydrodynamic-type systems, the terms {\it elastic} and {\it inelastic} are used instead of quasi-rectifiable and non quasi-rectifiable, respectively, due to the presence of nonlinear superpositions of Riemann waves \cite{Pe85}. Indeed, the terms elastic and inelastic were used, without a precise definition, in the literature 
(see for instance \cite[pg. 349]{Pe74}). Theorem \ref{Th::RecBasis} is ``optimal'' in the sense that  if the commutator of  two vector fields of the family $X_1,\ldots,X_r$ is not spanned by two such vector fields or $X_1\wedge \ldots\wedge X_r$ vanishes at a point, then the existence of the coordinates $\{x^1,\ldots,x^n\}$ is not ensured. Let us illustrate this fact with several examples.

Consider the Heisenberg matrix Lie group 
$$
\mathbb{H}_3=\left\{\left(\begin{array}{ccc}1&x&y\\0&1&z\\0&0&1\end{array}\right):x,y,z\in \mathbb{R}\right\}
$$
with coordinates $\{x,y,z\}$ and the vector fields on $\mathbb{H}_3$ given by
$$
X_1=\frac{\partial}{\partial x},\qquad X_2=\frac{\partial}{\partial y}+x\frac{\partial}{\partial z},\qquad X_3=\frac{\partial}{\partial z}.
$$
Then, $X_1\wedge X_2\wedge X_3$ does not vanish on $\mathbb{H}_3$ and 
$$
[X_1,X_2]=X_3,\qquad [X_1,X_3]=0,\qquad [X_2,X_3]=0.
$$
Let us prove that $X_1,X_2,X_3$ are not quasi-rectifiable by contradiction. Assume that there exists a coordinate system $\{x^1,x^2,x^3\}$ as in Theorem \ref{Th::RecBasis}. Then, $x^3$ must be a common first integral of $X_1,X_2$. Since $[X_1,X_2]=X_3$, it follows that
$$
X_1x^3=X_2x^3=0\Rightarrow 0=X_1X_2x^3-X_2X_1x^3=X_3x^3=0.
$$
Hence, $x^3$ is a first integral of $X_1,X_2,X_3$, and it becomes a constant because $X_1,X_2,X_3$ span $T_{(x,y,z)}\mathbb{H}^3$ for every $(x,y,z)\in \mathbb{R}^3$.  This is a contradiction and the vector fields $X_1,X_2,X_3$ are non quasi-rectifiable.

Let us study a second example. Consider the linear coordinates $\{x,y,z\}$ on $\mathbb{R}^3$ and the family of vector fields on $\mathbb{R}^3$ given by
$$
X_1=y\frac{\partial}{\partial x}-x\frac{\partial}{\partial y},\qquad X_2=z\frac{\partial}{\partial y}-y\frac{\partial}{\partial z},\qquad X_3=x\frac{\partial}{\partial z}-z\frac{\partial}{\partial x},
$$
namely the infinitesimal generators of the clock-wise rotations in $\mathbb{R}^3$ around the $Z$, $X$, and $Y$ axes, respectively. It is immediate that $X_1\wedge X_2\wedge X_3=0$. Indeed, $X_1,X_2,X_3$ admit a common first integral $x^2+y^2+z^2$. Hence, $X_1,X_2,X_3$ are not quasi-rectifiable since the quasi-rectifiable form (\ref{eq:ConElaCoor}) implies that the elements of a family of quasi-rectifiable vector fields are linearly independent at each point of the manifold.

There is another way to understand Theorem \ref{Th::RecBasis}. Consider that $X_1\wedge\ldots\wedge X_r$  does not vanish at any point. The existence of a coordinate system $\{x^1,\ldots,x^n\}$ satisfying the given conditions (\ref{eq:ConElaCoor}) implies that there exist non-vanishing functions $g^1,\ldots,g^r$ ensuring that $X_1/g_1,\ldots,X_r/g_r$ commute between themselves. Conversely, if the vector fields $X_1/g_1,\ldots,X_r/g_r$ commute between themselves, then there exist coordinates $\{x^1,\ldots,x^n\}$ such that the previous vector fields can be simultaneously rectified
$$
X_i/g_i=\frac{\partial}{\partial x^i},\qquad i=1,\ldots,r,
$$
which shows that $\{x^1,\ldots,x^n\}$ satisfies the required conditions. Hence, this proves  the following theorem, which was demonstrated in \cite{GL23} in another manner.
\begin{theorem}\label{Th::InLiterature}
Given a family of vector fields $X_1,\ldots,X_r$ defined on   $N$ such that $X_1\wedge\ldots\wedge X_r$ does not vanish, there exist non-vanishing functions $h_1,\ldots,h_r$ such that the $h_1X_1,\ldots,h_rX_r$ commute between themselves, i.e.
$$
[h_iX_i,h_jX_j]=0,\qquad 1\leq i<j\leq r,
$$
if and only if the conditions (\ref{eq:CommRel}) hold for a family of $r(r-1)$ functions $f^i_{ij},f^j_{ij}$ on $N$ with  $1\leq i<j\leq r$.
\end{theorem}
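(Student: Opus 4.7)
The plan is to deduce Theorem \ref{Th::InLiterature} directly from Theorem \ref{Th::RecBasis}, which has already established the coordinate characterisation of families satisfying (\ref{eq:CommRel}). The two implications are essentially independent and I would treat them separately; the converse is where Theorem \ref{Th::RecBasis} does the heavy lifting, while the forward direction reduces to a single Lie-bracket identity.

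For the forward direction, I would assume that there exist non-vanishing $h_1,\ldots,h_r\in C^\infty(N)$ such that $[h_iX_i,h_jX_j]=0$ for $1\leq i<j\leq r$, and simply expand the bracket using the standard identity $[fX,gY]=fg[X,Y]+f(Xg)Y-g(Yf)X$. This yields
\begin{equation*}
h_ih_j[X_i,X_j]=h_j(X_jh_i)X_i-h_i(X_ih_j)X_j,
\end{equation*}
and after dividing by the non-vanishing product $h_ih_j$ one reads off (\ref{eq:CommRel}) with
\begin{equation*}
f^i_{ij}=\frac{X_jh_i}{h_i},\qquad f^j_{ij}=-\frac{X_ih_j}{h_j}.
\end{equation*}

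For the converse, I would invoke Theorem \ref{Th::RecBasis}: the hypotheses (\ref{eq:CommRel}) together with $X_1\wedge\ldots\wedge X_r\neq 0$ produce a coordinate system $\{x^1,\ldots,x^n\}$ on $N$ in which $X_i=g^i\,\partial/\partial x^i$ with each $g^i$ non-vanishing, as in (\ref{eq:ConElaCoor}). Setting $h_i:=1/g^i$ for $i=1,\ldots,r$ gives non-vanishing functions such that $h_iX_i=\partial/\partial x^i$, so the rescaled vector fields are coordinate vector fields and therefore commute pairwise.

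I do not foresee any serious obstacle: both directions are short, and the only non-trivial ingredient, the existence of the rectifying coordinate system, has already been established in Theorem \ref{Th::RecBasis}. The only point worth double-checking while writing the proof is that the rescaling construction is genuinely local to a neighbourhood of each point and globalises under the standing smoothness and global-definition conventions adopted at the beginning of Section \ref{Sec:Elastic}; under those conventions the functions $h_i=1/g^i$ are globally defined and non-vanishing, so no gluing argument is needed.
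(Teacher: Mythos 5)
Your proposal is correct and follows essentially the same route as the paper: the paper also deduces this theorem from Theorem \ref{Th::RecBasis}, observing that the quasi-rectifiable form (\ref{eq:ConElaCoor}) is equivalent to the existence of non-vanishing rescalings $h_i=1/g^i$ making the vector fields commute. The only (harmless) difference is in the easy direction, where you expand $[h_iX_i,h_jX_j]$ directly to read off $f^i_{ij}=X_jh_i/h_i$ and $f^j_{ij}=-X_ih_j/h_j$, whereas the paper first simultaneously rectifies the commuting rescaled fields and then applies the direct part of Theorem \ref{Th::RecBasis}; both are valid.
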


Theorem \ref{Th::InLiterature} appears in the theory of hydrodynamic-type equations, which initially motivated the present work. Some applications of the results of this section will be discussed in Section \ref{Se:Hydro}. It is remarkable that the proof of Theorem \ref{Th::InLiterature} follows by induction. The result is immediate for $r=2$. Next, one assumes that the result is satisfied for $r-1$ vector fields and   that $X_1,\ldots,X_{r-1}$ have already been rescaled to commute among themselves. Note that if $X_1,\ldots,X_r$ satisfy the conditions (\ref{eq:CommRel}), the rescaling of $X_1,\ldots,X_{r-1}$ to make them commute between themselves gives rise to new vector fields $Y_1,\ldots,Y_{r-1}$ that commute between themselves and span the same distribution as $X_1,\ldots,X_{r-1}$. Then, $Y_1,\ldots,Y_{r-1},X_r$ satisfy (\ref{eq:CommRel}) relative to new functions $f_{ij}^{'i},f_{ij}^{'j}$, with $1\leq i< j\leq r$. Next, one multiplies $X_r$ by a function $h_r$ so that $h_rX_r$ commutes with $f_1Y_1,
\ldots,f_{r-1}Y_{r-1}$, where the functions $f_1,\ldots, f_{r-1}$ are chosen so that they are first integrals for the vector fields taking values in the distribution spanned by $Y_1,\ldots,Y_{r-1}$. Note that one multiplies $X_r$ by a non-vanishing function so that the flow of the vector field $h_rX_r$ leaves the distribution spanned by $X_1,\ldots,X_{r-1}$ invariant. At the end, one finds that the original vector fields $X_1,\ldots,X_r$ must be multiplied by non-vanishing functions $h_1,\ldots,h_r$ so as to make them commute.

It is interesting to remark that one can define a type of Lie algebra of vector fields admitting a basis that can be written in quasi-rectifiable form. The practical relevance of these Lie algebras of vector fields will be justified in Section \ref{Se:Hydro}, and it involves, for instance, the study of linear systems of PDEs and the Riemann invariants method for hydrodynamic-type equations. 

\begin{definition} A Lie algebra $V$ of vector fields on a manifold $N$ is \textit{quasi-rectifiable} if it admits a basis $
\{X_1,\ldots,X_r\}$ such that $X_1\wedge\ldots\wedge X_r$ does not vanish on $N$ and the Lie bracket of any pair $X_i,X_j$ is a linear combination of $X_i$ and $X_j$ with constant coefficients, i.e. $[X_i,X_j]\wedge X_i\wedge X_j=0$ for $1\leq i<j\leq r$.
\end{definition}

Since the vector fields $X_1,\ldots,X_r$ giving a basis of the Lie algebra of vector fields $V$ in the above definition are assumed to be such that $X_1\wedge \ldots\wedge X_r$ does not vanish at any point, it follows that if $[X_i,X_j]=f_{ij}^iX_i+f_{ij}^jX_j$ for certain functions $f_{ij}^i,f_{ij}^j\in C^\infty(N)$, the decomposition $[X_i,X_j]=\sum_{k=1}^r f_{ij}^kX_k$ is unique and the functions $f_{ij}^k$ are constants because $X_1,\ldots,X_r$ span a Lie algebra of vector fields. Moreover, it may happen that  a basis $\{X_1,\ldots,X_r\}$ of $V$ is quasi-rectifiable and another basis of $V$ is not. It is also worth noting that, in view of Theorem \ref{Th::RecBasis}, a Lie algebra of vector fields is quasi-rectifiable if and only if it admits a basis that can be written in the form (\ref{eq:ConElaCoor}).

Consider the matrix Lie group $SL_2(\mathbb{R})$ of $2\times 2$ real matrices with determinant one
\begin{equation}\label{Eq:SL2}
SL_2(\mathbb{R})=\left\{\left(\begin{array}{cc}a&b\\c&d\end{array}\right):ad-bc=1,a,b,c,d\in \mathbb{R}\right\}.
\end{equation}
Here, $\{a,b,c\}$ forms a local coordinate system of $SL_2(\mathbb{R})$ close to its neutral element. Thus, a basis of the space of left-invariant vector fields on $SL_2(\mathbb{R})$ may be chosen to be		$$
		X^L_1 = a\frac{\partial}{\partial a} - b \frac{\partial}{\partial b} + c \frac{\partial}{\partial c}, \quad X^L_2 = a \frac{\partial}{\partial b}, \quad X^L_3 = b \frac{\partial}{\partial a} + \left(\frac{1+bc}{a}\right) \frac{\partial}{\partial c}.
		$$
		These vector fields on $SL_2(\mathbb{R})$ satisfy the commutation relations for a basis of the matrix Lie algebra of traceless $2\times 2$ matrices and span the Lie algebra $\mathfrak{sl}_2(\mathbb{R})$ of $SL_2(\mathbb{R})$, namely
		\begin{equation}\label{ConRel}
		[X^L_1,X^L_2]=2X^L_2,\qquad [X^L_1,X^L_3]=-2X^L_3,\qquad [X^L_2,X^L_3]=X^L_1.
		\end{equation}
  Note that $X^L_1\wedge X^L_2\wedge X^L_3$ does not vanish at any point in $SL_2(\mathbb{R})$.	Due to (\ref{ConRel}), the basis $\{X^L_1,X^L_2,X^L_3\}$ is not in quasi-rectifiable form. However, let us choose a new basis of $\mathfrak{sl}_2(\mathbb{R})$ given by
  \begin{equation}\label{Eq:basis_sl2}
Y_1^L= \frac{ X_1^L}{2},\qquad Y_2^L=\frac{1}{\sqrt{2}}X_2^L+\frac{1}{4}X^L_1,\qquad Y_3^L=\frac{1}{\sqrt{2}}X_3^L-\frac{1}{2}X_1^L.
  \end{equation}
  Indeed,
  \begin{equation}\label{Eq:ComRel}
  [Y_1^L,Y_2^L]=Y_2^L-\frac 1 2Y_1^L,\qquad [Y_1^L,Y_3^L]=-Y_3^L-Y_1^L,\qquad [Y_2^L,Y_3^L]=Y_2^L-\frac 1 2Y^L_3.
  \end{equation}Then, $Y_1^L,Y_2^L,Y_3^L$ satisfy that $Y_1^L\wedge Y_2^L\wedge Y_3^L$ does not vanish and the conditions (\ref{eq:CommRel}) hold. Hence, $\langle Y_1^L,Y_2^L,Y_3^L\rangle$ is a quasi-rectifiable Lie algebra of vector fields and the basis (\ref{Eq:basis_sl2}) is in quasi-rectifiable form. Indeed, $Y_1^L,Y_2^L,Y_3^L$ become a quasi-rectifiable family of vector fields. One may wonder how we obtained $Y_1^L,Y^L_2,Y^L_3$. The answer will be  given in Theorem \ref{Th::RectCri}. In particular, the basis will be derived by obtaining three particular solutions of the algebraic equation (\ref{Eq:ConRec}) for $\mathfrak{sl}_2(\mathbb{R})$, which are straightforward to obtain.
\section{Methods for constructing quasi-rectifiable families of vector fields}\label{Sec:Direct}

In the previous section, we developed a formalism to study families of quasi-rectifiable vector fields and quasi-rectifiable Lie algebras of vector fields. Nevertheless, the given approach was mainly theoretical and the application of these notions and results to practical cases requires us to put a quasi-rectifiable family of vector fields into a quasi-rectifiable form. The aim of this section is to develop practical methods to accomplish this result and to solve other related problems.  

Let us illustrate how to apply Theorem \ref{Th::RecBasis} to the particular case of the basis (\ref{Eq:basis_sl2}) of left-invariant vector fields on $SL_2(\mathbb{R})$. In the coordinates $\{a,b,c\}$ of $SL_2(\mathbb{R})$ appearing in \eqref{Eq:SL2}, the vector fields $Y^L_1,Y_2^L,Y^L_3$ are
\begin{equation}\label{Eq:CoorSL2}
\begin{gathered}
Y^L_1=\frac{a}{2}\frac{\partial}{\partial a}-\frac{b}{2}\frac{\partial}{\partial b}+\frac{c}{2}\frac{\partial}{\partial c},\qquad Y^L_2=\frac{a}{4}\frac{\partial}{\partial a}+\left(\frac{a}{\sqrt{2}}-\frac{b}{4}\right)\frac{\partial}{\partial b}+\frac{c}{4}\frac{\partial}{\partial c},\\
Y^L_3=\left(\frac{b}{\sqrt{2}}-\frac{a}{2}\right)\frac{\partial}{\partial a}+\frac{b}{2}\frac{\partial}{\partial b}+\left[\frac{1}{\sqrt{2}}\left(\frac{1+bc}{a}\right)-\frac{c}{2}\right]\frac{\partial}{\partial c}.
\end{gathered}
\end{equation}
From (\ref{Eq:ComRel}) and using the method of characteristics \cite{Sn06}, one finds a common first integral $x^3$ for the vector fields $Y^L_1,Y^L_2$, a common first integral $x^2$ for the vector fields $Y^L_1,Y^L_3$, and a common first integral, $x^1$, for the vector fields $Y^L_2,Y^L_3$. Such first integrals are, for instance,
\begin{equation*}
x^1=\frac{c}{\sqrt{2}a}-\frac{1}{2a^2-\sqrt{2} ab},\qquad x^2=\frac{1+bc}{ab},\qquad x^3=\frac{c}{a}.
\end{equation*}
Using the coordinates $\{x^1,x^2,x^3\}$, the vector fields $Y^L_1,Y^L_2,Y^L_3$ can be brought into the form
$$
\begin{gathered}
Y^L_1=\frac{2}{(-2a+\sqrt{2}b)^2}\frac{\partial}{\partial x^1},\qquad Y^L_2=-\frac{1}{\sqrt{2}b^2}\frac{\partial}{\partial x^2},\qquad 
Y^L_3=\frac{1}{\sqrt{2}a^2}\frac{\partial}{\partial x^3},
\end{gathered}
$$
where the coefficient functions of the previous vector fields have been expressed in terms of the coordinate functions $a,b,c$ in order to simplify the obtained expressions.
Hence, the multiplication of $Y^L_1,Y^L_2,Y^L_3$ by the functions 
\begin{equation}\label{Eq:RecFroFun}
h_1=\frac{(-2a+\sqrt{2} b)^2}{2},\qquad h_2=-\sqrt{2}b^2,\qquad h_3=\sqrt{2}a^2 
\end{equation}
give,respectively, new vector fields proportional to $Y^L_1,Y^L_2,Y^L_3$ which commute between themselves. 

The previous method for the determination of the coordinates $\{x^1,\ldots,x^n\}$ requires the calculation of first integrals for families of vector fields using the method of characteristics. In fact, this can be seen in the proof of Theorem \ref{Th::RecBasis}. In order to obtain some common first integrals of $X_1,\ldots,\hat{X}_{i},\ldots, X_r$, one uses a maximal set of functionally independent first integrals for $X_1$ obtained by the method of characteristics. Then, one writes the remaining vector fields in terms of a coordinate system consisting of these first integrals and some additional variables. Assuming that the action of $X_2$ on the coordinates that correspond to first integrals of $X_1$ vanishes, the procedure can be applied successively. 

It is worth stressing that the derivation of common first integrals for families of vector fields also appears in the study of nonlinear superposition rules for systems of first-order ordinary differential equations (ODEs) \cite{CL11} and in the determination of Darboux coordinates for geometric structures \cite{GLRR23}. Let us now give a generalisation of a method for obtaining such constants of motion. 

Let us denote the first-jet manifold, $J^1(N,N\times \mathbb{R})$, of sections relative to the projection $\pi:(x,t)\in N\times \mathbb{R}\mapsto x\in N$ simply as $J^1\pi$. Then, $J^1\pi$ is endowed with a canonical contact structure, namely a maximally non-integrable distribution of co-rank one, given by the Cartan distribution of $J^1\pi$ (see \cite{Ar90}).  In coordinates adapted to $J^1\pi$, say $\{x^1,\ldots,x^n,z,p_1,\ldots,p_n\}$, the Cartan distribution is given locally  by the vector fields taking values in the kernel of the one-form $\eta=dz-\sum_{i=1}^np_idx^i$, i.e.
$$
\mathcal{C
}=\left\langle \frac{\partial}{\partial x^1}+p_1\frac{\partial}{\partial z},\ldots,\frac{\partial}{\partial x^n}+p_n\frac{\partial}{\partial z},\frac{\partial}{\partial p_1},\ldots,\frac{\partial}{\partial p_n}\right\rangle.
$$
In particular, we are interested in finding contact geometry methods allowing us to obtain a non-constant solution of the PDE system
\begin{equation}\label{eq:DifEq}
X_{(i)}f=g_i,\qquad i=1,\ldots,r,
\end{equation}
for a family of vector fields $X_{(1)},\ldots,X_{(r)}$ on $N$ spanning a distribution $\mathcal{D}$ of rank $r$ and some functions $g_1,\ldots,g_r$ depending on $N$ and possibly on $f$. In particular, if $g_1,\ldots,g_r=0$,  it is known that a non-constant $f$ exists if and only if the smallest integrable distribution $\mathcal{D}'$ containing $\mathcal{D}$ 
has rank $r'<\dim N$. 

 In the adapted coordinates $\{x^1,\ldots,x^n,z,p_1,\ldots,p_n\}$ of $J^1\pi$, the system of PDEs (\ref{eq:DifEq}) with $g_1=,\ldots,g_r=0$, can be rewritten as follows
$$
f_j\left(x^1,\ldots,x^n,f(x^1,\ldots,x^n),\frac{\partial f}{\partial x^1}(x^1,\ldots,x^n),\ldots,\frac{\partial f}{\partial x^n}(x^1,\ldots,x^n)\right)=0,\qquad j=1,\ldots,r
$$
for\begin{equation*}
f_j(x^1,\ldots,x^n,z,p_1,\ldots,p_n)=\sum_{i=1}^nX_{(j)}^i(x^1,\ldots,x^n)p_i,\qquad j=1,\ldots,r,
\end{equation*}
where 
$$
X_{(j)}=\sum_{i=1}^nX_{(j)}^i(x^1,\ldots,x^n)\frac{\partial}{\partial x^i},\qquad j=1,\ldots,r.
$$
The fact that $X_1\wedge\ldots\wedge X_r$ does not vanish in a neighbourhood of $x=(x^1,\ldots,x^n)\in N$ implies that
$$
\frac{\partial (f_1,\ldots,f_r)}{\partial (p_{i_1},\ldots,p_{i_r})}\neq0,
$$
for certain $i_1,\ldots,i_r\subset \{1,\ldots,n\}$, and conversely.
The expressions $f_1=\ldots=f_r=0$ can be solved implicitly for the $p_i=p_i(x^1,\ldots,x^n,\mu_1,\ldots,\mu_k)$ in terms of some functions $\mu_i=\mu_i(x^1,\ldots,x^n)$ with $i=1,\ldots,k$. Then, recall that 
$$
df=\sum_{i=1}^n\frac{\partial f}{\partial x^i}dx^i,
$$
where one can write that 
$$
\frac{\partial f}{\partial x^i}=p_i(x^1,\ldots,x^n,\mu_1,\ldots,\mu_k),\qquad i=1,\ldots,n.
$$
In order to construct a solution, one has to ensure that the $\mu_i=\mu_i(x^1,\ldots,x^n)$ are chosen such that
$d^2f=0$,
which gives us a system of partial differential equations on $\mu^1,\ldots,\mu^k$. This equation is, in general, simpler to solve using the above procedure than with the standard method, namely by using the method of characteristics successively \cite{CL11}. 

Let us apply the above method to the particular example given by the quasi-rectifiable family of vector fields (\ref{Eq:CoorSL2}) in $SL_2$. In particular, consider the vector fields
$$
Y_1^L=\frac a2\frac{\partial}{\partial a}-\frac b2 \frac{\partial }{\partial b}+\frac c2 \frac{\partial}{\partial c},\qquad Y_2^L=\frac a4\frac{\partial}{\partial a}+\left(\frac{a}{\sqrt{2}}-\frac b4\right)\frac{\partial}{\partial b}+\frac c4\frac{\partial}{\partial c}.
$$
The system of PDEs of the form $Y_1^Lf=Y_2^Lf=0$ is related to the algebraic system in the variables $p_a,p_b,p_c$ of $J^1(SL_2,SL_2\times \mathbb{R})$  given by
\begin{equation}\label{eq:system}
f_1=\frac a2p_a-\frac b2p_b+\frac c2p_c=0,\qquad f_2=\frac a4p_a+\left(\frac a{\sqrt{2}}-\frac b4\right)p_b+\frac c4p_c=0.
\end{equation}
For fixed values of $a,b,c$, it follows that $p_a,p_b,p_c$ can be written as functions $$p_a=p_a(a,b,c,\mu),\quad p_b=p_b(a,b,c,\mu),\quad p_c=p_c(a,b,c,\mu)$$ 
that depend on $a,b,c$ and a parameter $\mu$. A simple calculation shows that, for fixed $a,b,c$, all possible solutions of (\ref{eq:system}) can be written as
$$
(p_a,p_b,p_c)=\mu(-ac,0,a^2),\qquad \mu\in \mathbb{R}.
$$
It is worth noting that our coordinate system is defined on an open neighbourhood of $a=1,b=0, c=0$. To obtain a solution of $Y_1^Lf=Y_2^Lf=0$, recall that 
$$
df=\mu(a,b,c)(-ac da+a^2dc)
$$
and $d^2f=0$ for $f$ as a function of $a,b,c$. Hence, one can look for a particular parametrisation $\mu=\mu(a,b,c)$ for which $d^2f=0$. In particular, one has 
$$
d^2f=\left(\frac{\partial \mu}{\partial a}da+\frac{\partial \mu}{\partial b}db+\frac{\partial \mu}{\partial c}dc\right)\wedge (-acda+a^2dc)+3\mu a da
\wedge dc.
$$
In other words,
$$
d^2f=\frac{\partial \mu}{\partial b}(ac da\wedge db+a^2db\wedge dc)+\mu\left(a^2\frac{\partial \ln\mu}{\partial a}+ac \frac{\partial\ln \mu}{\partial c}+3a\right)da\wedge dc,
$$
which implies that $\mu=\mu(a,c)$. Then, a simple solution is, for instance, $\mu=1/a^3$. Then,
$$
df=-\frac{c}{a^2}da+\frac 1adc\,\,\Longrightarrow \,\,
f=c/a
$$
is a solution of our PDE system $Y_1^Lf=Y_2^Lf=0$.


As above, the same method can be applied to the vector fields (\ref{Eq:CoorSL2}), namely
$$
Y_1^L=\frac a2\frac{\partial}{\partial a}-\frac b2 \frac{\partial }{\partial b}+\frac c2 \frac{\partial}{\partial c},\qquad Y_3^L=\left(\frac b{\sqrt{2}}-\frac a2\right)\frac{\partial}{\partial a}+\frac b2 \frac{\partial }{\partial b}+\left[\frac 1{\sqrt{2}}\left(\frac{1+bc}{ab}\right)-\frac c2\right] \frac{\partial}{\partial c}
$$
or $Y_2^L,Y_3^L$.

Note that Theorem \ref{Th::RecBasis}, which has applications to the study of hydrodynamic equations \cite{Gr23}, requires the use of the Fr\"obenius theorem and the method of characteristics so as to obtain the functions $x^1,\ldots,x^n$ and then the functions $f_1,\ldots, f_r$, which are of interest to us. It is worth stressing that the integrability conditions (\ref{eq:CommRel}) ensure the existence of $f_1,\ldots,f_r$. Next, the following theorem provides an easy manner for obtaining the $f_1,\ldots,f_r$ needed to rectify the vector fields straightforwardly.

\begin{theorem}\label{Th:FroRecBett} Let $X_1,\ldots ,X_r$ be a quasi-rectifiable family of vector fields on $N$ and let $\mathcal{D}$ be the
distribution spanned by $X_1,\ldots,X_r$. Let $\eta_1,\ldots,\eta_r$ be  dual one-forms on $N$, i.e. $\eta_i(X_j)=\delta_i^j$ for $i,j=1,\ldots,r$. The nonvanishing functions $f_1,\ldots,f_r\in C^\infty(N)$ are such that  $X_1/f_1,\ldots,X_r/f_r$ commute among themselves if and only if 
\begin{equation}\label{Eq:MethodSol}
d(f_i\eta_i)|_\mathcal{D}=0,\qquad i=1,\ldots,r.
\end{equation}

\end{theorem}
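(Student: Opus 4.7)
The plan is to work with the rescaled frame $Y_i := X_i/f_i$ of $\mathcal{D}$ and exploit Cartan's formula for the exterior derivative. The initial observation is that the duality $\eta_i(X_j)=\delta_i^j$ rescales as
\[
(f_i\eta_i)(Y_j) \;=\; f_i\,\eta_i(X_j/f_j) \;=\; \frac{f_i}{f_j}\delta_i^j \;=\; \delta_i^j,
\]
so each function $(f_i\eta_i)(Y_j)$ is identically constant on $N$. This is exactly what will make Cartan's formula collapse to a single informative term.

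First I would use that $\mathcal{D}$ is integrable: by the quasi-rectifiable commutation relations (\ref{eq:CommRel}), the family $X_1,\ldots,X_r$ is involutive, and rescaling preserves the distribution they span, so $[Y_j,Y_k]\in \mathcal{D}$ for every pair $j,k\in\{1,\ldots,r\}$. Hence I can write $[Y_j,Y_k]=\sum_{l=1}^r c^l_{jk}Y_l$ for some functions $c^l_{jk}\in C^\infty(N)$; this is the only place where the quasi-rectifiable hypothesis enters.

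The heart of the argument is the identity obtained by applying Cartan's formula to $f_i\eta_i$ and the frame $\{Y_1,\ldots,Y_r\}$ of $\mathcal{D}$:
\[
d(f_i\eta_i)(Y_j,Y_k) \;=\; Y_j\!\bigl((f_i\eta_i)(Y_k)\bigr) - Y_k\!\bigl((f_i\eta_i)(Y_j)\bigr) - (f_i\eta_i)([Y_j,Y_k]).
\]
The first two terms vanish because $(f_i\eta_i)(Y_\ell)=\delta_i^\ell$ is constant, while the last term equals $\sum_l c^l_{jk}\,\delta_i^l = c^i_{jk}$ by duality. Therefore $d(f_i\eta_i)(Y_j,Y_k)=-c^i_{jk}$.

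Both directions of the equivalence read off this one identity, because $\{Y_1,\ldots,Y_r\}$ is a frame for $\mathcal{D}$: the condition $d(f_i\eta_i)|_{\mathcal{D}}=0$ for every $i$ is equivalent to $c^i_{jk}=0$ for all $i,j,k$, which in turn is equivalent to $[Y_j,Y_k]=[X_j/f_j,X_k/f_k]=0$ for all $j,k$. I do not anticipate any serious obstacle; the only subtlety is that the dual one-forms $\eta_i$ are not uniquely determined off $\mathcal{D}$ (they may be modified by any one-form annihilating $\mathcal{D}$), but the computation above shows that both sides of the claimed equivalence are insensitive to this freedom, since only evaluations of $f_i\eta_i$ on vectors of $\mathcal{D}$ ever appear.
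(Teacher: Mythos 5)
Your proposal is correct and follows essentially the same route as the paper: both proofs hinge on Cartan's formula applied to $f_i\eta_i$ evaluated on the rescaled frame $Y_j=X_j/f_j$, with the first two terms vanishing because $(f_i\eta_i)(Y_k)=\delta_i^k$ is constant, and the integrability of $\mathcal{D}$ guaranteeing that $[Y_j,Y_k]$ stays in $\mathcal{D}$ so that its annihilation by all the $f_i\eta_i$ forces it to vanish. Your presentation merely unifies the two directions into the single identity $d(f_i\eta_i)(Y_j,Y_k)=-c^i_{jk}$, which is a slightly tidier packaging of the same argument.
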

\begin{proof} Let us prove the converse. Assume that $f_1,\ldots,f_r$ are such that $d(f_i
\eta_i)|_{\mathcal{D}}=0$ for $i=1,\ldots,r$. Define $Y_i={X_i}/{f_i}$ with $i=1,\ldots r$, which are vector fields dual to $f_i\eta_i$ for $i=1,\ldots,r$, namely $f_i\eta_i(X_j/f_j)=\delta_i^j$ for $i,j=1,\ldots,r$. Then, the differential of $f_i\eta_i$ vanishes on $\mathcal{D}$ by assumption and
\begin{equation}\label{eq:ExpTre}
0=d(f_i\eta_i)(Y_j,Y_k)=Y_j(\iota_{Y_k}f_i\eta_i)-Y_k(\iota_{Y_j}f_i\eta_i)-f_i\eta_i([Y_j,Y_k])=-f_i\eta_i([Y_j,Y_k]),
\end{equation}
for  $i,j,k=1,\ldots,r.$ Since the distribution $\mathcal{D}$ spanned by $X_1,\ldots,X_r$ is integrable,  $[Y_j,Y_k]$ is tangent to such a distribution. Meanwhile, (\ref{eq:ExpTre}) and the fact that $f_1,\ldots,f_r$ are non-vanishing imply that $[Y_j,Y_k]$ belongs to the annihilator of $\eta_1,\ldots,\eta_r$, which gives a supplementary distribution to $\mathcal{D}$. Hence, $[Y_j,Y_k]=0$ for $j,k=1,\ldots,r$.  

Let us prove the direct part. If $X_1/f_1,\ldots,X_r/f_r$ commute between themselves, then $d(f_i\eta_i)$ on $\mathcal{D}$ can be obtained as follows
$$
d(f_i\eta_i)(X_j/f_j,X_k/f_k)=-f_i\eta_i([X_j/f_j,X_k/f_k])=0.
$$
Hence, $d(f_i\eta_i)$ vanishes on the distribution $\mathcal{D}$ spanned by $X_1,\ldots,X_r$.
\end{proof}
Theorem \ref{Th:FroRecBett} also shows that the functions $f_1,\ldots,f_r$ are not uniquely defined since the functions needed to integrate $\eta_1,\ldots,\eta_r$, i.e. to get $d(f_i\eta_i)=0$ for $i=1,\ldots,r$ on $\mathcal{D}$, are not uniquely defined. One of the main advantages of Theorem \ref{Th:FroRecBett} in comparison with previous methods is that the functions $f_1,\ldots,f_r$ are obtained directly without finding an additional coordinate system $x^1,\ldots,x^n$ as in Theorem \ref{Th::RecBasis} and, additionally, the system of partial differential equations determining each function $f_i$ depends only on $\mathcal{D}$ and $\eta_i$.

Note that the differential forms $f_1\eta_1,\ldots,f_r\eta_r$ in Theorem \ref{Th:FroRecBett} do not need to be closed. In fact, $d(f_i\eta_i)$ only vanishes on vector fields taking values in $\mathcal{D}$, which is a condition easier to satisfy than $d(f_i\eta_i)=0$ and makes the derivation of $f_1,\ldots,f_r$ easier.

Let us apply Theorem \ref{Th:FroRecBett} to the quasi-rectifiable family of vector fields (\ref{Eq:CoorSL2}) on $SL_2$. In this case, the dual one-forms to the vector fields (\ref{Eq:CoorSL2}) are given by
$$
\begin{gathered}
\eta^L_1=\left(-\frac{2}c+\frac{(4a-\sqrt{2}b)(1+bc)}{2a^2}\right)da-\frac{1}{\sqrt{2}a}db+\left(-2b+\frac{2a^2+b^2}{\sqrt{2}a}\right)dc,\\
\eta^L_2=\frac{\sqrt{2}b(1+bc)}{a^2}da+\frac{\sqrt{2}}{a}db-\frac{\sqrt{2}b^2}{a}dc,\qquad
\eta^L_3=-\frac{2}cda+\sqrt{2}a dc.
\end{gathered}
$$
Since $Y^L_1,Y^L_2,Y^L_3$ span $TSL_2$, one has to multiply them by non-vanishing functions so that the result will become an exact differential. Then,
$$
\begin{gathered}
\frac{2}{(-2a+\sqrt{2} b)^2}\eta^L_1=d\left(\frac{c}{\sqrt{2}a}-\frac{1}{2a^2-\sqrt{2} ab}\right),\,\,
-\frac{1}{\sqrt{2}b^2}\eta^L_2=d\left(\frac{1+bc}{ab}\right),\,\,
\frac{1}{\sqrt{2} a^2}\eta^L_3=d\left(\frac{a}{c}\right).
\end{gathered}
$$
Hence, one finds, again, that the functions (\ref{Eq:RecFroFun}) allow us to rescale $Y_1^L,Y_2^L,Y_3^L$ to make them commute. Note that the previous example shows a remarkable fact: The potentials for $f_i\eta^L_i$ give us the coordinate system $x^1,x^2,x^3$ for $SL_2$ used in Theorem \ref{Th::RecBasis}. More specifically, one has the following theorem.

\begin{corollary}\label{Cor:Double} Let $X_1,\ldots,X_r$ be a quasi-rectifiable family of vector fields on $N$. Let $f_1,\ldots,f_r$ be a family of functions on $N$ such that $d(f_i\eta_i)|_{\mathcal{D}}=0$, where $\mathcal{D}$ is the distribution spanned by $X_1,\ldots,X_r$, and $\eta_1,\ldots,\eta_r$ is a dual system of one-forms to $X_1,\ldots,X_r$. If $f_i\eta_i|_{\mathcal{D}}=dx^i|_{\mathcal{D}}$ for some functions $\{x^1,\ldots,x^r\}$ with $i=1,\ldots,r$, then $X_ix^j=0$ for $i,j=1,\ldots,r$ and $i\neq j$. In other words,  $x^1,\ldots,x^r,$ along with some common functionally independent $n-r$ first integrals  for $X_1,\ldots,X_r,$ put these vector fields in quasi-rectifiable form.
\end{corollary}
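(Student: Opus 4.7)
The plan is to dualize the hypothesis directly and read off the action of each $X_i$ on the functions $x^j$. Since $\eta_1,\ldots,\eta_r$ is a dual system to $X_1,\ldots,X_r$, we have $\eta_i(X_j)=\delta_i^j$, and since each $X_j$ takes values in $\mathcal{D}$, the equality $f_i\eta_i|_\mathcal{D}=dx^i|_\mathcal{D}$ may be evaluated on $X_j$. This gives
$$X_j x^i \;=\; dx^i(X_j) \;=\; (f_i\eta_i)(X_j) \;=\; f_i\,\delta_i^j,\qquad 1\leq i,j\leq r,$$
so in particular $X_i x^j=0$ whenever $i\neq j$, which is the first assertion of the corollary.

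For the quasi-rectifiable form, I would adjoin $n-r$ functionally independent common first integrals $x^{r+1},\ldots,x^n$ of $X_1,\ldots,X_r$, whose existence is guaranteed by the Fr\"obenius theorem applied to the integrable distribution $\mathcal{D}$, exactly as in the proof of Theorem \ref{Th::RecBasis}. It then suffices to check that $\{x^1,\ldots,x^n\}$ is a local chart and to expand the $X_i$ in it. For the chart condition, suppose $\sum_{i=1}^n c_i\,dx^i=0$; applying this identity to $X_j$ with $j\leq r$ and using $X_j x^i=f_i\delta_i^j$ for $i\leq r$ together with $X_j x^i=0$ for $i>r$ yields $c_j f_j=0$, hence $c_j=0$ since $f_j$ is non-vanishing. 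The remaining relation $\sum_{i=r+1}^n c_i\,dx^i=0$ then forces $c_{r+1}=\ldots=c_n=0$ by functional independence of the first integrals, so $dx^1\wedge\ldots\wedge dx^n$ is non-vanishing and $\{x^1,\ldots,x^n\}$ is indeed a coordinate system.

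Once the chart is in place, the general expansion $X_i=\sum_{j=1}^n (X_i x^j)\,\partial/\partial x^j$ combined with $X_i x^j=f_j\delta_j^i$ for $j\leq r$ and $X_i x^j=0$ for $j>r$ immediately yields $X_i=f_i\,\partial/\partial x^i$ for $i=1,\ldots,r$, which is precisely the quasi-rectifiable form (\ref{eq:ConElaCoor}).

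The only step requiring genuine care is the verification that adjoining the first integrals produces a chart; this amounts to the observation that $dx^1,\ldots,dx^r$ are linearly independent modulo the annihilator of $\mathcal{D}$ (thanks to the diagonal pairing with $X_1,\ldots,X_r$ established in the first display), while $dx^{r+1},\ldots,dx^n$ are linearly independent inside that annihilator. Everything else is a one-line consequence of the duality $\eta_i(X_j)=\delta_i^j$, so no additional analytic ingredient beyond Theorem \ref{Th:FroRecBett} and the Fr\"obenius theorem is needed.
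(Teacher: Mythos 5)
Your proof is correct and follows essentially the same route as the paper: the key step in both is that $dx^i$ and $f_i\eta_i$ have the same contractions with vector fields valued in $\mathcal{D}$, so $X_jx^i=f_i\eta_i(X_j)=f_i\delta_i^j$ vanishes for $i\neq j$. Your additional verification that $\{x^1,\ldots,x^n\}$ is a genuine chart (via the non-vanishing of the $f_i$ and the functional independence of the first integrals) is a detail the paper leaves implicit, but it is consistent with what is already established in the proof of Theorem \ref{Th::RecBasis}.
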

\begin{proof} The proof follows from the fact that $X_jx^i=0$ for $j=1,\ldots,r$, $i=r+1,\ldots,n$. We have a family $x^{r+1},\ldots,x^n$ of functionally independent first integrals of $X_1,\ldots,X_r$ and
$$
\iota_{X_j}dx^i=f_i\iota_{X_j}\eta_i=0,\qquad i\neq j=1,\ldots,r,
$$
which hold because the $dx^1,\ldots,dx^r$ and $f_1\eta_1,\ldots,f_r\eta_r$ have the same contractions with vector fields taking values in $\mathcal{D}$.
\end{proof}

 Since $X_1,\ldots,X_r$ is a quasi-rectifiable family according to the Corollary \ref{Cor:Double}, one has that $\mathcal{D}$ becomes an integrable distribution and $f_i\eta_i|_\mathcal{D}=dx^i|_{\mathcal{D}}$, which means that only the restriction of $f_i\eta_i$ to every integral submanifold of $\mathcal{D}$ has to be exact. An example of this fact is to be presented in Section \ref{Se:Hydro} so as to illustrate the relevance of our method and to study the sound Lie algebras of vector fields related to the propagation of sound waves occurring in $(1+1)$-dimensional hydrodynamic-type equations. 

There is another structure that appears in the practical cases analysed in the following sections. This structure will lead to a system of partial differential equations determining the functions $f_1,\ldots,f_r$ in Corollary \ref{Cor:Double}. Assume that the vector fields $X_1,\ldots,X_r$ on the manifold $N$ can be extended to a family $X_1,\ldots,X_n$ of vector fields such that $X_1\wedge\ldots\wedge X_n$ does not vanish on $N$. Then, $X_1,\ldots,X_n$ span the  tangent bundle $TN$. The form of the extended vector fields is not really important, but it will be related to quasi-rectifiable families of vector fields in practical cases. Under the above assumptions, there exist dual forms $\eta_1,\ldots,\eta_n$ to $X_1,\ldots,X_n$. Hence, one can calculate the differentials of the one-forms as follows
$$
d\eta_i(X_j,X_k)=X_j\eta_i(X_k)-X_k\eta_i(X_j)-\eta_i([X_j,X_k])=-\eta_i([X_j,X_k])
$$
for $i,j,k=1,\ldots,n$. If we write $[X_j,X_k]=\sum_{i=1}^n f_{jk}^iX_i$ for some uniquely defined functions $f_{jk}^i\in C^\infty(N)$ and $j,k=1,\ldots,n$, then
$$
d\eta_i=-\frac 12\sum_{j,k=1}^nf^i_{jk}\eta_j\wedge \eta_k,\qquad i=1,\ldots,r.
$$
Then,
$$
d(f_i\eta_i)=df_i\wedge \eta_i-\frac {f_i}2\sum_{j,k=1}^nf^i_{jk}\eta_j\wedge \eta_k,\qquad i=1,\ldots,r.
$$
If $d(f_i\eta_i)|_\mathcal{D}=0$ for $i=1,\ldots,r$, then
$$
df_i|_{\mathcal{D}}\wedge\eta_i-\frac {f_i}2\sum_{j,k=1}^rf^i_{jk}\eta_j\wedge \eta_k=0,\qquad i=1,\ldots,r.
$$
Since the chosen family $X_1,\ldots,X_r$ is quasi-rectifiable,
$$
\left(df_i|_{\mathcal{D}}+ {f_i}\sum_{k=1}^rf^i_{ik}\eta_k\right)\wedge \eta_i=0,\qquad i=1,\ldots,r.
$$
holds, and the equations determining each $f_i$ are
$$
X_j\ln|f_i|=-f^i_{ij},\qquad j=1,\ldots,r, \qquad j\neq i.
$$
Since $X_1,\ldots,X_r$ can be put in quasi-rectifiable form, it can be proved that the above system always admits a solution.

Let us finally describe in detail a new method that can be of some interest in certain circumstances. More specifically, we are now interested in finding integrability conditions for systems of PDEs of the form
\begin{equation}\label{eq:DifEq2}
X_{(i)}f=g_i,\qquad i=1,\ldots,n,
\end{equation}
for several functions $g_1,\ldots,g_n:N\times \mathbb{R}\rightarrow\mathbb{R}$ depending on $N$ and $f$, and a family of vector fields $X_{(1)},\ldots,X_{(n)}$ which spans the tangent bundle $TN$. For instance, (\ref{eq:DifEq2}) is interesting when $g_2=\ldots=g_n=0$, as systems of PDEs of this type lead us to put $X_{(1)},\ldots,X_{(n)}$ into a quasi-rectifiable form. Moreover, systems of PDEs of the form (\ref{eq:DifEq2}) occur very frequently in the literature. As a particular instance, we  generalise and understand geometrically  the results of \cite[pg. 91]{Sn06} for a particular class of systems (\ref{eq:DifEq2}) on $\mathbb{R}^2$. In particular, we will provide a new application of the so-called {\it evolution vector fields} \cite{SLLM20}.
 The evolutionary vector field in $J^1
 \pi$ related to a function $f\in C^\infty(J^1\pi)$  takes the form (see \cite{LLM21,SLLM20} for further details\footnote{There is a typo in the second line of the equations of motion for an evolution vector field \cite[p.  6]{SLLM20}, where $\partial H/\partial p^i$ should be $\partial H/\partial q^i$ as in \cite[p. 2]{LLM21}.})
$$
\mathcal{E}_{f}=-\sum_{i=1}^n\left(\frac{\partial f }{\partial x^i}\frac{\partial}{\partial p_i}-\frac{\partial f }{\partial p_i}\frac{\partial}{\partial x^i}+p_i\frac{\partial f }{\partial z}\frac{\partial}{\partial p_i}-p_i\frac{\partial f }{\partial p_i}\frac{\partial}{\partial z}\right).
$$

\begin{proposition}\label{Prop:NewMethod} Let $X_{(1)},\ldots,X_{(n)}$ be a family of vector fields on an $n$-dimensional manifold $N$ spanning its tangent bundle around $x\in N$. Let $\{x^1,\ldots,x^n,z,p_1,\ldots, p_n\}$ be a locally adapted coordinate system for $J^1\pi$ and define 
\begin{equation}\label{Def:fun2}
f_j(x^1,\ldots,x^n,z,p_1,\ldots,p_n)=\sum_{i=1}^nX_{(j)}^i(x^1,\ldots,x^n)p_i-g_j(x^1,\ldots,x^n,z),\qquad j=1,\ldots,n,
\end{equation}
where 
$$
X_{(j)}=\sum_{i=1}^nX_{(j)}^i(x^1,\ldots,x^n)\frac{\partial}{\partial x^i},\qquad j=1,\ldots,n.
$$
If a system of partial differential equations on $N$ of the form
(\ref{eq:DifEq2}) admits a solution $f\in C^\infty(U)$ on a neighbourhood $U$ of $x$, then the equations (\ref{eq:DifEq2}), considered as a system of  equations $f_1=0,\ldots,f_n=0$ in $J^1\pi$, satisfy, on the lift $j^1\sigma_f$ of  $x\in N\mapsto (x,f(z))\in N\times \mathbb{R}$ to $J^1\pi$, the condition that
$$
(j^1\sigma_f)^*\{f_1,\ldots,f_n\}_{ij}=0, \qquad 1\leq i<j\leq n,
$$
for a series of $n(n-1)/2$ brackets $\{\cdot,\ldots,\cdot\}_{ij}:C^\infty(J^1\pi)^n\rightarrow C^\infty(J^1\pi)$ that are derivations on each entry. If $n=2$, then the above expression reduces to
\begin{equation}\label{eq:SnCh}
\mathcal{E}_{f_1}f_2|_{j^1\sigma_f}=0.
\end{equation}
\end{proposition}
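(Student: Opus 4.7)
My plan is to use the fact that if $f\in C^\infty(U)$ solves (\ref{eq:DifEq2}), then the $n$ functions $f_j\in C^\infty(J^1\pi)$ defined in (\ref{Def:fun2}) vanish identically along the prolongation: $(j^1\sigma_f)^*f_j = \sum_i X_{(j)}^i\,\partial f/\partial x^i - g_j(x,f(x)) = X_{(j)}f - g_j = 0$. Hence the $n$-dimensional submanifold $j^1\sigma_f(U)\subset J^1\pi$ lies in the joint zero locus $\{f_1=\cdots=f_n=0\}$, and any first-order differential expression built out of $f_1,\dots,f_n$ which is a derivation in each slot and whose coordinate form encodes the classical compatibility identity between two of the equations must pull back to zero along $j^1\sigma_f$. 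This is the mechanism that defines the required brackets $\{\cdot,\ldots,\cdot\}_{ij}$.

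The compatibility identity in question is obtained, for each pair $i<j$, by equating $X_{(i)}(X_{(j)}f)$ and $X_{(j)}(X_{(i)}f)$: using $X_{(\ell)}f = g_\ell$ and the chain rule along $z=f(x)$, one arrives at
\begin{equation*}
[X_{(i)},X_{(j)}](f) = X_{(i)}(g_j) - X_{(j)}(g_i) + g_i\,\frac{\partial g_j}{\partial z} - g_j\,\frac{\partial g_i}{\partial z},
\end{equation*}
where the $x$-derivatives of $g_k$ are taken at fixed $z$. The bracket $\{f_1,\ldots,f_n\}_{ij}$ is then specified on $J^1\pi$ as a derivation in each entry whose coordinate form reads $\sum_k p_k\,[X_{(i)},X_{(j)}]^k + X_{(j)}(g_i) - X_{(i)}(g_j) + g_j\,\partial_z g_i - g_i\,\partial_z g_j$, rewritten intrinsically in terms of $f_1,\dots,f_n$ and their derivatives with respect to the jet coordinates. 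By construction, under $(j^1\sigma_f)^*$ the first summand becomes $[X_{(i)},X_{(j)}](f)$ and the remaining terms become the negative of the right-hand side of the identity above, so the whole expression vanishes.

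To verify the $n=2$ case I would insert into $\mathcal{E}_{f_1}$ the partial derivatives $\partial f_1/\partial p_i = X_{(1)}^i$, $\partial f_1/\partial z = -\partial g_1/\partial z$, and $\partial f_1/\partial x^i = \sum_k(\partial X_{(1)}^k/\partial x^i)p_k - \partial g_1/\partial x^i$, together with the analogues for $f_2$. Regrouping yields
\begin{equation*}
\mathcal{E}_{f_1}f_2 = \sum_k p_k[X_{(1)},X_{(2)}]^k + X_{(2)}(g_1) - X_{(1)}(g_2) + \Bigl(\sum_k p_k X_{(2)}^k\Bigr)\frac{\partial g_1}{\partial z} - \Bigl(\sum_k p_k X_{(1)}^k\Bigr)\frac{\partial g_2}{\partial z},
\end{equation*}
and pulling back along $j^1\sigma_f$ replaces $\sum_k p_k X_{(\ell)}^k$ by $g_\ell$ and $\sum_k p_k[X_{(1)},X_{(2)}]^k$ by $[X_{(1)},X_{(2)}](f)$; applying the compatibility identity then cancels all remaining terms. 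This identifies $\mathcal{E}_{f_1}f_2$ with $\{f_1,f_2\}_{12}$ and delivers (\ref{eq:SnCh}).

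The $n=2$ calculation is mechanical; the genuine difficulty lies in arranging the definition of $\{\cdot,\ldots,\cdot\}_{ij}$ for $n\geq 3$ so that each bracket is simultaneously a derivation in all $n$ entries, is built only from operations natural to the contact geometry of $J^1\pi$ (such as evolutionary vector fields), and reproduces the pairwise compatibility identity when restricted to the prolongation. The evolution vector field $\mathcal{E}_{f_1}$ packages the first-order data of $f_1$ in exactly the form required when $n=2$, and the main content of a full proof will be to find the correct higher-$n$ generalisation rather than to verify the identity once such a generalisation is in place.
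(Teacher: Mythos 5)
Your $n=2$ computation is correct and agrees with what the paper obtains: $\mathcal{E}_{f_1}f_2$ expands to $\sum_k p_k[X_{(1)},X_{(2)}]^k + X_{(2)}g_1-X_{(1)}g_2+(\sum_kp_kX_{(2)}^k)\partial_zg_1-(\sum_kp_kX_{(1)}^k)\partial_zg_2$, and the pairwise compatibility identity kills this on $j^1\sigma_f$. However, for $n\geq 3$ there is a genuine gap, and you name it yourself in your last paragraph: the proposition asserts the existence of brackets $\{\cdot,\ldots,\cdot\}_{ij}:C^\infty(J^1\pi)^n\rightarrow C^\infty(J^1\pi)$ that are derivations in each of the $n$ entries, and you never construct them. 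Your candidate expression is intrinsically binary — it is built only from $X_{(i)},X_{(j)},g_i,g_j$, i.e.\ from $f_i$ and $f_j$ — and there is no evident way to rewrite it as an $n$-ary derivation evaluated on all of $f_1,\ldots,f_n$. (Worse, naively expressing $[X_{(i)},X_{(j)}]^k$ through $\partial f_i/\partial p_l$ and $\partial f_j/\partial p_k$ introduces second derivatives $\partial^2 f_j/\partial x^l\partial p_k$, which destroys the derivation property; the repackaging that saves the day at $n=2$ via $\mathcal{E}_{f_1}$ is exactly what is missing at higher $n$.) Since the existence of these $n$-ary brackets is the actual content of the statement for $n\geq 3$, the proof is incomplete.

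The paper's route supplies the missing construction and is structurally different from yours. Instead of starting from the pairwise identity $X_{(i)}(X_{(j)}f)=X_{(j)}(X_{(i)}f)$, it solves $f_1=\cdots=f_n=0$ for $p_i=\phi_i(x,z)$, encodes the symmetry of second derivatives of $z=z(x)$ as the symmetry of the matrix $Z$ with $Z_{ij}=\partial\phi_i/\partial x^j+\phi_j\partial\phi_i/\partial z$, and differentiates the constraints to get $Z=-F^{-1}M$ with $F_l^{\ j}=\partial f_l/\partial p_j$ and $M_{lj}=\partial f_l/\partial x^j+\phi_j\partial f_l/\partial z$. The crucial move is clearing the denominator: the symmetry of $Z$ becomes ${\rm adj}(F)\,M=M^T\,{\rm adj}(F^T)$, and the entries of ${\rm adj}(F)$ are $(n-1)\times(n-1)$ minors — Nambu brackets $\partial(f_1,\ldots,\widehat{f_j},\ldots,f_n)/\partial(p_1,\ldots,\widehat{p_i},\ldots,p_n)$ — each containing exactly one first derivative of each of $n-1$ of the functions. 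Multiplying by the row $M_{k\bullet}$ brings in one first derivative of the remaining function, so each entry of ${\rm adj}(F)M-M^T{\rm adj}(F^T)$ is genuinely multilinear of derivation type in all $n$ arguments; these entries are the brackets $\{f_1,\ldots,f_n\}_{ij}$, and they reduce to $\mathcal{E}_{f_1}f_2$ when $n=2$. If you want to complete your argument along your own lines, you would have to show that your pairwise identities can be assembled into expressions of exactly this Nambu-determinant form; as written, that step is absent.
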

\begin{proof}
 In the adapted coordinates $\{x^1,\ldots,x^n,z,p_1,\ldots,p_n\}$ of $J^1\pi$, the system (\ref{eq:DifEq2}) can be rewritten as follows
$$
f_i\left(x^1,\ldots,x^n,f(x^1,\ldots,x^n),\frac{\partial f}{\partial x^1}(x^1,\ldots,x^n),\ldots,\frac{\partial f}{\partial x^n}(x^1,\ldots,x^n)\right)=0,\qquad i=1,\ldots,n.
$$
The fact that $X^1\wedge\ldots\wedge X^n$ does not vanish in a neighbourhood of $x\in N$ implies that
\begin{equation}\label{eq:Nond}
\frac{\partial (f_1,\ldots,f_n)}{\partial (p_{1},\ldots,p_{n})}\neq0
\end{equation}
and conversely. 
 A solution $f\in C^\infty(N)$ of (\ref{eq:DifEq2}) gives rise to a section $\sigma_f(x)=(x,f(x))$ of $\pi:N\times \mathbb{R}\rightarrow N$, which, in turn, leads to a lift $j^1\sigma_f$ of $\sigma_f$ to $J^1\pi$ given by
$$
j^1\sigma_f(x^1,\ldots,x^n)=\left(x^1,\ldots,x^n,f(x^1,\ldots,x^n),\frac{\partial f}{\partial x^1}(x^1,\ldots,x^n),\ldots,\frac{\partial f}{\partial x^n}(x^1,\ldots,x^n)\right).
$$
To characterise lifts, one may use the contact form $\alpha=dz-\sum_{i=1}^np_idx^i$ on $J^1\pi$. Then, a section 
$$
\sigma(x^1,\ldots,x^n)=(x^1,\ldots,x^n,z(x^1,\ldots,x^n),p_1(x^1,\ldots,x^n),\ldots,p_n(x^1,\ldots,x^n))
$$
of $\pi^1:(x,z,p)\in J^1\pi\mapsto x\in N$ satisfying
$\sigma^*\alpha=0$ gives
$$
p_i=\frac{\partial f}{\partial x^i}(x^1,\ldots,x^n),\qquad i=1,\ldots,n.
$$
The condition (\ref{eq:Nond}) allows us to write $p_i=\phi_i(x,z)$ for certain functions $\phi_1,\ldots,\phi_n:\mathbb{R}^{n+1}\rightarrow \mathbb{R}$. It is worth stressing that (\ref{Def:fun2}) shows that (\ref{eq:DifEq2}) can be considered as a linear system of equations with respect to $p_1,\ldots,p_n$. The condition (\ref{eq:Nond}) implies that its matrix of coefficients of $p_1,\ldots,p_n$ is invertible. As such, one can describe solutions for $p_1,\ldots,p_n$ in terms of the coefficients of the system via Cramer's method, and the obtained expressions depend only on $x^1,\ldots,x^n$, and, possibly $z$. Then,
\begin{equation}\label{eq:Expr}
f_l(x,z,\phi_i(x,z))=0,\qquad l=1,\ldots,n.
\end{equation}

Using the relations $p_i=\phi_i(x^1,\ldots,x^n,z)$ and considering particular solutions $z=z(x)$, one obtains
\begin{equation}\label{Eq:Sym}
\frac{\partial^2 z}{\partial x^j\partial x^i}=\frac{\partial p_i}{\partial x^j}=\frac{\partial \phi_i}{\partial x^j}+\frac{\partial \phi_i}{\partial z}\phi_j=\frac{\partial^2 z}{\partial x^i\partial x^j}=\frac{\partial p_j}{\partial x^i}=\frac{\partial \phi_j}{\partial x^i}+\frac{\partial \phi_j}{\partial z}\phi_i,\qquad 1\leq i<j\leq n.
\end{equation}
Meanwhile, the partial derivatives of the equations (\ref{eq:Expr}) in terms of $x^1,\ldots,x^n,z$ are
$$
\frac{\partial f_l}{\partial x^j}+\sum_{i=1}^n\frac{\partial f_l}{\partial p_i}\frac{\partial\phi_i}{\partial x^j}=0,\qquad \frac{\partial f_l}{\partial z}+\sum_{i=1}^n\frac{\partial f_l}{\partial p_i}\frac{\partial \phi_i}{\partial z}=0,\qquad l,j=1,\ldots,n.
$$
The above implies a series of relations given by the matrix equation
$$
M+FZ=0,
$$
where
$$
Z_{ij}=\frac{\partial \phi_i}{\partial x^j}+\phi_j\frac{\partial \phi_i}{\partial z},\qquad M_{lj}=\frac{\partial f_l}{\partial x^j}+\phi_j\frac{\partial f_l}{\partial z},\qquad F_{l}^j=\frac{\partial f_l}{\partial p_j},\qquad i,j,l=1,\ldots,n.
$$
Again (\ref{eq:Nond}) ensures that  $F$ admits an inverse $F^{-1}$ and one can write
$$
Z=-F^{-1}M.
$$
As $Z$ is a symmetric matrix due to conditions (\ref{Eq:Sym}), it follows that 
$$
F^{-1}M=M^T(F^{-1})^T.
$$
Since $F^{-1}=({\det F})^{-1}{\rm adj}F$, where adj$F$ is the adjoint matrix of $F$, one has that 
\begin{equation}\label{eq:Com}
{\rm adj}(F)M=M^T{\rm adj}(F^T).
\end{equation}
The entries of ${\rm adj}(F)$ are minors of $F$, which implies that they are homogeneous polynomials of order $n-1$ in the partial derivatives of the $f_i$ with respect to the momenta $p_j$. In particular, if $I^{n-1}_{\alpha,j}$ is any $(n-1)$-index $\alpha=(j_1,\ldots,j_{n-1})$ where $j_1,\ldots,j_{n-1}$ are different numbers contained in $\{1,\ldots,\widehat{j},\ldots, n\}$, then 

\begin{align*}
{\rm adj}(F)_{i}^j&=\sum_{I^{n-1}_{\alpha,j}}(-1)^{i+j}\epsilon_{j_1,\ldots,j_{n-1}}\frac{\partial f_{j_1}}{\partial p_{1}}\cdots\widehat{\frac{\partial f_{j}}{\partial p_{i}}}\cdots \frac{\partial f_{j_{n-1}}}{\partial p_{n}}
\\&=\sum_{I^{n-1}_{\alpha,j}}(-1)^{i+j}\epsilon_{j_1,\ldots,j_{n-1}}X^1_{({i_1})}\cdots\widehat{X^i_{({j})}}\cdots X^n_{(i_{n-1})}.
\end{align*}
It is worth noting that 
$$
{\rm adj}(F)_{i}^j=(-1)^{i+j}\frac{\partial(f_1,\ldots,\widehat{f}_j,\ldots ,f_n)}{\partial (p_1,\ldots,\widehat{p_i},\ldots,p_n)},
$$
where the determinant on the right-hand side is, by definition, the {\it Nambu bracket}, of $f_1,\ldots,\widehat{f}_j,\ldots,f_n$ in terms of the variables $p_1,\ldots,\widehat{p_i},\ldots,p_n$ (cf. \cite{Sh06}). 
Using these expressions and  (\ref{eq:Com}), one gets
\begin{multline*}
\sum_{k=1}^n\sum_{I^{n-1}_{\alpha,k}}(-1)^{i+k}\epsilon_{j_1,\ldots,j_{n-1}}X^{1}_{(i_1)}\ldots \widehat{X^{i}_{(k)}}\ldots X_{(i_{n-1})}^n\left(\frac{\partial f_k}{\partial x^j}+\phi_j\frac{\partial f_k}{\partial z}\right)\\=\sum_{k=1}^n
\left(\frac{\partial f_k}{\partial x^i}+\phi_i\frac{\partial f_k}{\partial z}\right)\sum_{I^{n-1}_{\alpha,k}}(-1)^{j+k}\epsilon_{j_1,\ldots,j_{n-1}}X^{1}_{(i_1)}\ldots \widehat{X^{j}_{(k)}}\ldots X_{(i_{n-1})}^n
\end{multline*}
for $i,j=1,\ldots,n$. These relations are derivations on each $f_1,\ldots,f_n$ and can therefore be described by means of  $n$-vector fields $\Lambda_{ij}$ for $1\leq i<j\leq n$ evaluated when $p_i=\phi_i$ for $i=1,\ldots,n$. In particular, if $n=2$, one obtains a single expression that can be described via the evolutionary vector field of $f_i$,  say $\mathcal{E}_{f_i}$, which is
$$
\mathcal{E}_{f_i}=-\sum_{j=1}^n\left(\frac{\partial f_i}{\partial x^j}\frac{\partial}{\partial p_j}-\frac{\partial f_i}{\partial p_j}\frac{\partial}{\partial x^j}+p_j\frac{\partial f_i}{\partial z}\frac{\partial}{\partial p_j}-p_j\frac{\partial f_i}{\partial p_j}\frac{\partial}{\partial z}\right).
$$
We then obtain
$$
\{f_1,f_2\}=\mathcal{E}_{f_1}f_2
$$
that vanishes on $j^1\sigma_f$. This is  the integrability condition for solutions of our initial system. 

\end{proof}

It is worth noting that the previous condition  (\ref{eq:SnCh}) for $n=2$ was used in \cite{Sn06} to solve systems of partial differential equations on $\mathbb{R}^2$. Here we provide a modern geometric approach to the topic. 
\section{Integrable systems arising from  quasi-rectifiable families of vector fields}\label{Sec::Int}

Let us investigate the relevance of quasi-rectifiable families of vector fields in the study of the integrability of systems of first-order differential equations \cite{LS93}. Consider a system of ordinary differential equations on a manifold $N$ of the form
\begin{equation}\label{eq:Initial}
\frac{dx^i}{dt}=X^i(x),\qquad i=1,\ldots,n.
\end{equation}
This system determines a vector field $X=X^i(x)\partial/\partial x^i$ on $N$, which describes its integral curves, and vice versa. 

One of the standard methods for studying (\ref{eq:Initial}) is based on the use of Lie symmetries of $X$, i.e. vector fields $Y$ on $N$ such that $[X,Y]=0$. Then, the elements of the group of diffeomorphisms related to $Y$ map solutions of $X$ into solutions of $X$. This allows for the simplification and analysis of the properties of  (\ref{eq:Initial}) (see \cite{Ol83}).

Assume that $X$ forms part of an almost  rectifiable family of vector fields $X_1=X,X_2,\ldots,X_r$. Then, these vector fields can be multiplied by the non-vanishing functions $f_1,\ldots,f_r$, respectively, so that $[f_iX_i,f_jX_j]=0$ for $i,j=1,\ldots,r$. In particular,  $f_1X$ is related to the new system of ordinary differential equations
\begin{equation}\label{eq:Trans}
\frac{d\hat x^i}{d\tau}=f_1X^i(\hat x),\qquad i=1,\ldots,n.
\end{equation}
Note that $\hat x(\tau )$ is a solution of (\ref{eq:Trans}) if and only if the $t$-reparametrisation 
$$
\tau(t)=\int_0^t \frac{dt'}{f_1(x(t'))}
$$
is such that $
\hat{x}(\tau(t))=x(t)
$ is a solution of (\ref{eq:Initial}). This transformation is called a {\it Sundman transformation} and there has been much interest in it \cite{CMM22,CMM23,EE04}. Note that, for the transformed system (\ref{eq:Trans}), the vector fields $f_2X_2,\ldots,f_rX_r$ are Lie symmetries of $f_1X$, which can be used to integrate the system (\ref{eq:Trans}) and to study its solutions. Once the solutions of (\ref{eq:Trans}) have been obtained, one can retrieve the solutions of (\ref{eq:Initial}) by writing $x(t)=\hat{x}(\tau(t))$, for each particular solution $
\hat{x}(\tau)$ of (\ref{eq:Trans}).

The vector fields $f_2X_2,\ldots,f_rX_r$ span an $(r-1)$-dimensional  Abelian Lie algebra of vector fields. They can be integrated in order to define a Lie group action $\varphi:\mathbb{R}^{n-1}\times N\rightarrow N$ of symmetries of $f_1X_1$ (see \cite{LS93}). This Lie group action is not a Lie group action of symmetries of \eqref{eq:Initial}, but the transformations map solutions into particular solutions up to a parametrisation. In this case, we say that $\varphi:\mathbb{R}^{n-1}\times N\rightarrow N$ is a Lie group of Sundman symmetries of \eqref{eq:Initial}. Let us give a formal general definition.

\begin{definition} Given a system of first-order differential equations (\ref{eq:Initial}) on $N$, we define a Lie group of Sundman transformations to be a Lie group action $\varphi:G\times N\rightarrow N$ mapping solutions of (\ref{eq:Initial}) into solutions of (\ref{eq:Initial}) up to time-reparametrisations.
\end{definition}

Note that the time-reparametrisations in the above definition may be different for each particular solution of (\ref{eq:Initial}).

Let us now study a more specific type of system (\ref{eq:Initial}), in particular, those that are Hamiltonian relative to a symplectic form. We aim to briefly analyse the relation of these systems with quasi-rectifiable Lie algebras of vector fields and integrable systems in a symplectic Hamiltonian form.

\begin{theorem}\label{Th:HamRec} A family of vector fields $X_1,\ldots,X_r$ on $N$ is a quasi-rectifiable family of Hamiltonian vector fields relative to a symplectic form $\omega$ on $N$ if and only if there exists a family of Hamiltonian functions $h_1,\ldots,h_r$ on $N$ for the vector fields $X_1,\ldots,X_r$, respectively, such that the Poisson bracket of $h_i,h_j$ is of the form $\{h_i,h_j\}=h_{ij}(h_i,h_j)$ for some functions $h_{ij}\in C^\infty(\mathbb{R}^2)$ with $1\leq i<j\leq r$.
\end{theorem}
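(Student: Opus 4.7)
My plan is to use the standard dictionary between a symplectic form $\omega$, Hamiltonian vector fields, and Poisson brackets. Recall that $\iota_{X_i}\omega=-dh_i$, that $\omega^\flat:TN\to T^*N$ sending $X\mapsto \iota_X\omega$ is a bundle isomorphism (since $\omega$ is nondegenerate), and that $[X_{h_i},X_{h_j}]=-X_{\{h_i,h_j\}}$. The whole proof then amounts to transferring the hypothesis back and forth through $\omega^\flat$ and applying a classical functional-dependence argument. Throughout, I will work locally, which is enough since the coordinates in the definition of quasi-rectifiable families are themselves local, and Hamiltonian potentials $h_i$ are defined only up to a constant.

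For the direct implication, I will start from the quasi-rectifiable commutation relations $[X_i,X_j]=f_{ij}^iX_i+f_{ij}^jX_j$ granted by Theorem \ref{Th::RecBasis}. Applying $X_{\{h_i,h_j\}}=-[X_{h_i},X_{h_j}]$ and then $\omega^\flat$, I obtain
\begin{equation*}
d\{h_i,h_j\}=-f_{ij}^i\,dh_i-f_{ij}^j\,dh_j,
\end{equation*}
so that $d\{h_i,h_j\}\wedge dh_i\wedge dh_j=0$. Since the quasi-rectifiable assumption gives $X_i\wedge X_j\neq 0$ and $\omega^\flat$ is an isomorphism, $dh_i\wedge dh_j\neq 0$. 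The classical fact that a function whose differential lies in a corank-two codistribution spanned by two independent exact one-forms $dh_i,dh_j$ depends only on $h_i,h_j$ (proved by completing $(h_i,h_j)$ to local coordinates) yields $\{h_i,h_j\}=h_{ij}(h_i,h_j)$ for some smooth $h_{ij}$ of two variables, locally on $N$.

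For the converse, I assume $\{h_i,h_j\}=h_{ij}(h_i,h_j)$ with the $X_i=X_{h_i}$ linearly independent at each point (which is part of the quasi-rectifiable data, and in any case follows from $dh_i\wedge dh_j\neq 0$ together with the invertibility of $\omega^\flat$). Using the derivation property of the assignment $f\mapsto X_f$ together with the chain rule,
\begin{equation*}
X_{h_{ij}(h_i,h_j)}=\frac{\partial h_{ij}}{\partial h_i}(h_i,h_j)\,X_i+\frac{\partial h_{ij}}{\partial h_j}(h_i,h_j)\,X_j,
\end{equation*}
so that $[X_i,X_j]=-X_{\{h_i,h_j\}}$ is a $C^\infty(N)$-linear combination of $X_i$ and $X_j$. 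The commutation relations (\ref{eq:CommRel}) therefore hold and Theorem \ref{Th::RecBasis} gives quasi-rectifiability.

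The only step that is not a direct bookkeeping computation is the functional-dependence argument in the first direction. There the main point is to observe that $dh_i\wedge dh_j\neq 0$ really does follow from $X_i\wedge X_j\neq 0$ via the symplectic isomorphism, and then to invoke the standard local result that a function with $df\in\langle dh_i,dh_j\rangle$ must be of the form $F(h_i,h_j)$. Once that is in place, both implications collapse to the same algebraic identity $X_{\{h_i,h_j\}}=-[X_{h_i},X_{h_j}]$, simply read in opposite directions.
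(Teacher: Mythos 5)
Your proof is correct, and the converse implication coincides with the paper's. In the direct implication you take a genuinely shorter route: the paper applies $d$ to the identity $\iota_{[X_i,X_j]}\omega=f_{ij}^i\,dh_i+f_{ij}^j\,dh_j$ to get $df_{ij}^i\wedge dh_i+df_{ij}^j\wedge dh_j=0$, expands this in a coordinate system extending $h_1,\ldots,h_r$ to deduce that $f_{ij}^i$ and $f_{ij}^j$ depend only on $(h_i,h_j)$ and satisfy the cross-derivative symmetry $\partial f_{ij}^i/\partial h_j=\partial f_{ij}^j/\partial h_i$, and then integrates to manufacture the potential $h_{ij}$. You instead observe that the already-given function $\{h_i,h_j\}$ has its differential in $\langle dh_i,dh_j\rangle$ and invoke the functional-dependence lemma once, which bypasses both the analysis of the coefficient functions and the construction of a potential; the identity $f_{ij}^i=-\partial h_{ij}/\partial h_i$ then comes for free rather than being the engine of the proof. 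Both arguments ultimately rest on the same two ingredients --- $dh_1\wedge\cdots\wedge dh_r\neq 0$ via the isomorphism $\omega^\flat$, and completion of the $h$'s to a local coordinate system --- so what your version buys is economy, not extra generality. One shared imprecision, present in the paper's converse as well and therefore not a defect of your write-up specifically: the hypothesis $\{h_i,h_j\}=h_{ij}(h_i,h_j)$ alone does not force $X_1\wedge\cdots\wedge X_r\neq 0$, which the definition of a quasi-rectifiable family requires; you correctly flag that this pointwise independence must be carried as part of the data.
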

\begin{proof} Since $X_1,\ldots,X_r$ form a quasi-rectifiable family of vector fields, one has  $[X_i,X_j]=f_{ij}^iX_i+f_{ij}^jX_j$ for $1\leq i<j\leq r$ for certain functions $f_{ij}^i,f_{ij}^j\in C^\infty(N)$. If, in addition, $X_1,\ldots,X_r$ are Hamiltonian vector fields relative to $\omega$, then the commutator of Hamiltonian vector fields is Hamiltonian. Hence, each $f_{ij}^iX_i+f_{ij}^jX_j$ is a Hamiltonian vector field and it admits a certain Hamiltonian function $\Upsilon_{ij}$, i.e. $\iota_{f_{ij}^iX_i+f_{ij}^jX_j}
\omega=d\Upsilon_{ij}$. Then,
$$
d^2\Upsilon_{ij}=d\iota_{f_{ij}^iX_i+f_{ij}^jX_j}\omega=0,\qquad 1\leq i<j\leq r.
$$
Hence,
\begin{equation}\label{Eq:Con}
0=d(f_{ij}^idh_i+f_{ij}^jdh_j)=df_{ij}^i\wedge dh_i+df_{ij}^j\wedge dh_j,\qquad 1\leq i<j\leq r.
\end{equation}
Since $X_1\wedge\ldots\wedge X_r$ is not vanishing and the mapping $\omega^\flat:v\in TN\mapsto \omega(v,\cdot)\in T^*N$ is an isomorphism because $\omega$ is symplectic and therefore nondegenerate, one has that $dh_1\wedge\ldots\wedge dh_r\neq 0$. Then, $h_1,\ldots,h_r$ are functionally independent functions and some additional functions $y_1,\ldots,y_s$ can be added to them so as to obtain a coordinate system on $T^*N$. Using this coordinate system in (\ref{Eq:Con}), it follows that
$$
\sum_{k=1}^r\frac{\partial f_{ij}^i}{\partial h_k}dh_k\wedge dh_i+\sum_{l=1}^s\frac{\partial f_{ij}^i}{\partial y_l}dy_l\wedge dh_i+\sum_{k=1}^r\frac{\partial f_{ij}^j}{\partial h_k}dh_k\wedge dh_j+\sum_{l=1}^s\frac{\partial f_{ij}^j}{\partial y_l}dy_l\wedge dh_j=0,\quad 1\leq i<j\leq r.
$$
Since $i\neq j$, the linear independence of the basis $dh_1,\ldots,dh_r,dy_1,\ldots,dy_s$ allows one to write
$$
\begin{gathered}
\frac{\partial f_{ij}^i}{\partial h_k}dh_k\wedge dh_i=0,\quad \frac{\partial f_{ij}^i}{\partial y_l}dy_l\wedge dh_i=0,\quad \frac{\partial f_{ij}^j}{\partial y_l}dy_l\wedge dh_j=0,\quad \frac{\partial f_{ij}^j}{\partial h_k}dh_k\wedge dh_j=0,\,\,\,k\notin \{i,j\}\\
\left(\frac{\partial f_{ij}^i}{\partial h_j}-\frac{\partial f_{ij}^j}{\partial h_i}\right)dh_i\wedge dh_j=0.
\end{gathered}
$$
The four equalities in the first line above give that $f_{ij}^i=f_{ij}^i(h_i,h_j)$ and $f_{ij}^j=f_{ij}^j(h_i,h_j)$. Moreover, the second line above yields
$$
\frac{\partial f_{ij}^i}{\partial h_j}=\frac{\partial f_{ij}^j}{\partial h_i},\qquad 1\leq i<j\leq r.
$$
Consequently, there exists a series of functions $h_{ij}=h_{ij}(h_i,h_j)$, with $1\leq i<j\leq r$, such that
$$
f_{ij}^i=\frac{\partial h_{ij}}{\partial h_i},\qquad f_{ij}^j=\frac{\partial h_{ij}}{\partial h_j},\qquad 1\leq i<j\leq r.
$$
Let us prove the converse. If $\{h_i,h_j\}=h_{ij}(h_i,h_j)$ for $1\leq i,j\leq r$, then
$$
d\{h_i,h_j\}=\frac{\partial h_{ij}}{\partial h_i}dh_{i}+\frac{\partial h_{ij}}{\partial h_j}dh_{j},\qquad 1\leq i<j\leq r.
$$
Since $d\{h_i,h_j\},h_i,h_j$ are the Hamiltonian functions for $-[X_i,X_j],X_i,X_j$, respectively, the above expression implies that
$$
-[X_i,X_j]=\frac{\partial h_{ij}}{\partial h_i}X_{i}+\frac{\partial h_{ij}}{\partial h_j}X_{j},\qquad 1\leq i<j\leq r,
$$
and the vector fields $X_1,\ldots,X_r$ form a quasi-rectifiable family.

\end{proof}
The above theorem justifies the following definition.
\begin{definition}A family of functions $h_1,\ldots,h_r$ on a symplectic manifold $(N,\omega)$ is {\it quasi-rectifiable} if  there exist functions $h_{ij}(h_i,h_j)$, with $1\leq i<j\leq r$, such that
$$
\{h_i,h_j\}=h_{ij}(h_i,h_j),\qquad 1\leq i<j\leq r.
$$    
\end{definition}

The above definition covers, as a particular case, the functions defining a completely integrable or  superintegrable Hamiltonian system. In fact, in this case, one has a series of functions $h_1,\ldots,h_r$ such that $\{h_i,h_j\}=0$ for $1\leq i<j\leq r$ and $2r=\dim N$ for the integrable case, or $r>\dim N/2$ for the superintegrable one. Expressions of the above type may also occur in the theory of deformation of Hamiltonian systems with Poisson bialgebras introduced in \cite{BCFHL17} and  developed further in \cite{FCH23}. 

Quasi-rectifiable families of Hamiltonian vector fields cannot, in general, become families of commuting Hamiltonian vector fields by rescalings. Some additional conditions must be imposed on the functions $h_{ij}$ in order to ensure this. The following proposition analyses necessary and sufficient conditions for $X_1,\ldots,X_r$ to form a quasi-rectifiable family of Hamiltonian vector fields that can be rescaled to commuting Hamiltonian vector fields.

\begin{proposition} Let $X_1,\ldots,X_r$ be a quasi-rectifiable family of Hamiltonian vector fields on a manifold $N$ relative to a symplectic form $\omega$ with Hamiltonian functions $h_1,\ldots,h_r$, respectively. Then, $X_1,\ldots,X_r$ can be rescaled into a family of Hamiltonian commuting vector fields if and only if the non-vanishing Poisson bracket between their Hamiltonian functions is $\{h_i,h_j\}=H_i(h_i)H_j(h_j)$ for $1\leq i<j\leq r$  and some functions $H_1,\ldots,H_r\in C^\infty(\mathbb{R})$.
\end{proposition}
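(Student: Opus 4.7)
I treat the two implications separately, anchored by a single observation about rescalings that preserve the Hamiltonian property. Namely, for a (nonvanishing) rescaling $Y_i:=f_iX_i$ to be Hamiltonian with respect to $\omega$, the one-form $\iota_{Y_i}\omega=f_i\,dh_i$ must be closed, so $df_i\wedge dh_i=0$. Since $X_i$ is a nontrivial Hamiltonian vector field (so $dh_i\neq 0$ on the relevant open set) and $f_i$ is nonvanishing, this forces $f_i=\phi_i(h_i)$ for a smooth function $\phi_i$ of a single variable. Consequently $Y_i$ is the Hamiltonian vector field $X_{F_i}$ of any primitive $F_i$ with $F_i'=\phi_i$. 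I would open the proof with this reduction once and for all.

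For the forward direction, assume that after rescaling the $Y_i=X_{F_i}$ are pairwise commuting Hamiltonian vector fields. Using the identity $[X_F,X_G]=-X_{\{F,G\}}$ (employed in the proof of Theorem \ref{Th:HamRec}) and the nondegeneracy of $\omega$ (so that Casimirs are locally constant), commutativity is equivalent to $\{F_i,F_j\}$ being a locally constant $c_{ij}\in\mathbb{R}$. The Leibniz rule for functions of a single Hamiltonian then yields
\[
\{F_i(h_i),F_j(h_j)\}=F_i'(h_i)F_j'(h_j)\{h_i,h_j\}=\phi_i(h_i)\phi_j(h_j)\{h_i,h_j\}=c_{ij}.
\]
Thus, wherever $\{h_i,h_j\}$ does not vanish, $c_{ij}\neq 0$ and
\[
\{h_i,h_j\}=\frac{c_{ij}}{\phi_i(h_i)\,\phi_j(h_j)},
\]
which exhibits $\{h_i,h_j\}$ as the product of a function of $h_i$ alone and a function of $h_j$ alone. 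Setting $H_i:=1/\phi_i$ (up to a harmless rescaling of each $\phi_i$ by a constant, which absorbs the constants $c_{ij}$ into the single-variable factors) puts this in the claimed form $\{h_i,h_j\}=H_i(h_i)H_j(h_j)$.

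For the converse, assume $\{h_i,h_j\}=H_i(h_i)H_j(h_j)$ with each $H_i$ nonvanishing on the domain of interest. Define $\phi_i(h_i):=1/H_i(h_i)$ and $F_i(h_i):=\int\phi_i(h_i)\,dh_i$. The opening reduction guarantees that $Y_i:=\phi_i(h_i)X_i$ is a genuine (nonvanishing) rescaling of $X_i$ and is the Hamiltonian vector field of $F_i(h_i)$. The same chain rule now gives
\[
\{F_i,F_j\}=\phi_i(h_i)\phi_j(h_j)H_i(h_i)H_j(h_j)=1,
\]
which is constant, so $X_{\{F_i,F_j\}}=0$ and consequently $[Y_i,Y_j]=0$, producing a commuting family of Hamiltonian vector fields.

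The step I expect to require the most care is the bookkeeping of the constants $c_{ij}$ in the forward direction: the natural computation produces $\{h_i,h_j\}=c_{ij}H_i(h_i)H_j(h_j)$, and one must justify that these constants can be absorbed into the single-variable factors $H_i$ (using the freedom $\phi_i\mapsto\lambda_i\phi_i$, under which $c_{ij}\mapsto\lambda_i\lambda_j c_{ij}$) so the statement reads exactly as written. The converse direction is then essentially a verification once the key reduction $f_i=\phi_i(h_i)$ is in hand.
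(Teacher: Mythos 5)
Your proof is correct and reaches the stated equivalence by a genuinely different route from the paper's. The paper expands $[f_1X_1,f_2X_2]=0$ directly, uses $X_2f_1=\{f_1,h_2\}=(\partial_{h_1}f_1)h_{12}$ to extract the pair of equations $\partial_{h_k}\ln f_k=-\partial_{h_k}\ln h_{12}$ for $k=1,2$, and then argues that $\ln h_{12}$ must split additively as $F_1(h_1)+F_2(h_2)$, reading the converse off the same equations. You instead stay at the level of Hamiltonian functions: after the shared reduction $f_i=\phi_i(h_i)$ (which the paper also establishes via $d(\iota_{f_iX_i}\omega)=df_i\wedge dh_i$), you invoke $[X_F,X_G]=-X_{\{F,G\}}$ and the nondegeneracy of $\omega$ to translate ``the rescaled fields commute'' into ``$\{F_i,F_j\}$ is locally constant'', after which the Leibniz rule settles both implications in one line. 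This buys a cleaner argument: no integration of logarithmic-derivative equations, and a converse that is a pure computation. The one delicate point --- which you rightly single out --- is absorbing the constants $c_{ij}=\{F_i,F_j\}$ into the single-variable factors: the freedom $\phi_i\mapsto\lambda_i\phi_i$ supplies only $r$ parameters against $r(r-1)/2$ constants, so for $r\geq 3$ producing one global family $H_1,\ldots,H_r$ valid for all pairs simultaneously imposes compatibility (and, over $\mathbb{R}$, sign) conditions on the $c_{ij}$ that need not hold; the statement is unproblematic if read pairwise, i.e.\ each nonvanishing $\{h_i,h_j\}$ factors as a function of $h_i$ times a function of $h_j$. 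The paper's own proof has exactly the same gap --- it treats the pair $(1,2)$ and asserts that ``the same applies to all of the remaining commutators'' --- so your argument matches the paper's level of rigour while making the issue explicit rather than hiding it.
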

\begin{proof} Let $h_1,h_2$ be Hamiltonian functions for $X_1,X_2$. As $X_1,X_2$ are quasi-rectifiable, Proposition \ref{Th:HamRec} shows that
$\{h_1,h_2\}=h_{12}(h_1,h_2)$ for some function $h_{12}\in C^\infty(\mathbb{R}^2)$. Then,
$$
[X_1,X_2]=-\frac{\partial h_{12}}{\partial h_1}X_1-\frac{\partial h_{12}}{\partial h_2}X_2.
$$
Note that a vector field $f_iX_i$, with $i=1,\ldots,r$, is again a Hamiltonian vector field if and only if $f_i=f_i(h_i)$. In fact,
$$
d\iota_{f_iX_i}\omega=df_i\wedge dh_i
$$
is equal to zero if and only if $f_i=f_i(h_i)$. Using this result, let us rectify $X_1,X_2$ by rescaling via two non-vanishing functions $f_1,f_2$, namely
$$
0=[f_1X_1,f_2X_2]=-f_2(X_2f_1)X_1+f_1(X_1f_2)X_2-f_1f_2\frac{\partial h_{12}}{\partial h_1}X_1-f_1f_2\frac{\partial h_{12}}{\partial h_2}X_2.
$$
Since $X_2f_1=\{f_1,h_2\}=(\partial_{h_1}f_1)\{h_1,h_2\}=(\partial_{h_1}f_1)h_{12}$ and $f_1,f_2$ do not vanish, it follows that
$$
-\frac{\partial f_1}{\partial h_1}h_{12}-f_1\frac{\partial h_{12}}{\partial h_1}=0,\qquad -\frac{\partial f_2}{\partial h_2}h_{12}-f_2\frac{\partial h_{12}}{\partial h_2}=0.
$$
If $h_{12}\neq 0$, and assuming without loss of generality that $f_1,f_2>0$, one gets that
\begin{equation}\label{eq:Step}
\frac{\partial \ln f_1}{\partial h_1}=-\frac{\partial \ln h_{12}}{\partial h_1},\qquad \frac{\partial\ln f_2}{\partial h_2}=-\frac{\partial \ln h_{12}}{\partial h_2}.
\end{equation}
Since the left-hand sides of the above equations depend only on $h_1$ and $h_2$, respectively, $\ln h_{12}$ has to be a linear combination of two functions depending on $
h_1$ and $h_2$. Hence,
$$
\ln h_{12}=F_1(h_1)+F_2(h_2)
$$
for some functions $F_1,F_2\in C^\infty(\mathbb{R})$. Hence, $h_{12}=H_1(h_1)H_2(h_2)$ for some functions $H_1,H_2$ if $
\{h_1,h_2
\}\neq 0$. If $h_{12}=0$, the decomposition still holds. The same applies to all  of the remaining commutators $\{h_i,h_j\}$, with $1\leq i<j\leq r$.

The proof of the converse statement is an immediate consequence of (\ref{eq:Step}).
\end{proof}

 \section{Analysis of quasi-rectifiable Lie algebras}\label{Sec::ClaRecLieAlg}

This section defines and analyses quasi-rectifiable Lie algebras not necessarily related to Lie algebras of vector fields. The practical relevance of this concept and their applications will be developed in Section \ref{Se:Hydro}. In particular, we will show that quasi-rectifiable Lie algebras appear naturally in the analysis of hydrodynamic-type equations by means of Riemann invariants  \cite{Gr23,GH07}. Moreover, quasi-rectifiable Lie algebras will also be related to the so-called PDE Lie systems and their applications to hydrodynamic-type equations \cite{CGM07}.

\subsection{Definition and general properties}
Let us  define and prove some general results concerning quasi-rectifiable Lie algebras.

\begin{definition} A Lie algebra $\mathfrak{g}$  is \textit{quasi-rectifiable} if it admits a basis $
\{e_1,\ldots,e_r\}$ such that the Lie bracket of any two elements of the basis belongs to the linear space spanned by them. The basis $\{e_1,\ldots,e_r\}$ is called {\it quasi-rectifiable}. If $\mathfrak{g}$ admits no quasi-rectifiable basis, $\mathfrak{g}$ is called a {\it non quasi-rectifiable algebra}.  
\end{definition}

It is worth noting that quasi-rectifiable Lie algebras are defined in terms of a basis-dependent condition. Indeed, if a basis of a Lie algebra $\mathfrak{g}$ is quasi-rectifiable, it is possible that another basis of $\mathfrak{g}$ will not be so. It is therefore relevant to characterise algebraically/geometrically when a Lie algebra admits a quasi-rectifiable basis.  

Recall, for instance, that we proved in (\ref{Eq:basis_sl2}) that the Lie algebra $\mathfrak{sl}_2$ admits a basis  $\{e_1,e_2,e_3\}$ such that
$$
[e_1,e_2]=e_2-\frac 12 e_1,\qquad, [e_1,e_3]=-e_3-e_1,\qquad [e_2,e_3]=e_2-\frac 12 e_3.  
$$
Then, $\mathfrak{sl}_2$ is a quasi-rectifiable Lie algebra and this basis is a quasi-rectifiable one. Nevertheless, it is known that bases for $\mathfrak{sl}_2$ are more frequently written so that they admit structure constants of the form (\ref{ConRel}), which show that such bases are not quasi-rectifiable.

Note that the Lie algebra  $\mathfrak{so}_3$ of generators of rotations in the three-dimensional space is isomorphic to $\mathbb{R}^3$ endowed with the cross product $\times$. It follows that $\mathbb{R}^3$ is not a quasi-rectifiable Lie algebra relative to the cross product. In fact, the cross product of any two linearly independent elements in $\mathbb{R}^3$ is a third, linearly independent, one. The isomorphism between $\mathfrak{so}_3$ and $\mathbb{R}^3$ shows that $\mathfrak{so}_3$ is not quasi-rectifiable either.

Let us use the following method to characterise all quasi-rectifiable Lie algebras. The procedure is based on the description of an algebraic equation whose solutions may give rise to quasi-rectifiable bases. 

\begin{theorem}\label{Th::RectCri} A Lie algebra $\mathfrak{g}$ is quasi-rectifiable if and only if there exists a dual basis $\{e^1,\ldots,e^r\}$ of elements in $\mathfrak{g}^*$ such that 
\begin{equation}\label{Eq:ConRec}
e^i\wedge \delta e^i=0,\qquad i=1,\ldots,r,
\end{equation}
where $
\delta:\mathfrak{g}^*\rightarrow \mathfrak{g}^*\wedge \mathfrak{g}^*$ is equal to minus the transpose of the Lie bracket $[\cdot,\cdot]:v_1\wedge v_2\in \mathfrak{g}\wedge  \mathfrak{g}\mapsto [v_1,v_2]\in \mathfrak{g}$, i.e. 
\begin{equation}\label{eq::Trans}
\delta=-[\cdot,\cdot]^T.
\end{equation}
\end{theorem}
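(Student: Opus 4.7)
The plan is to work out both directions via a direct structure-constant calculation, using the identification of $\delta$ with the Chevalley--Eilenberg coboundary on one-cochains. I will fix a basis $\{e_1,\ldots,e_r\}$ of $\mathfrak{g}$ with dual basis $\{e^1,\ldots,e^r\}$ of $\mathfrak{g}^*$, and write the structure constants $[e_j,e_k]=\sum_{i=1}^r c_{jk}^i e_i$ for $j<k$. The quasi-rectifiable condition on the basis amounts to: for every pair $j<k$, one has $c_{jk}^i=0$ whenever $i\notin\{j,k\}$. The whole task is then to show that the single wedge identity $e^i\wedge\delta e^i=0$ encodes exactly the same vanishing.

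First I would unpack the definition of $\delta=-[\cdot,\cdot]^T$ on basis elements. Evaluating on $e_j\wedge e_k$ gives $(\delta e^i)(e_j,e_k)=-e^i([e_j,e_k])=-c_{jk}^i$, so in the canonical basis of $\mathfrak{g}^*\wedge\mathfrak{g}^*$ one obtains
\begin{equation*}
\delta e^i=-\sum_{j<k}c_{jk}^i\,e^j\wedge e^k,\qquad i=1,\ldots,r.
\end{equation*}
Wedging with $e^i$ and discarding the terms in which $i\in\{j,k\}$ (they vanish automatically), I get
\begin{equation*}
e^i\wedge\delta e^i=-\sum_{\substack{j<k\\ i\notin\{j,k\}}}c_{jk}^i\,e^i\wedge e^j\wedge e^k.
\end{equation*}

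Now the family $\{e^a\wedge e^b\wedge e^c:a<b<c\}$ is a basis of $\Lambda^3\mathfrak{g}^*$, so after reordering each $e^i\wedge e^j\wedge e^k$ into increasing order (which only introduces a nonzero sign) the terms appearing above are linearly independent. Hence the condition $e^i\wedge\delta e^i=0$ is equivalent to $c_{jk}^i=0$ for all $j<k$ with $i\notin\{j,k\}$. Requiring this for every $i=1,\ldots,r$ is, after swapping the order of the quantifiers, precisely the statement that for every $j<k$ the bracket $[e_j,e_k]$ lies in $\mathrm{span}\{e_j,e_k\}$, i.e.\ that $\{e_1,\ldots,e_r\}$ is a quasi-rectifiable basis. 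This proves both implications at once: from a quasi-rectifiable basis we produce a dual basis satisfying \eqref{Eq:ConRec}, and conversely any dual basis fulfilling \eqref{Eq:ConRec} determines, by duality, a quasi-rectifiable basis of $\mathfrak{g}$.

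The only mildly delicate point, and the one I would write out carefully, is the bookkeeping of signs and the verification that the coefficient of each ordered trivector $e^a\wedge e^b\wedge e^c$ ($a<b<c$) in $e^i\wedge\delta e^i$ really is $\pm c_{jk}^i$ for the unique pair $\{j,k\}=\{a,b,c\}\setminus\{i\}$, so that no cancellations occur between distinct $e^i\wedge\delta e^i$ equations and no conditions are missed. Once this bookkeeping is in place the proof is a direct translation between the algebraic and the dual-form statements.
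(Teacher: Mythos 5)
Your proof is correct and follows essentially the same route as the paper's: both unpack $\delta e^i$ in terms of the structure constants of the dual basis and observe that $e^i\wedge\delta e^i=0$ annihilates exactly the coefficients $c_{jk}^i$ with $i\notin\{j,k\}$, which is precisely the quasi-rectifiability of the underlying basis. The only cosmetic difference is in the converse, where the paper factors $\delta e^i=e^i\wedge\vartheta^i$ for some covector $\vartheta^i$ while you expand in the basis of trivectors and invoke their linear independence; these are equivalent pieces of elementary multilinear algebra.
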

\begin{proof} If $\{e_1,\ldots,e_r\}$ is a quasi-rectifiable basis of $\mathfrak{g}$, then
$$
[e_i,e_j]=c_{ij}^ie_i+c_{ij}^je_j,\qquad 1\leq i<j\leq r.
$$
It is worth noting that we understand the Lie bracket $[\cdot,\cdot]:\mathfrak{g}\times\mathfrak{g}\rightarrow \mathfrak{g}$, which is bilinear, as a linear mapping $[\cdot,\cdot]:v_1\wedge v_2\in \mathfrak{g}\wedge\mathfrak{g}\mapsto [v_1,v_2]\in \mathfrak{g}$. Hence, given the dual basis $\{e^1,\ldots,e^r\}$ in $\mathfrak{g}^*$ and (\ref{eq::Trans}), we have 
$$
\delta e^i(e_j\wedge e_k)=-e^i([e_j,e_k])=-e^i(c_{jk}^je_j+c_{jk}^ke_k)=-\delta^i_jc_{jk}^j-c_{jk}^k\delta_k^i,\qquad 1\leq j<k\leq r.
$$
Hence, one can write
$$
\delta e^i=-\sum_{j=1}^re^i\wedge c_{ij}^ie^j,\qquad i=1,\ldots,r,
$$
which proves the direct part. 

The proof of the converse is as follows. By the assumption (\ref{Eq:ConRec}), there exist some covectors $\vartheta^1,\ldots,\vartheta^r\in \mathfrak{g}^*$ such that
$$
\delta e^i=e^i\wedge \vartheta^i,\qquad i=1,\ldots,r.
$$
  Hence, given the dual basis $\{e_1,\ldots,e_r\}$ to $\{e^1,\ldots,e^r\}$, one has
  $$
  e^i([e_j,e_k])=-\delta e^i(e_j,e_k)=-e^i\wedge \vartheta^i(e_j,e_k),\qquad i,j,k=1,\ldots,r,
  $$
  which means that the result will zero if $e_j$ and $e_k$ are different from $e_i$. Repeating this for the basis $\{e^1,\ldots,e^r\}$, one obtains that the Lie bracket $[e_j,e_k]$ is a linear combination of  $e_j$ and $e_k$. 
\end{proof}

Theorem \ref{Th::RectCri} shows that $\delta$ allows us to characterise whether a quasi-rectifiable basis is available. It is worth noting that the equation $\vartheta\wedge \delta \vartheta \neq 0$, with $\vartheta\in \mathfrak{g}^*$, can be used to characterise certain three-dimensional contact structures on Lie groups \cite{LR22}. Its similarity with its relation to (\ref{Eq:ConRec}) is immediate.

There are many algebraic results on quasi-rectifiable Lie algebras that can easily be obtained.  The following propositions are straightforward to prove.

\begin{proposition} Every quasi-rectifiable $r$-dimensional Lie algebra has Lie subalgebras of dimensions from zero to $r$.
\end{proposition}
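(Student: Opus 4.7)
The plan is to exploit the quasi-rectifiable basis directly and build a subalgebra of each intermediate dimension by taking the span of a subset of basis vectors. Let $\mathfrak{g}$ be quasi-rectifiable of dimension $r$, and let $\{e_1,\ldots,e_r\}$ be a quasi-rectifiable basis, so that for each pair $i<j$ one has $[e_i,e_j]\in \langle e_i,e_j\rangle$.

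For any $k$ with $0\leq k\leq r$, I would pick any index set $S\subseteq\{1,\ldots,r\}$ with $|S|=k$ and define $\mathfrak{h}_S:=\langle e_i : i\in S\rangle$. The key step is then to verify that $\mathfrak{h}_S$ is closed under the Lie bracket: by bilinearity it suffices to check brackets of basis elements, and for $i,j\in S$ the quasi-rectifiable condition gives $[e_i,e_j]\in\langle e_i,e_j\rangle\subseteq\mathfrak{h}_S$. Hence $\mathfrak{h}_S$ is a Lie subalgebra of dimension $k$. The boundary cases $k=0$ (the trivial subalgebra $\{0\}$) and $k=r$ (the whole algebra $\mathfrak{g}$) are immediate.

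Putting these together yields subalgebras of every dimension from $0$ to $r$, proving the proposition. There is no real obstacle here: the content of the statement is essentially just the observation that the quasi-rectifiable defining condition is inherited by the span of any subfamily of basis vectors, which is precisely why the notion is geometrically useful. The only subtlety worth flagging is that the subalgebras produced depend on the chosen quasi-rectifiable basis, so one does not obtain an intrinsic filtration but rather a large supply of subalgebras, one for each subset $S$.
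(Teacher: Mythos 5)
Your argument is correct and is essentially the paper's proof: the paper takes the nested chain $0\subset\langle e_1\rangle\subset\langle e_1,e_2\rangle\subset\cdots\subset\langle e_1,\ldots,e_r\rangle$, which is just the special case $S=\{1,\ldots,k\}$ of your construction, and the closure under the bracket follows in both cases from $[e_i,e_j]\in\langle e_i,e_j\rangle$. Your remark that any subset of the basis works is a mild (and correct) strengthening, but not a different method.
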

\begin{proof}
Choose a quasi-rectifiable basis $\{e_1,\ldots,e_r\}$ of the Lie algebra and define the basis
$$
\{0,\langle e_1\rangle, \langle e_1,e_2\rangle,\ldots,\langle e_1,\ldots,e_r\rangle\}.
$$\end{proof}

\begin{proposition} The direct sum of quasi-rectifiable Lie algebras is quasi-rectifiable.
\end{proposition}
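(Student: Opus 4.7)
The plan is to argue directly by exhibiting a quasi-rectifiable basis for the direct sum, built by concatenating quasi-rectifiable bases of the summands. It is enough to handle two summands; the general case then follows by a trivial induction on the number of factors.

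So suppose $\mathfrak{g}=\mathfrak{g}_1\oplus\mathfrak{g}_2$ with each $\mathfrak{g}_a$ quasi-rectifiable. First I would fix quasi-rectifiable bases $\{e_1,\ldots,e_r\}$ of $\mathfrak{g}_1$ and $\{f_1,\ldots,f_s\}$ of $\mathfrak{g}_2$, so that $[e_i,e_j]\in\langle e_i,e_j\rangle$ for $1\le i<j\le r$ and $[f_k,f_l]\in\langle f_k,f_l\rangle$ for $1\le k<l\le s$. Then I would propose the concatenated basis $\mathcal{B}=\{e_1,\ldots,e_r,f_1,\ldots,f_s\}$ of $\mathfrak{g}$ and check the defining condition of quasi-rectifiability for each pair of elements of $\mathcal{B}$.

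The verification splits into three cases. If both elements come from $\{e_1,\ldots,e_r\}$ (resp.\ from $\{f_1,\ldots,f_s\}$), the required containment is given by the assumed quasi-rectifiability of $\mathfrak{g}_1$ (resp.\ of $\mathfrak{g}_2$). If one element is some $e_i$ and the other some $f_k$, then by the definition of the direct sum of Lie algebras the Lie bracket $[e_i,f_k]$ vanishes, and $0$ lies in every linear subspace, in particular in $\langle e_i,f_k\rangle$. Hence $\mathcal{B}$ is a quasi-rectifiable basis of $\mathfrak{g}$.

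There is essentially no obstacle here: the only point to be careful about is the mixed case, where one must invoke the defining property of a direct sum of Lie algebras, namely $[\mathfrak{g}_1,\mathfrak{g}_2]=0$. An induction on the number of summands, using associativity of $\oplus$, then extends the result from two to arbitrarily many quasi-rectifiable Lie algebras.
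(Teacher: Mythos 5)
Your proof is correct, and it is precisely the intended argument: the paper leaves this proposition unproved as ``straightforward,'' and concatenating quasi-rectifiable bases of the summands, with the mixed brackets vanishing by definition of the direct sum, is exactly the expected verification.
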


More interestingly, one has the following result.

\begin{proposition} If $\rho:\mathfrak{g}\rightarrow \mathfrak{h}$ is a surjective Lie algebra morphism and $\mathfrak{g}$ is a quasi-rectifiable Lie algebra, then $\mathfrak{h}$ is quasi-rectifiable.
\end{proposition}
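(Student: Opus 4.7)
The plan is to transport a quasi-rectifiable basis of $\mathfrak{g}$ along $\rho$ and extract from its image a basis of $\mathfrak{h}$ that still satisfies the quasi-rectifiability condition. First I would fix a quasi-rectifiable basis $\{e_1,\ldots,e_r\}$ of $\mathfrak{g}$, so that $[e_i,e_j]=c_{ij}^ie_i+c_{ij}^je_j$ for some scalars $c_{ij}^i,c_{ij}^j\in\mathbb{R}$ and all $1\leq i<j\leq r$. Applying the morphism $\rho$ to this identity and using that $\rho$ preserves the bracket yields
\begin{equation*}
[\rho(e_i),\rho(e_j)]=c_{ij}^i\rho(e_i)+c_{ij}^j\rho(e_j),\qquad 1\leq i<j\leq r,
\end{equation*}
so the family $\{\rho(e_1),\ldots,\rho(e_r)\}$ has the defining quasi-rectifiability property, except possibly for the linear independence needed to form a basis.

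Next I would address this linear independence issue. Since $\rho$ is surjective and $\{e_1,\ldots,e_r\}$ spans $\mathfrak{g}$, the family $\{\rho(e_1),\ldots,\rho(e_r)\}$ spans $\mathfrak{h}$. I would therefore extract a maximal linearly independent subset, say $\{\rho(e_{i_1}),\ldots,\rho(e_{i_s})\}$ with $i_1<\cdots<i_s$, which is automatically a basis of $\mathfrak{h}$. Restricting the relation displayed above to indices drawn from $\{i_1,\ldots,i_s\}$ immediately gives that any bracket $[\rho(e_{i_a}),\rho(e_{i_b})]$ is a linear combination of $\rho(e_{i_a})$ and $\rho(e_{i_b})$, which is exactly what is needed for the extracted basis to be quasi-rectifiable.

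There is essentially no obstacle here: the real content is the observation that the quasi-rectifiability relations are preserved pairwise under Lie algebra morphisms, and that this property survives the passage from a spanning set to a basis precisely because it involves only the two indices at a time. The only point requiring a bit of care is that the coefficients $c_{ij}^i$ and $c_{ij}^j$ remain the same constants after applying $\rho$; this is guaranteed by $\rho$ being a linear map, so no recomputation of structure constants is needed. Hence $\mathfrak{h}$ admits the quasi-rectifiable basis $\{\rho(e_{i_1}),\ldots,\rho(e_{i_s})\}$ and the proposition follows.
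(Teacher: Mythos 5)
Your proof is correct and follows essentially the same route as the paper: push the quasi-rectifiable basis through $\rho$, note the pairwise relations are preserved, and extract from the spanning image a linearly independent subset that forms a quasi-rectifiable basis of $\mathfrak{h}$. No gaps.
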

\begin{proof} This proposition is due to the fact that there exists a quasi-rectifiable basis $\{e^1,\ldots, e^r\}$ of $\mathfrak{g}$ and $k$ of its elements, say $e_1,\ldots,e_k$, are such that $\rho(e_1),\ldots,\rho(e_k)$ form a basis of $\mathfrak{h}$ because $\rho$ is surjective. Since $[e_i,e_j]=\lambda_{ij}^ie_i+\lambda_{ij}^je_j$ for some constants $\lambda_{ij}^i,\lambda_{ij}^j$, it follows that $[\rho(e_i),\rho(e_j)]=\lambda_{ij}^i\rho(e_i)+\lambda_{ij}^j\rho(e_j)$. Hence, $\mathfrak{h}$ is quasi-rectifiable.
\end{proof}

The above proposition is indeed a method for finding non quasi-rectifiable Lie algebras in view of the behaviour of its quotients by Lie algebra ideals. 

Let us now provide a powerful approach for classifying general quasi-rectifiable Lie algebras. Recall that the equation determining whether a Lie algebra $\mathfrak{g}$ is quasi-rectifiable can be written as follows
\begin{equation}\label{eq:Charact}
\delta\left(\sum_{i=1}^r\lambda_ie^i\right)\wedge\left(\sum_{m=1}^r\lambda_me^m\right)=-\frac 12\sum_{1\leq i,j,k,m\leq r}\lambda_i\lambda_mc_{jk}^ie^j\wedge e^k\wedge e^m=0,
\end{equation}
where, as is standard, we assume that $\{e_1,\ldots,e_r\}$ is a basis of $\mathfrak{g}$ with structure constants $c_{ij}^k$, one has the dual basis $\{e^1,\ldots,e^r\}$ in $\mathfrak{g}^*$, and we consider $v=\sum_{i=1}^r\lambda_ie^i$ to be one of the elements of the dual basis to a quasi-rectifiable basis of $\mathfrak{g}$.

If we choose a basis of three-vectors for $\mathfrak{g}^*$, the coefficients of (\ref{eq:Charact}) in such a basis must be zero. This means that the coordinates of $v$ are solutions of a series of quadratic polynomial equations in the coordinates of $v$ in the chosen basis $\{e^1,\ldots,e^r\}$ of $\mathfrak{g}^*$. More specifically, (\ref{eq:Charact}) can be rewritten as 
\begin{equation*}
\delta\left(\sum_{i=1}^r\lambda_ie^i\right)\wedge\left(\sum_{m=1}^r\lambda_me^m\right)=-\sum_{1\leq j<k<m\leq r}\sum_{i=1}^r\lambda_i(\lambda_mc_{jk}^i+\lambda_jc_{km}^i+\lambda_kc_{mj}^i)e^j\wedge e^k\wedge e^m=0,
\end{equation*}
which allows us to define some polynomials
$$
P_{jkm}(\lambda_1,\ldots,\lambda_r)=\sum_{i=1}^r\lambda_i(\lambda_mc_{jk}^i+\lambda_jc_{km}^i+\lambda_kc_{mj}^i),\qquad 1\leq j<k<m\leq r,
$$
that must be zero on the elements of a rectifiable basis for $\mathfrak{g}$. The above polynomials can easily be derived through a mathematical computation program, and were used to construct the classification of four- and five-dimensional indecomposable Lie algebras detailed in Tables \ref{fig:Lie-algebra-classification4}, \ref{fig:Lie-algebra-classification5},  and \ref{fig:Lie-algebra-classification6}. It is worth noting that one can explain in detail how one of the Lie algebras in the above-mentioned tables is proved to not be quasi-rectifiable. 

If $\mathfrak{g}$ is quasi-rectifiable, then $\mathfrak{g}^*$ admits a basis, dual to a quasi-rectifiable basis of $\mathfrak{g}$, satisfying
$$
\nu^{(\alpha)}=\sum_{\mu=1}^r\lambda^{(\alpha)}_{\mu} e^\mu,\qquad \alpha=1,\ldots,r,
$$
and
$$
P_{jkm}(\lambda^{(\alpha)}_{1},\ldots,\lambda^{(\alpha)}_{r})=0,\qquad 1\leq j,k,m\leq r,\qquad \alpha=1,\ldots,r.
$$
Since $\nu^{(1)},\ldots,\nu^{(r)}$ form a basis, the $(r\times r)$-matrix of coefficients $\lambda^{(\alpha)}_\mu$ must have a determinant different from zero. Nevertheless, it frequently happens that one of the polynomials $P_{jkm}$ is such that all its zeros have a coordinate equal to zero, e.g. $-\lambda_1^2=0$. Let us prove by contradiction that the associated $\mathfrak{g}$ does not admit a quasi-rectifiable basis. If $\mathfrak{g}$ admitted such a basis $\nu^{(1)},\ldots,\nu^{(r)}$, all its elements would have a coordinate equal to zero. Then, the matrix of their coefficients would be equal to zero, and they could not form a basis. Hence, the Lie algebra $\mathfrak{g}$ is non quasi-rectifiable, which is a contradiction.

\subsection{On quasi-rectifiable two- and three-dimensional Lie algebras}
Let us classify all two- and three-dimensional quasi-rectifiable Lie algebras.

\begin{proposition} Every one- and two-dimensional Lie algebra is quasi-rectifiable and every basis is quasi-rectifiable.
\end{proposition}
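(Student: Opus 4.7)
The plan is to observe that the quasi-rectifiability condition becomes vacuous or automatic when the dimension is at most two, so there is essentially nothing to prove beyond unwinding the definition.

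First I would handle the one-dimensional case. Given any basis $\{e_1\}$ of a one-dimensional Lie algebra, the defining condition for quasi-rectifiability involves only pairs $e_i, e_j$ with $1 \leq i < j \leq r = 1$, and the set of such pairs is empty. Hence the condition is vacuously satisfied, and any (in fact the only up to scaling) basis is quasi-rectifiable.

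Next I would handle the two-dimensional case. Let $\mathfrak{g}$ be a two-dimensional Lie algebra with arbitrary basis $\{e_1, e_2\}$. The only nontrivial commutator to check is $[e_1, e_2]$, which a priori lies in $\mathfrak{g}$. But since $\{e_1, e_2\}$ is a basis of $\mathfrak{g}$, we have $\mathfrak{g} = \langle e_1, e_2 \rangle$, so $[e_1, e_2] \in \langle e_1, e_2 \rangle$ automatically. Writing $[e_1, e_2] = c^1_{12} e_1 + c^2_{12} e_2$ for some constants $c^1_{12}, c^2_{12} \in \mathbb{R}$, this is exactly the quasi-rectifiability condition $[e_i, e_j] = c^i_{ij} e_i + c^j_{ij} e_j$ for $1 \leq i < j \leq 2$. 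Since this worked for an arbitrarily chosen basis, every basis is quasi-rectifiable.

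There is no real obstacle here; the only subtlety is noticing that the dimension $r \leq 2$ makes the span $\langle e_i, e_j \rangle$ coincide with (or contain) the ambient space, trivialising the constraint. The proposition is essentially a sanity check confirming that the classification problem only becomes nontrivial starting in dimension three, which motivates the subsequent analysis in the paper.
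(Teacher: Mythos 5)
Your proof is correct and follows essentially the same argument as the paper: in dimension one the condition is vacuous, and in dimension two the single bracket $[e_1,e_2]$ automatically lies in $\langle e_1,e_2\rangle=\mathfrak{g}$ for any basis. No gaps; your version merely spells out the details slightly more explicitly.
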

\begin{proof}
All one-dimensional Lie algebras are Abelian and therefore quasi-rectifiable. Moreover, every basis is quasi-rectifiable because it has only one element. Two-dimensional Lie algebras have a basis of two elements, let us say $\{e_1,e_2\}$. Then,
$
[e_1,e_2]\in \langle e_1,e_2\rangle,
$
and the basis is quasi-rectifiable. Hence, every two-dimensional Lie algebra is also quasi-rectifiable.
\end{proof}

There are many other Lie algebras that can be shown to be quasi-rectifiable. A direct inspection of the commutation relations of $\mathfrak{s}_{4,3}$ in Table \ref{fig:Lie-algebra-classification4} shows that it is quasi-rectifiable.

If a Lie algebra of vector fields is quasi-rectifiable, then it is, as an abstract Lie algebra, quasi-rectifiable too. Despite that, there may exist a Lie algebra of vector fields that gives rise to a quasi-rectifiable abstract Lie algebra, but it is not a quasi-rectifiable Lie algebra of vector fields because the elements of its basis are not linearly independent at a generic point. For instance, consider the Lie algebra $V_2$ of vector fields on $\mathbb{R}$ spanned by 
$$
X_1=\frac{\partial}{\partial x},\qquad X_2=x\frac{\partial}{\partial x}.
$$
As an abstract Lie algebra, it is a quasi-rectifiable one because it is two-dimensional. On the other hand, any basis of vector fields $Y_1,Y_2$ of $V_2$ satisfies that $Y_1\wedge Y_2=0$. Hence, $V_2$ is not a quasi-rectifiable Lie algebra of vector fields.


 The values of $\vartheta \wedge \delta \vartheta$ have been determined for every three-dimensional Lie algebra due to the fact that it characterises certain contact forms \cite{LR22}. In our case, this will serve to establish whether we can obtain quasi-rectifiable three-dimensional Lie algebras. Let us give a first result to characterise quasi-rectifiable Lie algebra structures on three-dimensional Lie algebras. 

\begin{proposition} Let $\delta:\mathfrak{g}^*\rightarrow \mathfrak{g}^*\wedge\mathfrak{g}^*$ be equal to minus the transpose of the Lie bracket on a three-dimensional Lie algebra $\mathfrak{g}$. Then,  $\mathfrak{g}$ is quasi-rectifiable if the zeros of the three-vector function $\Upsilon:\vartheta\in \mathfrak{g}^*\mapsto \vartheta\wedge \delta \vartheta\in  \Lambda^3\mathfrak{g}^*$ are not contained in a plane of $\mathfrak{g}$. 
\end{proposition}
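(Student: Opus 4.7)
The plan is to reduce the statement to Theorem~\ref{Th::RectCri}. By that theorem, $\mathfrak{g}$ is quasi-rectifiable if and only if we can exhibit a basis $\{\vartheta^1, \vartheta^2, \vartheta^3\}$ of $\mathfrak{g}^*$ such that $\vartheta^i \wedge \delta\vartheta^i = 0$ for $i=1,2,3$, i.e.\ such that each $\vartheta^i$ lies in the zero set $Z := \Upsilon^{-1}(0) \subset \mathfrak{g}^*$. The dual basis of such a $\{\vartheta^1,\vartheta^2,\vartheta^3\}$ in $\mathfrak{g}$ will then be quasi-rectifiable. So the task reduces to extracting three linearly independent elements of $Z$ from the hypothesis on its span. (I read the statement's ``plane of $\mathfrak{g}$'' as ``plane of $\mathfrak{g}^*$'', since $\Upsilon$ takes covectors as input.)

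First I would observe that $\Upsilon$ is homogeneous of degree two: indeed $\delta$ is linear and $\wedge$ is bilinear, so $\Upsilon(c\vartheta) = c\vartheta \wedge \delta(c\vartheta) = c^2\, \vartheta \wedge \delta\vartheta = c^2\Upsilon(\vartheta)$. Consequently $Z$ is closed under scalar multiplication, i.e.\ it is a union of lines through the origin (an algebraic cone in $\mathfrak{g}^*$).

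Next, I would use the linear-algebraic fact that if a subset of a three-dimensional vector space $\mathfrak{g}^*$ is not contained in any two-dimensional subspace, then the subset must span $\mathfrak{g}^*$: otherwise, its linear span would be a proper subspace of dimension at most two, contradicting the hypothesis. Applying this to $Z$ yields $\mathrm{span}(Z) = \mathfrak{g}^*$. Since $\mathfrak{g}^*$ has dimension three, one can then choose three linearly independent covectors $\vartheta^1, \vartheta^2, \vartheta^3 \in Z$, for instance by a successive-selection argument: pick $\vartheta^1 \in Z \setminus \{0\}$, then $\vartheta^2 \in Z$ not in the line $\langle \vartheta^1\rangle$ (which exists because $Z \not\subset \langle\vartheta^1\rangle$), and finally $\vartheta^3 \in Z$ not in the plane $\langle \vartheta^1, \vartheta^2\rangle$ (which exists by hypothesis). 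By construction $\{\vartheta^1,\vartheta^2,\vartheta^3\}$ is a basis of $\mathfrak{g}^*$ and every $\vartheta^i$ satisfies $\vartheta^i \wedge \delta\vartheta^i = 0$, so Theorem~\ref{Th::RectCri} applies and yields quasi-rectifiability of $\mathfrak{g}$.

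There is no real obstacle here; the only subtlety is the implicit identification of the domain of $\Upsilon$ (the covector side) with $\mathfrak{g}^*$ rather than $\mathfrak{g}$. If one is being careful, one should also note that the successive-selection argument genuinely uses the hypothesis at the last step (not merely $\mathrm{span}(Z) = \mathfrak{g}^*$), since a priori $Z$ might be a union of two planes through the origin and still span $\mathfrak{g}^*$; but the assumption ``not contained in a plane'' rules this out precisely at the stage where one needs a third vector outside $\langle\vartheta^1,\vartheta^2\rangle$.
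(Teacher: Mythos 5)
Your proof is correct and follows essentially the same route as the paper: reduce to Theorem~\ref{Th::RectCri} and extract three linearly independent zeros of $\Upsilon$ from the hypothesis that the zero set is not contained in a plane (the paper states this in one line; you supply the successive-selection details and correctly flag the $\mathfrak{g}$ versus $\mathfrak{g}^*$ slip in the statement). Your closing caveat is harmless but unnecessary, since for a subset of a three-dimensional space ``not contained in any plane'' is equivalent to ``spans the whole space''.
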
 
\begin{proof} By applying  Theorem \ref{Th::RectCri} successively, one obtains that $\Upsilon(\vartheta)$ is a second-order polynomial in the coefficients of $\vartheta$ in a stable basis. The dual to a rectification basis is given by three zeros of  $\Upsilon(\vartheta)$ that must be linearly independent. Hence, they exist if and only if the set of zeros of $\Upsilon (\vartheta)$ is not contained in a plane, namely when they span a subspace of dimension three or higher in $\mathfrak{g}$. 
\end{proof}
$\bullet$ Case $\mathfrak{sl}_2$: 
The corresponding Lie bracket is an antisymmetric bilinear function that can be understood uniquely as a mapping $[\cdot,\cdot]:v\wedge w\in\mathfrak{sl}_2\wedge \mathfrak{sl}_2\mapsto [v,w]\in \mathfrak{sl}_2$. Defining the map $\delta:\mathfrak{sl}_2^*\rightarrow \mathfrak{sl}_2^*\wedge\mathfrak{sl}_2^*$ as $\delta = -[\cdot,\cdot]^T$, one has, in particular, that $\delta(e^1)\in \mathfrak{sl}_2^*\wedge \mathfrak{sl}_2^*$. Take the basis $\{e_1,e_2,e_3\}$ of $\mathfrak{sl}_2$ appearing in Table \ref{fig:Lie-algebra-classification} and consider its dual basis $\{e^1,e^2,e^3\}$. Then,
$$
\begin{gathered}
\delta(e^1)(e_1\wedge e_2)=-e^1([e_1,e_2])=0\Rightarrow \delta(e^1)(e_1\wedge e_2)=0,\\
\delta(e^1)(e_1\wedge e_3)=-e^1([e_1,e_3])=0\Rightarrow \delta(e^1)(e_1\wedge e_3)=0,\\
\delta(e^1)(e_2\wedge e_3)=-e^1([e_2,e_3])=-1\Rightarrow \delta(e^1)(e_2\wedge e_3)=-1.    
\end{gathered}
$$
Since we define
$
\vartheta_1\wedge \vartheta_2(v_1,v_2)=\vartheta_1(v_1)\vartheta_2(v_2)-\vartheta_1(v_2)\vartheta_2(v_1)
$ for every $\vartheta_1,\vartheta_2\in \mathfrak{g}^*$ and $v_1,v_1\in \mathfrak{g}$ and
$$
(\vartheta_1\wedge \vartheta_2)(v_1\wedge v_2)=\det \left[\begin{array}{cc}\vartheta_1(v_1)&\vartheta_1(v_2)\\\vartheta_2(v_1)&\vartheta_2(v_2)\end{array}\right],
$$
we have $\delta(e^1)=e^3\wedge e^2$ since both sides act in the same manner as $\mathbb{R}$-valued bilinear mappings on $\mathfrak{sl}_2$. Proceeding similarly for $\delta(e^2),\delta(e^3)$, we obtain
\begin{gather*}
    \delta(e^1) = -e^1([\cdot,\cdot])=    e^3\wedge e^2\,,\quad
    \delta(e^2) = -e^2([\cdot,\cdot]) = - e^1\wedge e^2\,,\\
    \delta(e^3) = -e^3([\cdot,\cdot]) =  e^1\wedge e^3\,.
\end{gather*}
Using the standard isomorphism ${\rm Hom}(\mathfrak{sl}_2^*,\mathfrak{sl}_2^*\wedge \mathfrak{sl}_2^*)\simeq \mathfrak{sl}_2\otimes (\mathfrak{sl}_2^*\wedge \mathfrak{sl}_2^*)$, one has that 
$$ \delta =  e_1\otimes e^3\wedge e^2 -  e_2\otimes e^1\wedge e^2 +  e_3\otimes e^1\wedge e^3\,. $$
In this case, one can write a general element of $\mathfrak{g}^*$ as $\lambda_1e^1+\lambda_2e^2+\lambda_3e^3$ for some constants $\lambda_1,\lambda_2,\lambda_3$. Then, the equation (\ref{Eq:ConRec}) is
\begin{multline*}
    0 = \delta(\lambda_1e^1 + \lambda_2e^2 + \lambda_3e^3)\wedge(\lambda_1e^1 + \lambda_2e^2 + \lambda_3e^3)\\= \left( \lambda_1 e^3\wedge e^2 - \lambda_2e^1\wedge e^2 +  \lambda_3 e^1\wedge e^3 \right)\wedge(\lambda_1e^1 + \lambda_2e^2 + \lambda_3e^3)
    \\= -\left(\lambda_1^2 + 2\lambda_2\lambda_3\right) e^1\wedge e^2\wedge e^3\,.
\end{multline*}
Then, we look for a basis of elements of $\mathfrak{sl}_2^*$  such that $\lambda_1^2 + 2\lambda_2\lambda_3= 0$. For instance, one can choose the dual elements
$$
\vartheta^1=e^1-\frac 12 e^2+e^3,\qquad \vartheta^2=e^2,\qquad \vartheta^3=e^3
 $$   whose dual basis
 $$
 \vartheta_1=e_1,\qquad \vartheta_2=e_2+\frac 12 e_1,\qquad \vartheta_3=e_3-e_1
 $$
 satisfies  
 $$
 [\vartheta_1,\vartheta_2]=\vartheta_2,\qquad [\vartheta_1,\vartheta_3]=-\vartheta_3-\vartheta_1,\qquad [\vartheta_2,\vartheta_3]=-\frac 12\vartheta_3+\vartheta_2,
 $$
which shows that $\mathfrak{sl}_2$ is quasi-rectifiable. Note moreover that $\delta \vartheta^i\wedge\vartheta^i=0$ for $i=1,2,3$.
 
$\bullet$ Case $\mathfrak{r}_{3,\lambda}$, with  $\lambda\in (-1,1)$. As previously, define the map $\delta:\mathfrak{r}_{3,\lambda}^*\rightarrow \mathfrak{r}_{3,\lambda}^*\wedge\mathfrak{r}_{3,\lambda}^*$ as $\delta = -[\cdot,\cdot]^T$. Then,
\begin{gather*}
    \delta(e^1) = -e^1([\cdot,\cdot])=    e^1\wedge e^3\,,\quad
    \delta(e^2) = -e^2([\cdot,\cdot]) = - e^3\wedge e^2\,,\\
    \delta(e^3) = -e^3([\cdot,\cdot]) = 0\,,
\end{gather*}
and thus,
$$
    \delta =  e_1\otimes e^1\wedge e^3 -  \lambda e_2\otimes e^3\wedge e^2\,.
$$
Therefore,
\begin{multline*}
    0 = \delta(\lambda_1e^1 + \lambda_2\lambda e^2 + \lambda_3e^3)\wedge(\lambda_1e^1 + \lambda_2e^2 + \lambda_3e^3)\\= \left(  {\lambda_1}e^1\wedge e^3 - {\lambda_2\lambda} e^3\wedge e^2\right)\wedge(\lambda_1e^1 + \lambda_2e^2 + \lambda_3e^3)\\
    = \lambda_1\lambda_2(1-\lambda)e^1\wedge e^2\wedge e^3\,.
\end{multline*}
Then, the left-invariant contact forms on a Lie group with Lie algebra isomorphic to $\mathfrak{r}_{3,\lambda}$  are characterised by the condition $\lambda_1\lambda_2\neq 0$. 
In this case, an adequate basis for the dual is

$$
\vartheta^1=e^3,\qquad \vartheta^2=e^1,\quad \vartheta^3=e^2,
$$
which gives a basis for the Lie algebra given by
$$
\vartheta_1=e_1,\quad \vartheta_2=e_2,\quad \vartheta_3=e_3.
$$

$\bullet$ Case $\mathfrak{r}'_{3,\lambda\neq 0}$. Defining the map $\delta:\mathfrak{r}_{3,\lambda\neq 0}^{\prime \,*}\rightarrow \mathfrak{r}_{3,\lambda\neq 0}^{\prime \,*}\wedge\mathfrak{r}_{3,\lambda\neq 0}^{\prime \,*}$ as $\delta = -[\cdot,\cdot]^T$, we have
$$
    \delta(e^1) =  \lambda e^1\wedge e^3 -  e^3\wedge e^2\,,\quad
    \delta(e^2) = - e^1\wedge e^3 -  \lambda e^3\wedge e^2\,,\quad
    \delta(e^3) = 0\,,
$$
and thus,
$$
    \delta =  \lambda e_1\otimes e^1\wedge e^3 -  e_1\otimes e^3\wedge e^2 -  e_2\otimes e^1\wedge e^3 -  \lambda e_2\otimes e^3\wedge e^2\,.
$$
In this case,
\begin{multline*}
    0 = \delta(\lambda_1e^1 + \lambda_2e^2 + \lambda_3e^3)\wedge(\lambda_1e^1 + \lambda_2e^2 + \lambda_3e^3)\\
    = \left(  \lambda\lambda_1  e^1\wedge e^3 -  \lambda_1 e^3\wedge e^2 -  \lambda_2 e^1\wedge e^3 -  \lambda\lambda_2 e^3\wedge e^2 \right)\wedge(\lambda_1e^1 + \lambda_2e^2 + \lambda_3e^3)\\
    =  \left( \lambda_1^2 + \lambda_2^2\right) e^1\wedge e^2\wedge e^3\,.
\end{multline*}
Then,  only one linearly independent covector, $e^3$, satisfies the chosen properties. Hence, no basis can be chosen and $\mathfrak{r}'_{3,\lambda\neq 0}$ is not quasi-rectifiable.

The other cases can be computed similarly, as summarised in the following theorem.

\begin{theorem} The rectification polynomials for non-Abelian three-dimensional Lie algebras and the classification of quasi-rectifiable Lie algebras are given in Table  \ref{fig:Lie-algebra-classification}.
\end{theorem}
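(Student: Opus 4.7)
The plan is to complete the case-by-case analysis already initiated in the excerpt, using the classification of three-dimensional real Lie algebras (e.g., the Bianchi list) and applying Theorem \ref{Th::RectCri} uniformly. For each non-Abelian three-dimensional Lie algebra $\mathfrak{g}$, I would fix the structure constants $c_{ij}^k$ on a preferred basis $\{e_1,e_2,e_3\}$ as displayed in Table \ref{fig:Lie-algebra-classification}, build the dual coderivation $\delta = -[\cdot,\cdot]^T$, and write the rectification polynomial
\begin{equation*}
\Upsilon(\vartheta) \;=\; \vartheta\wedge \delta\vartheta \;=\; P_{123}(\lambda_1,\lambda_2,\lambda_3)\, e^1\wedge e^2\wedge e^3,
\end{equation*}
for $\vartheta = \lambda_1 e^1 + \lambda_2 e^2 + \lambda_3 e^3$. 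Since $\dim \mathfrak{g}=3$, there is exactly one such polynomial, and it is a quadratic form in $(\lambda_1,\lambda_2,\lambda_3)$. By Theorem \ref{Th::RectCri}, $\mathfrak{g}$ is quasi-rectifiable if and only if the zero locus $Z(P_{123})\subset \mathfrak{g}^*$ contains three linearly independent covectors.

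After the cases $\mathfrak{sl}_2$, $\mathfrak{r}_{3,\lambda}$, and $\mathfrak{r}'_{3,\lambda\neq 0}$ already handled in the text, there remain the Heisenberg algebra $\mathfrak{h}_3$ (where $[e_1,e_2]=e_3$), the algebra $\mathfrak{r}_3$ (where $[e_1,e_3]=e_1$, $[e_2,e_3]=e_1+e_2$), the boundary case $\mathfrak{r}_{3,1}$, and $\mathfrak{so}(3)$. For each I would compute $\delta(e^i)$ by evaluating $-e^i([e_j,e_k])$ on every pair of basis vectors, assemble $\Upsilon(\vartheta)$, and inspect the resulting quadratic form. The case $\mathfrak{so}(3)$ is already settled in the excerpt by the $\mathbb{R}^3$-with-cross-product argument, and is reconfirmed by $P_{123}=\lambda_1^2+\lambda_2^2+\lambda_3^2$, whose only real zero is the origin. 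For $\mathfrak{h}_3$ the excerpt's Heisenberg-group example shows directly that the algebra is not quasi-rectifiable (and the polynomial vanishes identically on the plane $\lambda_3=0$, leaving a two-dimensional zero set). For $\mathfrak{r}_3$, the polynomial should degenerate to $\lambda_1^2$ or a similar square, showing that every zero has a fixed coordinate equal to zero; by the obstruction argument already spelled out at the end of Section \ref{Sec::ClaRecLieAlg}, no quasi-rectifiable basis can exist. For $\mathfrak{r}_{3,\lambda}$ with $\lambda=\pm 1$, the polynomial factors as $\lambda_1\lambda_2(1-\lambda)$ or its analogue, and one checks whether three linearly independent zeros survive.

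The main obstacle is the linear independence check: whenever $P_{123}$ does vanish on a set larger than a hyperplane, one must still exhibit three zeros that are linearly independent in $\mathfrak{g}^*$ and verify that the dual basis $\{\vartheta_1,\vartheta_2,\vartheta_3\}$ they determine really satisfies $[\vartheta_i,\vartheta_j]\in\langle \vartheta_i,\vartheta_j\rangle$. The excerpt's computation for $\mathfrak{sl}_2$ (choosing $\vartheta^1=e^1-\tfrac12 e^2+e^3$, $\vartheta^2=e^2$, $\vartheta^3=e^3$ and verifying the bracket relations directly) provides the template: in each positive case I would pick an explicit rational parametrisation of the quadric $\{P_{123}=0\}$, extract three zeros whose matrix of coefficients is invertible, and verify the bracket condition on the resulting dual basis. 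In the negative cases, the obstruction is automatic once $Z(P_{123})$ is shown to be contained in a plane of $\mathfrak{g}^*$ (or, equivalently, once $P_{123}$ is seen to be a perfect square or a definite quadratic form).

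Once all Bianchi types have been treated, the information assembled (the explicit form of $P_{123}$, whether $\mathfrak{g}$ is quasi-rectifiable, and an illustrative quasi-rectifiable basis when applicable) is precisely what populates Table \ref{fig:Lie-algebra-classification}, proving the theorem. I would present the calculations in a single table rather than as a chain of lemmas, so that the classification and the polynomials are recorded simultaneously.
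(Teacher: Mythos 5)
Your proposal is correct and follows essentially the same route as the paper: a case-by-case evaluation of $\vartheta\wedge\delta\vartheta$ via Theorem \ref{Th::RectCri}, checking whether the zero locus of the resulting quadratic contains three linearly independent covectors and exhibiting an explicit quasi-rectifiable basis in the positive cases (the paper's own proof is exactly the worked examples $\mathfrak{sl}_2$, $\mathfrak{r}_{3,\lambda}$, $\mathfrak{r}'_{3,\lambda\neq 0}$ followed by ``the other cases can be computed similarly''). The only nit is that your list of remaining cases omits $\mathfrak{r}'_{3,0}$ and does not name $\mathfrak{r}_{3,-1}$ explicitly, but both fall to the identical computation ($P_{123}=\lambda_2^2+\lambda_3^2$ and $P_{123}=\lambda_2\lambda_3$, respectively), so nothing essential is missing.
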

\begin{table}[ht]
    \centering
    \begin{tabular}{|c|c|c|c|c|c|}
        \hline
        Lie algebra $\mathfrak{g}$& $[e_1,e_2]$ & $[e_1,e_3]$ & $[e_3,e_2]$ & rectification polynomials& quasi-rectifiable \\
        \hline
        $\mathfrak{sl}_2$ & $e_2$ & $-e_3$ & $-e_1$ & $\lambda_1^2 + 2\lambda_2\lambda_3$ & Yes\\
        \hline
        $\mathfrak{su}_2$ & $e_3$ & $-e_2$ & $-e_1$ & $\lambda_1^2 + \lambda_2^2 + \lambda_3^2$& No \\
        \hline
        $\mathfrak{h}_3$ & $e_3$ & $0$ & $0$ & $\lambda_3$ & No \\
        \hline
        $\mathfrak{r}'_{3,0}$ & $-e_3$ & $e_2$ & $0$ & $\lambda_2^2 + \lambda_3^2 $& No \\
        \hline
        $\mathfrak{r}_{3,-1}$ & $e_2$ & $-e_3$ & $0$ & $\lambda_2\lambda_3$ & Yes\\
        \hline
        $\mathfrak{r}_{3,1}$ & $e_2$ & $e_3$ & $0$ & 0& Yes\\
        \hline
        $\mathfrak{r}_{3}$ & $0$ & $-e_1$ & $e_1 + e_2$ & $\lambda_1 $ & No\\
        \hline
        $\mathfrak{r}_{3,\lambda}$ & $0$ & $-e_1$ & $\lambda e_2$ & $ \lambda_1\lambda_2$ & Yes\\
        \hline
        $\mathfrak{r}'_{3,\lambda\neq 0}$ & $0$ & $e_2-\lambda e_1$ & $\lambda e_2 + e_1$ & $\lambda_1^2 + \lambda_2^2$ & No \\
        \hline
    \end{tabular}
    \caption{Classification of quasi-rectifiable non-Abelian three-dimensional Lie algebras. Note that $\lambda\in (-1,1).$ The value of a polynomial determining the solutions of $0=\vartheta\wedge \delta \vartheta$ for $\vartheta=\sum_{i=1}^3\lambda_ie^i$ for the dual basis $\{e^1,e^2,e^3\}$ to the basis $\{e_1,e_2,e_3\}$ of the Lie algebra $\mathfrak{g}$ is given in Table 1.}
    \label{fig:Lie-algebra-classification}
\end{table}

A first look at Table \ref{fig:Lie-algebra-classification} shows that the fact that a Lie algebra is quasi-rectifiable has nothing to do with whether the Lie algebra is simple or not. In fact, $\mathfrak{sl}_2$ and $\mathfrak{su}_2$ are both simple, but one is quasi-rectifiable while the other is not. The fact that a Lie algebra is quasi-rectifiable has nothing to do with whether a Lie algebra is solvable either. In fact, $\mathfrak{r}_3$ and $\mathfrak{r}_{3,\lambda}$ are solvable, but $\mathfrak{r}_3$ is not quasi-rectifiable while $\mathfrak{r}_{3,\lambda}$ is. 

Note that complexifications of the quasi-rectifiable Lie algebras are also quasi-rectifiable.

\subsection{On four-, five- and higher-dimensional quasi-rectifiable Lie algebras}
Let us provide some results on four- and higher-dimensional quasi-rectifiable indecomposable Lie algebras.   First, let us classify four-dimensional indecomposable quasi-rectifiable Lie algebras. In this case, we will use the results in \cite{PW77}, where all  indecomposable Lie subalgebras up to dimension six are determined. This is very useful, since the knowledge of the internal structure of quasi-rectifiable Lie algebras allows us to easily determine whether many of them are quasi-rectifiable or not.

\begin{proposition}
    The Lie algebras $\mathfrak{n}_{5,1},\mathfrak{n}_{5,2},\mathfrak{n}_{5,4},\mathfrak{n}_{5,5},\mathfrak{n}_{5,6}$ are non quasi-rectifiable.
\end{proposition}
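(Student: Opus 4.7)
The plan is to apply the algebraic criterion already developed in Theorem~\ref{Th::RectCri} together with the polynomial method described just before the statement. For each $\mathfrak{n}_{5,k}$ in the list, a quasi-rectifiable basis would correspond to five linearly independent covectors $\nu^{(\alpha)} = \sum_{\mu=1}^5 \lambda_\mu^{(\alpha)} e^\mu$, each of which is a common zero of the $\binom{5}{3}=10$ rectification polynomials
$$
P_{jkm}(\lambda_1,\ldots,\lambda_5)=\sum_{i=1}^5\lambda_i\bigl(\lambda_m c_{jk}^i+\lambda_j c_{km}^i+\lambda_k c_{mj}^i\bigr),\qquad 1\leq j<k<m\leq 5,
$$
with the constraint that the $(5\times 5)$-matrix $\bigl(\lambda_\mu^{(\alpha)}\bigr)$ is nonsingular. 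The strategy is to exhibit, for each of the five algebras, a polynomial $P_{jkm}$ whose vanishing forces a single coordinate $\lambda_{\mu_0}$ to be zero; then every candidate covector $\nu^{(\alpha)}$ would have $\lambda_{\mu_0}^{(\alpha)}=0$, the $\mu_0$-th column of $\bigl(\lambda_\mu^{(\alpha)}\bigr)$ would vanish, and the determinant would be zero, contradicting the existence of a basis.

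The first step is to record the structure constants of $\mathfrak{n}_{5,1},\mathfrak{n}_{5,2},\mathfrak{n}_{5,4},\mathfrak{n}_{5,5},\mathfrak{n}_{5,6}$ from the Patera--Winternitz classification used throughout the paper, producing the non-zero brackets $[e_i,e_j]$ on a fixed basis. From these one immediately reads the coefficients of $\delta e^i = -e^i\circ[\cdot,\cdot]$ on the dual basis $\{e^1,\ldots,e^5\}$. Substituting $\vartheta=\sum_\mu \lambda_\mu e^\mu$ into $\vartheta\wedge\delta\vartheta$ and expanding in the basis $\{e^j\wedge e^k\wedge e^m\}_{j<k<m}$ of $\Lambda^3\mathfrak{g}^*$ delivers the ten polynomials $P_{jkm}$ explicitly. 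Because each of these algebras is nilpotent with a one-dimensional derived ideal (or a derived chain ending in $\langle e_1\rangle$), the derivation of the structure constants ensures that the leading central element appears on the right-hand side of several brackets only; this is precisely the structural feature that produces the desired obstruction.

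The heart of the proof is the concrete identification, for each algebra in turn, of a rectification polynomial that reduces to a monomial or a sum of squares. A typical output one expects, in analogy with the three-dimensional table, is a polynomial of the form $P_{jkm}= c\,\lambda_\mu^2$ (up to sign), or $P_{jkm}=\lambda_{\mu_0}\,Q(\lambda)$ with a second polynomial $P_{j'k'm'}=\lambda_{\mu_0}\,Q'(\lambda)+c\,\lambda_{\mu_0}^2$ whose simultaneous vanishing still forces $\lambda_{\mu_0}=0$. In the centre-valued algebras like $\mathfrak{n}_{5,1}$ and $\mathfrak{n}_{5,4}$, the central generator $e_1$ contributes exclusively quadratic terms in $\lambda_1$ to several $P_{jkm}$, and the analysis typically collapses to the statement $\lambda_1^2=0$; in the filiform-type algebras $\mathfrak{n}_{5,2},\mathfrak{n}_{5,5},\mathfrak{n}_{5,6}$, the nested chain of brackets produces rectification polynomials in which a Jacobi-like cancellation yields an expression of the form $\lambda_\mu^2 + (\textrm{other terms vanishing under the remaining }P_{jkm})$, again forcing $\lambda_\mu=0$. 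After exhibiting this obstruction in each case, the contradiction argument described above closes the proof.

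The main obstacle is not conceptual but computational: one must carry out the expansion of $\vartheta\wedge\delta\vartheta$ for five different five-dimensional Lie algebras and then, for each, perform the algebraic manipulation among the ten polynomials that isolates the monomial $\lambda_\mu^2$ or an equivalent obstruction. A minor subtlety is that in some cases no single $P_{jkm}$ by itself has the required form; one must combine two of them using the remaining polynomial constraints before a coordinate-killing identity emerges. Once the obstructing coordinate is pinned down for each algebra, the conclusion that no invertible matrix $\bigl(\lambda_\mu^{(\alpha)}\bigr)$ exists, and hence that no quasi-rectifiable basis exists, is immediate.
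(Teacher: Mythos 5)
Your strategy --- computing the rectification polynomials $P_{jkm}$ of Theorem~\ref{Th::RectCri} for each algebra and locating one that forces a coordinate of every candidate covector to vanish --- is a legitimate route, and it is in fact the one the paper uses to fill in its appendix tables, where the obstruction for each of these five algebras is recorded as a polynomial proportional to $\lambda_1^2$. The problem with your writeup is that the proof never actually happens: for none of the five algebras do you write down the structure constants, compute a single $P_{jkm}$, or exhibit the monomial obstruction. Phrases such as ``a typical output one expects'' and ``the analysis typically collapses to $\lambda_1^2=0$'' are predictions rather than verifications, and you concede as much when you call the remaining work ``computational.'' Since the entire content of the proposition lies in those five computations, what you have is a correct plan whose essential steps are left blank. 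To close the gap you would need, for each algebra, the explicit nonzero brackets (e.g.\ for $\mathfrak{n}_{5,1}$: $[e_3,e_5]=e_1$, $[e_4,e_5]=e_2$) and the resulting obstruction (here $P_{135}=\lambda_1^2$), and then the analogous identity for the other four algebras.

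The paper's own proof takes a different and computation-free route. It invokes the earlier proposition that a surjective Lie algebra morphism carries a quasi-rectifiable algebra onto a quasi-rectifiable one, and simply exhibits quotients: $\mathfrak{n}_{5,2}$, $\mathfrak{n}_{5,5}$, $\mathfrak{n}_{5,6}$ modulo $\langle e_1\rangle$ are isomorphic to $\mathfrak{n}_{4,1}$, which in turn surjects onto the Heisenberg algebra $\mathfrak{h}_3$; and $\mathfrak{n}_{5,1}$, $\mathfrak{n}_{5,4}$ modulo $\langle e_1,e_3\rangle$ are already isomorphic to $\mathfrak{h}_3$. Since $\mathfrak{h}_3$ is not quasi-rectifiable (Table~\ref{fig:Lie-algebra-classification}), none of the five algebras can be. This reduces five five-dimensional polynomial computations to a single three-dimensional fact already in hand, and it is exactly the kind of structural shortcut your second paragraph gestures at (the remark about a one-dimensional derived or central ideal) without actually exploiting.
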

\begin{proof}
    The quotients of the Lie algebras $\mathfrak{n}_{5,2},\mathfrak{n}_{5,5},\mathfrak{n}_{5,6}$ 
by $I=\langle e_1\rangle $ give rise to four-dimensional Lie algebras with a basis $\{\widetilde{e}_2=e_2+I,\widetilde{e}_3=e_3+I,\widetilde{e}_4=e_4+I,\widetilde{e}_5=e_5+I\}$ and nonvanishing commutation relations
$$
\mathfrak{n}_{5,2}/I:\qquad 
[\widetilde{e}_3,\widetilde{e}_4]=\widetilde{e}_2,  \quad [-\widetilde{e}_5,\widetilde{e}_4]=\widetilde{e}_3.
$$
$$
\mathfrak{n}_{5,5}/I, \mathfrak{n}_{5,6}/I:\qquad 
[\widetilde{e}_3,\widetilde{e}_5]=\widetilde{e}_2, \quad [\widetilde{e}_4,\widetilde{e}_5]=\widetilde{e}_3.
$$
All of the above Lie algebras are isomorphic to $\mathfrak{n}_{4,1}$, which is non quasi-rectifiable. In fact, the non-vanishing commutation relations of $\mathfrak{n}_{4,1}$ are
$$
[e_2,e_4]=e_1,\qquad [e_3,e_4]=e_2,
$$
and since $e_1$ is an ideal of $\mathfrak{n}_{4,1}$, there exists a Lie algebra surjective projection $\rho:\mathfrak{n}_{4,1}\rightarrow \mathfrak{n}_{4,1}/\langle e_1\rangle\simeq \mathfrak{h}_3$ into a Lie algebra that is not quasi-rectifiable (see Table \ref{fig:Lie-algebra-classification}). Hence, $\mathfrak{n}_{4,1}$ is not quasi-rectifiable. In turn, $\mathfrak{n}_{5,2},\mathfrak{n}_{5,5},\mathfrak{n}_{5,6}$ are not rectifiable.

The quotients of the Lie algebras $\mathfrak{n}_{5,1},\mathfrak{n}_{5,4}$ by $I=\langle e_1,e_3\rangle $ form a Lie algebra isomorphic to $\mathfrak{h}_{3}$, which is non quasi-rectifiable. In fact, the nonvanishing commutation relations are
$$
\mathfrak{n}_{5,1}/I, \mathfrak{n}_{5,4}/I:\quad 
[\widetilde{e}_4,\widetilde{e}_5]=\widetilde{e}_2.
$$\end{proof}

\begin{theorem} The rectification polynomials for non-Abelian four-dimensional Lie algebras and the classification of quasi-rectifiable Lie algebras are given in Table  \ref{fig:Lie-algebra-classification4}.
\end{theorem}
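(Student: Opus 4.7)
The plan is to apply the algebraic criterion of Theorem \ref{Th::RectCri} case-by-case to the list of indecomposable non-Abelian four-dimensional Lie algebras given in \cite{PW77}. For each such Lie algebra $\mathfrak{g}$, I will fix its basis $\{e_1,e_2,e_3,e_4\}$ with the structure constants recorded in Table \ref{fig:Lie-algebra-classification4}, compute the dual map $\delta:\mathfrak{g}^*\rightarrow \mathfrak{g}^*\wedge \mathfrak{g}^*$ defined in (\ref{eq::Trans}), and expand
$$
\delta\!\left(\textstyle\sum_{i=1}^4\lambda_ie^i\right)\wedge\left(\textstyle\sum_{m=1}^4\lambda_me^m\right)=\sum_{1\leq j<k<m\leq 4}P_{jkm}(\lambda_1,\ldots,\lambda_4)\,e^j\wedge e^k\wedge e^m.
$$
The coefficients $P_{jkm}$ are precisely the rectification polynomials to be displayed in the table; by Theorem \ref{Th::RectCri}, $\mathfrak{g}$ admits a quasi-rectifiable basis if and only if their common zero set in $\mathfrak{g}^*$ contains four linearly independent covectors.

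For the positive cases, I would proceed as in the $\mathfrak{sl}_2$ example worked out in Section \ref{Sec:Elastic}: solve the system $P_{jkm}=0$ explicitly, produce an ordered quadruple $\{\vartheta^1,\ldots,\vartheta^4\}$ of such solutions whose $4\times 4$ coefficient matrix in the basis $\{e^1,\ldots,e^4\}$ is nonsingular, and read off the commutation relations of the resulting dual basis to confirm directly that $[\vartheta_i,\vartheta_j]\in\langle \vartheta_i,\vartheta_j\rangle$. For the negative cases I would rely on the quotient technique exploited in the previous proposition: exhibit an ideal $I\subset \mathfrak{g}$ such that $\mathfrak{g}/I$ is isomorphic to one of the Lie algebras already shown to be non quasi-rectifiable, in particular $\mathfrak{su}_2$, $\mathfrak{h}_3$, $\mathfrak{r}'_{3,0}$, $\mathfrak{r}_3$, $\mathfrak{r}'_{3,\lambda\neq 0}$, or $\mathfrak{n}_{4,1}$, and then invoke the preservation of quasi-rectifiability under surjective Lie algebra morphisms. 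When no convenient ideal is available, the fallback is to argue directly from the polynomial system that every common zero of the $P_{jkm}$ must lie in a proper subspace of $\mathfrak{g}^*$, obstructing the existence of four independent solutions, in the spirit of the treatment of $\mathfrak{r}'_{3,\lambda\neq 0}$ in Table \ref{fig:Lie-algebra-classification}.

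The main obstacle is twofold. First, the parameter-dependent families require careful bookkeeping of the values of the parameters at which the rectification variety changes qualitatively, since isolated exceptional values may flip the classification, and the various normal forms from \cite{PW77} must be treated uniformly across their allowed ranges. Second, for those algebras where neither the polynomial condition is immediately transparent nor a suitable non quasi-rectifiable quotient is at hand, one must manipulate the quadratic system in four unknowns symbolically; as anticipated in the paragraph preceding the statement, mechanisation via a computer algebra system is the most efficient route. Once both the explicit quasi-rectifiable bases (for the positive cases) and the obstructions (for the negative cases) have been tabulated, the theorem follows by exhaustion over the Patera--Winternitz list of indecomposable four-dimensional Lie algebras.
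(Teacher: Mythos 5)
Your proposal is correct and follows essentially the same route as the paper: compute the polynomial coefficients $P_{jkm}$ of $\vartheta\wedge\delta\vartheta$ via Theorem \ref{Th::RectCri} (mechanised by computer algebra), exhibit an explicit quasi-rectifiable basis for the positive cases, and for the negative cases observe that a single $P_{jkm}$ (e.g.\ one proportional to $\lambda_1^2$ or $\lambda_1^2+\lambda_2^2$) forces a fixed coordinate of every solution to vanish, so no four linearly independent solutions exist. The quotient-onto-a-non-quasi-rectifiable-algebra argument you mention as a first resort is also part of the paper's toolkit (it is used for the five-dimensional nilpotent cases), but for the four-dimensional table the paper relies on the polynomial obstruction throughout; this is a difference of emphasis only, not of method.
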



It can be proved that obtaining quasi-rectifiable simple Lie algebras of higher order is an involved task. Let us consider $\mathfrak{sl}_3$. In this case, one can choose the basis
$$
H_1=\frac{1}{2} \left(
\begin{array}{ccc}
 1 & 0 & 0 \\
 0 & -1 & 0 \\
 0 & 0 & 0 \\
\end{array}
\right),\qquad H_2=\left(
\begin{array}{ccc}
 0 & 1 & 0 \\
 0 & 0 & 0 \\
 0 & 0 & 0 \\
\end{array}
\right),\qquad H_3=\left(
\begin{array}{ccc}
 0 & 0 & 0 \\
 -1 & 0 & 0 \\
 0 & 0 & 0 \\
\end{array}
\right),$$
$$
H_4=\frac1{2 \sqrt{3}}{\left(
\begin{array}{ccc}
 1 & 0 & 0 \\
 0 & 1 & 0 \\
 0 & 0 & -2 \\
\end{array}
\right)},\qquad H_5=\left(
\begin{array}{ccc}
 0 & 0 & 0 \\
 0 & 0 & 1 \\
 0 & 0 & 0 \\
\end{array}
\right),\qquad H_6=\left(
\begin{array}{ccc}
 0 & 0 & 0 \\
 0 & 0 & 0 \\
 0 & -1 & 0 \\
\end{array}
\right),
$$
$$
H_7=\left(
\begin{array}{ccc}
 0 & 0 & 1 \\
 0 & 0 & 0 \\
 0 & 0 & 0 \\
\end{array}
\right),\qquad H_8=\left(
\begin{array}{ccc}
 0 & 0 & 0 \\
 0 & 0 & 0 \\
 -1 & 0 & 0 \\
\end{array}
\right).
$$
Note that  $H_1$ and $H_4$ are such that the remaining elements of the basis are eigenvectors of $H_1,H_4$. Their eigenvalues allow us to put such elements in the edges of Figure  \ref{Fig:Hexagon}. In fact, the eigenvalue relative to $H_1$ gives the coordinate in the $X$ axis, while the eigenvalue with respect to $H_4$ sets the coordinate in the $Y$ axis. Additionally, 
$$
\begin{gathered}
[H_1,H_2]=H_1,\qquad [H_2,H_4]=0,\qquad [H_1,H_3]=-H_3,\qquad [H_3,H_4]=0,\qquad \\ [H_1,H_5]=-H_5/2,\quad [H_4,H_5]=\sqrt{3}/2H_5,\quad [H_1,H_6]=H_6/2,\quad [H_4,H_6]=-\sqrt{3}/2H_6,\\
[H_1,H_7]=H_7/2,\qquad [H_4,H_7]=\sqrt{3}/2H_5,\quad [H_1,H_8]=-H_8/2,\quad [H_4,H_8]=-\sqrt{3}/2H_6.
\end{gathered}
$$

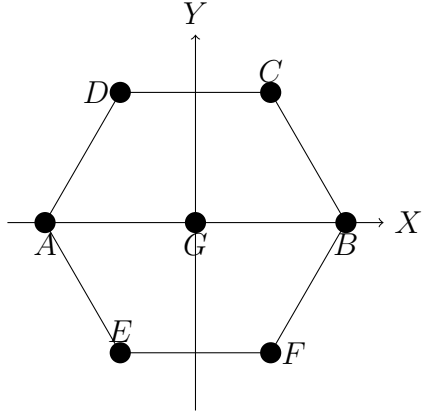
\begin{figure}
  \centering
  \begin{tikzpicture}[scale=2]
    \draw[->] (-1.25,0) -- (1.25,0) node[right] {\(X\)};
    \draw[->] (0,-1.25) -- (0,1.25) node[above] {\(Y\)};
    
    \coordinate (B) at (1,0);
    \coordinate (C) at ({1*cos(60)}, {1*sin(60)});
    \coordinate (D) at ({1*cos(120)}, {1*sin(120)});
    \coordinate (A) at (-1,0);
    \coordinate (E) at ({1*cos(240)}, {1*sin(240)});
    \coordinate (F) at ({1*cos(300)}, {1*sin(300)});
    \coordinate (G) at (0,0); 
    
    \draw (B) -- (C) -- (D) -- (A) -- (E) -- (F) -- cycle;
    
    \foreach \vertex/\position in {A/below,B/below,C/above,D/left,E/above,F/right,G/below}
        \fill (\vertex) circle (2pt) node[\position] {\(\vertex\)};
  \end{tikzpicture}
  \caption{Hexagon centred in the \(\mathbb{R}^2\) plane with three vertices on the horizontal axis. This represents the root diagram for $\mathfrak{sl}_3$. Note that $H_1,H_4$ belong to $G$, while $B$ and $A$ contain $H_2$ and $H_3$, respectively. Moreover, $C$, $D$, $E$ and $F$ have $H_7$, $H_5$, $H_6$ and $H_8$, respectively. }\label{Fig:Hexagon}
\end{figure}
Recall that the condition for obtaining a quasi-rectifiable Lie algebra is to obtain a set of eight linearly independent elements of $\mathfrak{sl}_3^*$ satisfying the equation 
$$
\vartheta\wedge \delta \vartheta=0.
$$
However, the above equation can be written for certain  elements of $\mathfrak{g}$. For instance,
\begin{equation}\label{Eq:Rel}
(\vartheta \wedge \delta \vartheta) (v_1,v_2,v_3)=-\sum_{\sigma \in S^3}\epsilon_{\sigma}\vartheta(v_{\sigma(1)})\vartheta([v_{\sigma(2)},v_{\sigma(3)}]),
\end{equation}
where $S^3$ is the space of three-element permutations, $\epsilon_\sigma$ stands for the sign of the permutation $\sigma$, and  $v_1,v_2,v_3\in \mathfrak{sl}_3$. Consider the dual basis to the basis of $\mathfrak{sl}_3$. Consider that all the elements along the border of the $\mathfrak{sl}_3$ diagram can always be obtained as the Lie bracket of any two other elements. By taking (\ref{Eq:Rel}) for that, one obtains $\vartheta=\sum_{\alpha=1}^8\lambda_\alpha\vartheta^\alpha$, and 
$$
\vartheta\wedge \delta\vartheta(v_5,v_7,v_2)=-\lambda_7^2=0.
$$
The same is true for any other element $\lambda_\alpha$ over the boundary of the polytope.
 Then, $\vartheta=\lambda_1\vartheta_1+\lambda_4\vartheta_4$. This is insufficient to obtain eight elements in $\mathfrak{sl}_3$, which is not quasi-rectifiable. Note that the same approach can be applied to many other Lie algebras. In particular, this can be proved to be true for $\mathfrak{sl}_4$, whose study can be reduced to $\mathfrak{sl}_3$ (see Figure \ref{Fi:sl4}), $\mathfrak{so}(2n,\mathbb{C})$, $\mathfrak{so}(2n,\mathbb{R})$ for $n\geq 2$, etcetera.

\begin{figure}
  \centering
 \includegraphics[scale=0.9]{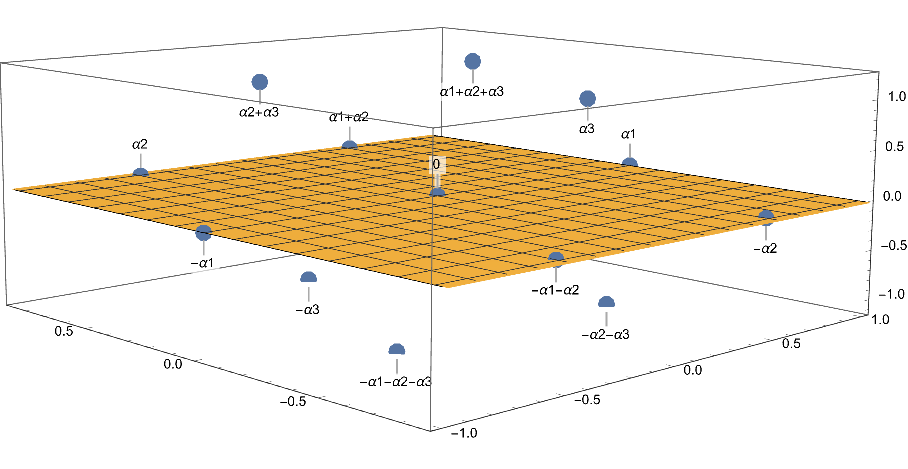} 
  \caption{Schematic root diagram for $\mathfrak{sl}_4$, which has 6 positive roots $\alpha_1,\alpha_2,\alpha_3,\alpha_1+\alpha_2,\alpha_2+\alpha_3,\alpha_1+\alpha_2+\alpha_3$ (cf. \cite{GHLMMRRT20}). Each root is represented by its eigenvalues relative to a basis of the Cartan algebra for $\mathfrak{sl}_4$. Note that the plane $z=0$ represents the root diagram for $\mathfrak{sl}_3$.} \label{Fi:sl4}
\end{figure}

\section{Applications to hydrodynamic-type systems}\label{Se:Hydro}

This section illustrates through examples the connection between the description of the $k$-wave solutions of hydrodynamic-type equations, quasi-rectifiable families of vector fields, quasi-rectifiable Lie algebras of vector fields, and quasi-rectifiable Lie algebras following the theory developed in Sections \ref{Sec:Elastic}, \ref{Sec:Direct}, and \ref{Sec::ClaRecLieAlg}.

Let us discuss this link for homogeneous hyperbolic quasilinear first-order systems of PDEs of the form 
\begin{equation}\label{eq:Hydr}
\sum_{\alpha=1}^q\sum_{i=1}^w(A^l)^{i}_\alpha(u) \frac{\partial u^\alpha}{\partial x^i}=0,\qquad l=1,\ldots,m,
\end{equation}
in $w$ independent variables $x=(x^1,\ldots,x^w)\in\mathbb{R}^w$ and $q$ dependent variables $u=(u^1,\ldots,u^q)\in \mathbb{R}^q$, where $A^{1},
\ldots,A^m$ are $q\times p$ matrix functions depending on $u\in \mathbb{R}^q$. 

In the context of  the $k$-wave solutions of hydrodynamic-type equations, one finds that they are described via quasi-rectifiable families of vector fields that are frequently put in quasi-rectifiable form to obtain solutions \cite{Gr23,GL23}. 

 The $k$-wave solutions of \eqref{eq:Hydr}, obtained via the generalised method of characteristics (GMC), are obtained from the algebraic system
\begin{equation}\label{Eq:Rect}
	\sum_{\alpha=1}^q\sum_{i=1}^wA^{li}_\alpha (u)\lambda^{(s)}_i(x,u)X^\alpha_{(s)}(x,u)=0,\qquad s=1,\ldots,k,\qquad l=1,\ldots,m.
\end{equation}
The wave covectors 
where $\lambda^{(1)},\ldots,\lambda^{(k)}$ are $\mathbb{R}^q$-parametrised differential one-forms on $\mathbb{R}^w$ such that $\lambda^{(1)}\wedge\ldots\wedge\lambda^{(k)}$ does not vanish. Let us assume that, for each fixed $\lambda^{(s)}$, there exists one $\mathbb{R}^w$-parametrised vector field $X_{(s)}$ on $\mathbb{R}^q$.  
In order to obtain the $k$-wave solutions via the GMC, the family of vector fields $X_{(1)},\ldots,X_{(k)}$ has to be quasi-rectifiable. In practical applications, it is assumed that the elements of each pair of vector fields $X_{(1)},\ldots,X_{(k)}$ commute  between themselves. Hence, we rescale these vector fields to ensure that each pair of these vector fields commutes as this is useful, but not necessary, for solving a parametrisation of the solutions of (\ref{eq:Hydr}). Note that such rescaled vector fields do not change the fact that they are solutions of (\ref{Eq:Rect}). To obtain the proper rescaling, one may apply the methods of Section \ref{Sec:Elastic}. It is worth recalling that in order to obtain $k$-waves solutions of (\ref{eq:Hydr}), one may require
$$
(\mathcal{L}_{X_{(\alpha)}}\lambda^{{(\beta)}})_u\in \langle \lambda^{(\beta)}(u),\lambda^{(\alpha)}(u)\rangle,\qquad 1\leq \alpha\neq \beta\leq k.
$$

It is worth noting that the existence of a quasi-rectifiable Lie algebra of vector fields allows for the determination of a parametrisation of the submanifolds related to the $\gamma_{(1)},\ldots,\gamma_{(k)}$ vector fields as follows (cf. \cite[Section 11]{GT96} or \cite[eq. (2.3)]{GZ83b}):

\begin{equation}\label{eq:ExSubm}
\frac{\partial u}{\partial r_\alpha}=\sum_{\beta=1}^kf_{\alpha\beta}(r_1,\ldots,r_k)\gamma_{(\beta)}(u),\qquad \alpha=1,\ldots,k,
\end{equation}
for certain functions $f_{\alpha\beta}(r_1,\ldots,r_k)$. Recall that the above system of PDEs is such that the tangent space to a solution must be the one spanned by the vector fields $\gamma_{(1)},\ldots,\gamma_{(k)}.$ Since such vector fields are linearly independent, one has to additionally assume that
\begin{equation}\label{eq:CondTangeSol}
\sum_{\beta=1}^k(f_{1\beta}(r_1,\ldots,r_k)\gamma_{(\beta)}(u))\wedge \cdots\wedge\sum_{\beta=1}^k( f_{k\beta}(r_1,\ldots,r_k)\gamma_{(\beta)}(u))\neq 0.
\end{equation}
Note that, as a consequence of Theorem \ref{Th::RecBasis}, each pair of different vector fields $X_{(i)},X_{(j)}$ gives rise to a double wave solution, which produces $r(r-1)/2$ different double waves passing through each point $x\in N$. Hence, the construction of double waves is part of the construction of $k$-wave solutions. 

The classical approach providing an integrable system (\ref{eq:ExSubm}) consists of rectifying the vector fields $\gamma_{(1)},\ldots,\gamma_{(k)}$ by multiplying them by non-vanishing functions $f_1(u),\ldots,f_k(u)$ depending on the dependent variables so as to obtain a system of PDEs of the form
$$
\frac{\partial u}{\partial r_\alpha}=f_\alpha(u)\gamma_{(\alpha)}(u),\qquad \alpha=1,\ldots,k,
$$
which is seen to be integrable. 

Nevertheless, one may consider that the rectification of the vector fields $\gamma_{(1)},\ldots,\gamma_{(k)}$ in (\ref{eq:ExSubm}) can be too involved and one may, instead, find some integrable expressions (\ref{eq:ExSubm}) in another way. In this case, one has to choose the coefficient functions $f_{\alpha\beta}(r_1,\ldots,r_k)$ with $\alpha,\beta=1,\ldots,k$ so that (\ref{eq:ExSubm}) is integrable and (\ref{eq:CondTangeSol}) is satisfied.

Let us illustrate the above fact. First, the system (\ref{eq:ExSubm}) is integrable if and only if 
\begin{equation}\label{eq:IntCon}
\left[\frac{\partial}{\partial r_\alpha}+\sum_{\beta=1}^kf_{\alpha\beta}\gamma_{(\beta)},\frac{\partial}{\partial r_\mu}+\sum_{\nu=1}^kf_{\mu\nu}\gamma_{(\nu)}\right]=0,\qquad 1\leq \alpha<\mu\leq k.
\end{equation}
Hence, 
$$
0=\sum_{\beta=1}^k\left(\frac{\partial f_{\alpha\beta}}{\partial r_\mu}-\frac{\partial f_{\mu\beta}}{\partial r_\alpha}+\sum_{\sigma,\delta=1}^kf_{\alpha \sigma}f_{\nu \delta}c_{\sigma\delta}^\beta \right)\gamma_{(\beta)},\qquad 1\leq \alpha<\mu\leq k,
$$
where $[\gamma_{(\alpha)},\gamma_{(\beta)}]=\sum_{\gamma=1}^kc_{\alpha\beta}^\delta\gamma_{(\delta)}$ for certain constants $c_{\alpha\beta}^\delta$ for $\alpha,\beta,\delta=1,\ldots,k$. Note that in hydrodynamic-type systems, we have $\gamma_{(1)}\wedge\ldots\wedge\gamma_{(k)}\neq 0$. 
Thus, 
$$
0=\frac{\partial f_{\alpha\beta}}{\partial r_\mu} -\frac{\partial f_{\mu\beta}}{\partial r_\alpha} +\sum_{\sigma,\delta=1}^kf_{\alpha \sigma}f_{\nu \delta}c_{\sigma\delta}^\beta,\qquad 1\leq \alpha<\mu\leq k,\qquad \beta=1,\ldots,k.
$$

Indeed, it has not previously been stated in the literature that (\ref{eq:ExSubm}), if integrable, is a so-called {\it PDE Lie system} \cite{CGM07,CL11}. In other words, it is an integrable  first-order system of PDEs in normal form such that the right-hand side is given by a linear combination of vector fields whose functions depend on the independent variables  spanning a finite-dimensional Lie algebra of vector fields. In our case,  due to the nature of hydrodynamic-type systems and the generalised method of characteristics, this Lie algebra of vector fields is quasi-rectifiable. Moreover, the standard theory of PDE Lie systems \cite{Ra11} can be applied to the study of its properties and solutions.

These and other topics will be illustrated in physical and mathematical examples analysed in the following three subsections.

\subsection{Solutions of (1+1)-dimensional hydrodynamic system}
Let us focus on the hydrodynamic equations in $(1+1)$-dimensions given by a $3\times 3$ function matrix $A(v)$, where $v\in \mathbb{R}_+^3$ for $\mathbb{R}_+=\{x\in \mathbb{R}:x> 0\}$, and there are two independent variables, $x,t$, and three dependent variables, namely $\rho,p,u$, of the form
\begin{equation}\label{eq:11Hydro}
v_t-A(v)v_x=0,\qquad A(v)=\left[\begin{array}{ccc}u&0&\rho\\0&u&\kappa p\\0&1/\rho &u
\end{array}\right].
\end{equation}
Note that  $\{\rho,p,u\}$ are defined on $\mathbb{R}_+^3$. Physically, $\rho$ is the density of the fluid, $p$ is its pressure,  $u$ is the fluid velocity, and $\kappa>0$ is the constant adiabatic exponent.  Some values of the covectors $\lambda$  and the corresponding tangent vectors $\gamma$ for (\ref{eq:11Hydro}) may be given by the pairs
$$
\lambda_+ =\left(u+\sqrt{\frac{\kappa p}{\rho}}\right)dt+dx,\qquad \gamma_+=\rho\frac{\partial}{\partial \rho}+\kappa p\frac{\partial}{\partial p}+\sqrt{\frac{\kappa p}{\rho}}\frac{\partial}{\partial u} ,
$$
$$
\lambda_0 =udt+dx,\qquad \gamma_0=\frac{\partial}{\partial \rho}, 
$$
$$
\lambda_- =\left(u-\sqrt{\frac{\kappa p}{\rho}}\right)dt+dx,\qquad \gamma_-=\rho\frac{\partial}{\partial \rho}+\kappa p\frac{\partial}{\partial p}-\sqrt{\frac{\kappa p}{\rho}}\frac{\partial}{\partial u} .
$$

At any point in $\mathbb{R}_+^3$, one has that $\gamma_+\wedge\gamma_0\wedge\gamma_-\neq 0$. Moreover,
$$
[\gamma_+,\gamma_-]=\frac{1-\kappa}{2}\gamma_++\frac{-1+\kappa}{2}\gamma_-\,,\quad [\gamma_+,\gamma_0]=\frac{1}{4\rho}\gamma_+-\frac{1}{4\rho}\gamma_--\gamma_0,$$$$
[\gamma_0,\gamma_-]=\frac{1}{4\rho}\gamma_+-\frac{1}{4\rho}\gamma_-+\gamma_0.
$$
Consequently, one has that  $\gamma_+,\gamma_-$ is a quasi-rectifiable family of vector fields, while $\gamma_+,\gamma_-,\gamma_0$ is not a quasi-rectifiable family according to the Definition 2.2.  It is worth noting that $\langle\gamma_+,\gamma_-\rangle$ is indeed a quasi-rectifiable Lie algebra.  The vector fields $\gamma_+,\gamma_-$ are associated with  right and left sound waves, and consequently the Lie algebra $\langle \gamma_+,\gamma_-\rangle$ can be called a {\it sound Lie algebra}. As $\gamma_+,\gamma_-$ is a simple family of vector fields, one can put it into a quasi-rectifiable form simply by considering Theorem \ref{Th::RecBasis}. Note that  $\xi=p/\rho^\kappa$ is a constant of motion for $\gamma_+$ and $\gamma_-$. Next, consider a constant of motion for $\gamma_+$ that is not of $\gamma_-$. To obtain it, we write 
$$
\gamma_+=\rho\frac{\partial}{\partial \rho}+\kappa p\frac{\partial}{\partial p}+\sqrt{\kappa 
\xi\rho^{\kappa-1} }\frac{\partial}{\partial u},\qquad
\gamma_-=\rho\frac{\partial}{\partial \rho}+\kappa p\frac{\partial}{\partial p}-\sqrt{\kappa 
\xi\rho^{\kappa-1} }\frac{\partial}{\partial u}
$$
and then, by the method of characteristics and using the fact that $\xi$ is constant along them, one obtains two constants of motion
$$
I_+=\frac{2\sqrt{\kappa\xi \rho^{-1+\kappa}}}{\kappa-1}-u,\qquad I_-=\frac{2\sqrt{\kappa\xi \rho^{-1+\kappa}}}{\kappa-1}+u,
$$
respectively. This allows us to write $\gamma_+$ and $\gamma_-$ in an almost rectified form in the coordinate system $I_+,I_-,\xi$ as 
$$
\gamma_+=2\sqrt{\kappa \xi\rho^{\kappa-1}}\frac{\partial}{\partial I_-},\qquad \gamma_-=2\sqrt{\kappa \xi\rho^{\kappa-1}}\frac{\partial}{\partial I_+}.
$$Then, some functions $h_+$ and $h_-$ can be used to rescale $\gamma_+,\gamma_-$ , respectively, and obtain two commuting vector fields. In particular, we can choose
$$
h_\pm=\frac{1}{2\sqrt{\kappa \xi\rho^{\kappa-1}}}.
$$
It is worth stressing that this rescaling is used in the literature to simplify the parametrisations of surfaces in terms of the Riemann invariants \cite{Gr23}.

Let us use our second method to put the quasi-rectifiable Lie algebra $\langle\gamma_+,\gamma_-\rangle$  into quasi-rectifiable form. In particular, let us apply the Corollary \ref{Cor:Double} and, in this respect, consider the differential one-forms dual to $\gamma_+,\gamma_-$ given by
$$
\eta_+=\frac{dp}{2 \kappa p} +\sqrt{\frac{\rho}{4 \kappa p}}du,\qquad \eta_-=\frac{dp}{2 \kappa p}-\sqrt{\frac{\rho}{4 \kappa p}}du. 
$$
Let us multiply $\eta_+$ by a function $f_+$ so that $f_+ \eta_+$ is the same as the differential of a function on $\mathcal{D}$. 
If $\mathcal{D}$ is the distribution spanned by $\gamma_+,\gamma_-$, one has that 
$$
dI_-=\frac{1}{1-\kappa}\sqrt{\frac{\kappa p}{\rho^3}}d\rho+\frac{1}{\kappa-1}\sqrt{\frac{\kappa}{\rho p}}dp+du,\quad \Xi_-={2\sqrt{\kappa \xi\rho^{\kappa-1}}}\left(\frac{dp}{2 \kappa p} +\sqrt{\frac{\rho}{4 \kappa p}}du\right).
$$
are such that $\Xi_--dI_-$ vanishes on $\gamma_+$ and $\gamma_-$ . In other words $\Upsilon_-|_{\mathcal{D}}=dI_-|_{\mathcal{D}}$. The same applies to 
$$
dI_+=\frac{1}{1-\kappa}\sqrt{\frac{\kappa p}{\rho^3}}d\rho+\frac{1}{\kappa-1}\sqrt{\frac{\kappa}{\rho p}}dp-du,\quad \Xi_+={2\sqrt{\kappa \xi\rho^{\kappa-1}}}\left(\frac{dp}{2 \kappa p} -\sqrt{\frac{\rho}{4 \kappa p}}du\right).
$$
and $\Xi_+-dI_+$ vanishes on $\gamma_+$ and $\gamma_-$. Hence, one obtains that $\gamma_+/[2\sqrt{\kappa \xi\rho^{\kappa-1}}]$ and $\gamma_-/[2\sqrt{\kappa \xi\rho^{\kappa-1}}]$ commute and $I_-$ and $I_+$ put $\gamma_+,\gamma_-$ in quasi-rectifiable form., respectively 

Let us explain how Theorem \ref{Th:FroRecBett} and Corollary \ref{Cor:Double} can be used in a more practical and clarifying manner. The key is that the relation $d(f_i\eta_i)|_\mathcal{D}=dx^i|_\mathcal{D}$ means that the restriction of $f_i\eta_i$ of a leaf of the distribution $\mathcal{D}$ is exact and $x^i$ is a potential on that leaf, but $d(f_i\eta_i)$ does not need to be closed. 

In our practical example, let us write $\eta_+$ in terms of $u,p$ and $\xi$. This shows the form of $\eta_+$ on a leaf of the distribution $\mathcal{D}$, where $u,p$ are coordinates for a constant value of $\xi$. If we multiply $\eta_+$ by a function $f(u,p,\xi)$ so that its  restriction to a leaf of $\mathcal{D}$ is closed, then the potentials depending on $\xi$  constitute a solution of (\ref{Eq:MethodSol}). More specifically, $\eta_+$ take the form 
$$\eta_+=\frac{dp}{2\kappa p} +\sqrt{\frac{\rho}{4p\kappa}}du=\frac{dp}{2\kappa p} +\sqrt{\frac{(p/\xi)^{1/\kappa}}{4 \kappa p}}du
$$
 in the variables $\{p,u,\xi\}$. To solve the equation $f_+\eta_+|_{\mathcal{D}}=dI_+|_{\mathcal{D}}$ it is enough to consider $\xi$ as a constant and to multiply it by $f_+$ so as to obtain the differential of a function that is assumed to depend on the constant $\xi$, i.e.
\begin{multline*}
\sqrt{\frac{4 p\kappa}{(p/\xi)^{1/\kappa}}}\eta_+|_{\mathcal{D}}=du+\sqrt{\frac{\xi^{1/\kappa}}{p\kappa p^{1/\kappa}}}dp|_{\mathcal{D}}\\=du+\sqrt{\frac{\xi^{1/\kappa}}{\kappa}}d\left[\frac{p^{-1/2\kappa+1/2}}{-1/2\kappa+1/2}\right]|_{\mathcal{D}}=d\left(u+\frac{2}{1-\kappa}\sqrt{\frac{p}{\rho\kappa}}\right)|_{\mathcal{D}}=dI_-|_{\mathcal{D}}.    
\end{multline*}
Meanwhile,
\begin{multline*}
\sqrt{\frac{4 p\kappa}{(p/\xi)^{1/\kappa}}}\eta_-|_{\mathcal{D}}=-du+\sqrt{\frac{\xi^{1/\kappa}}{p\kappa p^{1/\kappa}}}dp|_{\mathcal{D}}\\=-du+\sqrt{\frac{\xi^{1/\kappa}}{\kappa}}d\left[\frac{p^{-1/2\kappa+1/2}}{-1/2\kappa+1/2}\right]|_{\mathcal{D}}=d\left(-u+\frac{2}{1-\kappa}\sqrt{\frac{p}{\rho\kappa}}\right)|_{\mathcal{D}}=dI_+|_{\mathcal{D}}.    
\end{multline*}

\subsection{Barotropic fluid flow in $(2k+1)$-dimensions}
Let us study a barotropic fluid flow \cite{Gr23,Mi58}. In this case, we focus on the systems of PDEs on $\mathbb{R}^{1+2k}$ with independent variables $(t,x^1,\ldots,x^{2k})$ and dependent variables $(\rho,u^1,\ldots,u^{2k})$ given by
$$
u_t+(u\cdot\nabla)u=0,\qquad \rho_t+\rho(\nabla u)+(u\cdot \nabla) \rho=0,
$$
where $u\cdot \nabla$ denotes the partial derivative at $x\in \mathbb{R}^{1+2k}$ in the direction given by $u(x)\in \mathbb{R}^{1+2k}$, while $\nabla u$ stands for the standard divergence on $\mathbb{R}^{1+2k}$ of the vector field $\sum_{\alpha=1}^{2k}u^\alpha(x)\partial/\partial x^\alpha+u^0\partial/\partial t$.
In this case, the $\gamma$'s are of the form
$$
\gamma=\gamma_\rho\frac{\partial}{\partial \rho}+\sum_{i=1}^{2k}\gamma^i\frac{\partial}{\partial u^i},
$$
for certain functions $\gamma_\rho,\gamma^1,\ldots,\gamma^{2k}\in C^\infty(\mathbb{R}^{2k+1})$ defined on the space of dependent variables,
while
$$
\lambda=\lambda_0dt+\sum_{i=1}^{2k}\lambda_idx^i
$$
is chosen so that $\lambda_0,\lambda_i$ are functions depending on the dependent variables. 
The conditions ensuring that $\gamma$ and $\lambda$ give rise to a one-wave solution are
$$
\lambda_0=-\sum_{i=1}^{2k}u^i\lambda_i,\qquad \sum_{i=1}^{2k}\gamma^i\lambda_i=0.
$$
One can propose a $k$-wave solution on $\mathbb{R}^{2k+1}$ of the form
$$
\gamma_{(1)}=f_1(u^1,\ldots,u^{2k},\rho) \left(u^1\frac{\partial}{\partial u^1}+u^2\frac{\partial}{\partial u^2}+\rho F_1(u^1,u^2)\frac{\partial}{\partial \rho}\right),
$$
$$
\lambda^{(1)}=\frac{u^2}{\rho}dx^1-\frac{u^1}{\rho}dx^2,
$$
$$
\gamma_{(2)}=f_2(u^1,\ldots,u^{2k},\rho) \left(u^3\frac{\partial}{\partial u^3}+u^4\frac{\partial}{\partial u^4}+\rho F_2(u^3,u^4)\frac{\partial}{\partial \rho}\right),
$$
$$
\lambda^{(2)}=\frac{u^4}{\rho}dx^3-\frac{u^3}{\rho}dx^4,
$$
$$
\ldots\qquad \ldots
$$
$$
\gamma_{(k)}=f_k(u^1,\ldots,u^{2k},\rho) \left(u^{2k-1}\frac{\partial}{\partial u^{2k-1}}+u^{2k}\frac{\partial}{\partial u^{2k}}+\rho F_k(u^{2k-1},u^{2k})\frac{\partial}{\partial \rho}\right),$$
$$
\lambda^{(k)}=\frac{u^{2k}}{\rho}dx^{2k-1}-\frac{u^{2k-1}}{\rho}dx^{2k},
$$
where $F_1,\ldots,F_k$ are arbitrary functions depending on their arguments, while $f_1,\ldots,f_k\in C^\infty(\mathbb{R}^{2k+1})$ are arbitrary functions depending on the dependent variables. One can see that $\gamma_{(1)}\wedge \ldots\wedge \gamma_{(k)}$ and $\lambda^{(1)}\wedge \ldots\wedge \lambda^{(k)}$ are different from zero almost everywhere. Moreover,
$$
[\gamma_{(i)},\gamma_{(j)}]=(\gamma_{(i)}f_j)\gamma_{(j)}-(\gamma_{(j)}f_i)\gamma_{(i)},\qquad 1\leq i<j\leq k.
$$
and 
$$
\sum_{l=1}^{2k+1}\gamma_{(s')}^l\frac{\partial \lambda^{(s)}_j}{\partial u^l}=-F_{s'}(u^{2s'-1},u^{2s'})\lambda^{(s)}_j,\qquad j=1,\ldots,2k+1,\quad 1\leq s'\neq s\leq k,
$$
where $u^{2k+1}=\rho$ and $\lambda_{2k+1}=\lambda_0$, hold. This gives a $k$-wave solution for the barotropic model in $\mathbb{R}^{2k+1}$. 

If we additionally assume that the functions $f_1,\ldots,f_k$ are homogeneous for  each pair of functions $u^{2i-1},u^{2i}$ for $i=1,\ldots,k$, one obtains that $\gamma_{(1)},\ldots,\gamma_{(k)}$ span a quasi-rectifiable Lie algebra of vector fields.

\subsection{$k$-wave solutions  involving quasi-rectifiable Lie algebras}
Let us describe a series of systems of PDEs admitting families of $k$-wave solutions related to quasi-rectifiable Lie algebras of vector fields and constructed via the abstract quasi-rectifiable Lie algebras described in Section \ref{Sec::ClaRecLieAlg}.

Assume that the space of independent variables is $\mathbb{R}^w$ with $w
\geq k$. Consider any of the quasi-rectifiable Lie algebras developed in Section \ref{Sec::ClaRecLieAlg}. Ado's theorem allows one to describe any finite-dimensional Lie algebra as isomorphic to a matrix Lie algebra given by a subspace of $n\times n$ square matrices. Note that $n$ is chosen to be big enough to admit such a representation and it does not need to be equal to the dimension of the Lie algebra to be represented. Let $\{M_1,\ldots,M_k\}$ be a basis of such a matrix Lie algebra. Consider the vector space $\mathbb{R}^{n}$ and the linear coordinates $\{u^1,\ldots,u^n\}$ on it. Define the vector fields
$$
\gamma_{(\alpha)}=-\sum_{\beta,\gamma=1}^n(M_\alpha)^\beta_\gamma u^\gamma \frac{\partial}{\partial u^\beta},\qquad \alpha=1,\ldots,k.
$$
It is known that $\gamma_{(1)},\ldots,\gamma_{(k)}$ span a Lie algebra isomorphic to the one spanned by  $M_1,\ldots,M_k$. In fact, the structure constants of $\gamma_{(1)},\ldots,\gamma_{(k)}$ are the  same as the ones of $M_1,\ldots,M_k$. Consider now the distribution on $\mathbb{R}^n$ spanned by the vector fields $\gamma_{(1)},\ldots,\gamma_{(k)}$ and the annihilator of such a distribution, namely the family of subspaces
$$
\mathcal{A}_U:\{\vartheta\in T^*\mathbb{R}^{n}:\vartheta((\gamma_{(1)})_U)=\ldots=\vartheta((\gamma_{(k)})_U)=0\}\subset T^*\mathbb{R}^n.
$$
It is worth noting that if $\gamma_{(1)}\wedge\ldots\wedge\gamma_{(k)}$ is different from zero at a generic point, it is always possible to make a representation of the initial Lie algebra into a bigger space so that $\gamma_{(1)},\ldots,\gamma_{(k)}$ will be linearly independent at a generic point (cf. \cite{CL11}). Indeed, $n$ can always be chosen to be big enough to ensure that $(\mathcal{A}_U)_x$ is not zero at a generic point. Hence, one defines  the system of PDEs of the form 
$$
\sum_{\alpha=1}^n\sum_{i=1}^w(A^l)^i_\alpha(u)\frac{\partial u^\alpha}{\partial x^i}=0,\qquad l=1,\ldots,m.
$$
Note that  $((A^l)^i_1,\ldots,(A^l)^i_n)$, with $l=1,\ldots,m,$ are chosen so that they will be elements of $\mathcal{A}_U$ and one of them is different from zero. If these conditions are satisfied, it follows that 
$$
\sum_{\alpha=1}^n(A^l)^i_\alpha(u)\gamma^\alpha_{(s)}=0,\quad i=1,\ldots,w,\,\,\,\,\Longrightarrow \,\,\sum_{i=1}^w\sum_{\alpha=1}^n(A^l)^i_{\alpha}(u)\gamma^\alpha_{(s)}\lambda^{(s)}_{i}=0,
$$
for $l=1,\ldots,m$ and $s=1,\ldots,k$. Since the above holds independently of the value of $p$ and the exact value of the coefficients of $\lambda_{(1)},\ldots,\lambda_{(k)}$, one can choose  the coefficients of $\lambda_{(1)},\ldots,\lambda_{(k)}$ so that $\lambda_{(1)}\wedge \ldots\wedge \lambda_{(k)}$ does not vanish. Moreover, one can require the coefficients of the $\lambda_{(1)},\ldots,\lambda_{(k)}$ to be common first integrals of all the vector fields $\gamma_{(1)},\ldots,\gamma_{(k)}$. Hence, 
$$
\mathcal{L}_{\gamma_{(s)}}\lambda^{(s')}_i=0,\qquad 1\leq s\neq s'\leq k,\qquad i=1,\ldots,w,
$$
and the final integrability condition for $\lambda_{(1)},\ldots,\lambda_{(k)}$ is satisfied. Therefore, 
one obtains $k$-wave solutions. 

Let us give an example of the previous procedure based on the three-dimensional Lie algebra $\mathfrak{r}_{3,-1}$ given in Table \ref{fig:Lie-algebra-classification}. There exists a matrix representation of the Lie algebra $\mathfrak{r}_{3,-1}$ of the form
$$
M_1=
\frac 12\left[\begin{array}{cccc}
-1&0&0&0\\
0&1&0&0\\
0&0&-1&0\\
0&0&0&1\\
\end{array}\right],\quad M_2=\left[\begin{array}{cccc}
0&0&0&0\\
-1&0&0&0\\
0&0&0&0\\
0&0&0&0\\
\end{array}\right],\quad M_3=\left[\begin{array}{cccc}
0&0&0&0\\
0&0&0&0\\
0&0&0&-1\\
0&0&0&0\\
\end{array}\right].
$$
Indeed,
$$
[M_1,M_2]=M_2,\qquad [M_1,M_3]=-M_3,\qquad [M_2,M_3]=0
$$
has the same structure constants as the basis $\{e_1,e_2,e_3\}$ of $\mathfrak{r}_{3,-1}$  in Table \ref{fig:Lie-algebra-classification} that led to our model.
The associated vector fields on $\mathbb{R}^4$ are given by
\begin{equation}\label{eq:Gamma3}
\gamma_{(1)}=\frac 12\left(u^1\frac{\partial}{\partial u^1}-u^2\frac{\partial}{\partial u^2}+u^3\frac{\partial}{\partial u^3}-u^4\frac{\partial}{\partial u^4}\right),\qquad \gamma_{(2)}=u^1\frac{\partial}{\partial u^2},\qquad \gamma_{(3)}=u^4\frac{\partial}{\partial u^3}.
\end{equation}
Then,
\begin{equation}\label{eq:GammaRelCom}
[\gamma_{(1)},\gamma_{(2)}]=\gamma_{(2)},\qquad [\gamma_{(1)},\gamma_{(3)}]=-\gamma_{(3)},\qquad [\gamma_{(2)},\gamma_{(3)}]=0
\end{equation}
are the commutation relations for $\mathfrak{r}_{3,-1}$ in the chosen basis representing the matrix elements $M_1,M_2,M_3$which satisfy the same commutation relations (\ref{eq:GammaRelCom}). The distribution spanned by $\gamma_{(1)},\gamma_{(2)},\gamma_{(3)}$ has rank three almost everywhere, namely $\gamma_{(1)}\wedge\gamma_{(2)}\wedge\gamma_{(3)}\neq 0$ almost everywhere, and its annihilator is spanned, almost everywhere, by
$$
u^4du^1+u^1du^4=d(u^1u^4).
$$
Then, any function $f=f(u^1u^4)$ is a first integral of $\gamma_{(1)},\gamma_{(2)},\gamma_{(3)}$. Moreover, one can choose $\lambda^{(1)},\lambda^{(2)},\lambda^{(3)}$ as differential one-forms with coefficients given by first integrals of $\gamma_{(1)},\gamma_{(2)},\gamma_{(3)}$. It is simple to obtain a system to construct a three-wave. For instance, consider
$$
\lambda^{(1)}=u^1u^4dx^1,\qquad \lambda^{(2)}=u^1u^4dx^2,\qquad \lambda^{(3)}=u^1u^4dx^3.
$$
as differential one-forms on the space of independent variables $\mathbb{R}^4$ with coefficients in the space of dependent variables $\mathbb{R}^4$.
Then, $\lambda^{(1)}\wedge \lambda^{(2)}\wedge \lambda^{(3)}\neq 0$ and
$$
\mathcal{L}_{\gamma_{(s')}}\lambda^{(s)}=0,\qquad 1\leq s\neq s'\leq 3.
$$
Both previous conditions can easily be achieved by enlarging the dimension of the space of independent variables, which can be done with no restrictions, and due to the fact that the coefficients of the $\lambda^{(1)},\lambda^{(2)},\lambda^{(3)}$, are first integrals of $\gamma_{(1)},\gamma_{(2)},\gamma_{(3)}$.  Moreover, the system of PDEs we are analysing is given by
$$
\sum_{i=1}^4[A^i(u^1u^4)u^4,0,0,A^i(u^1u^4)u^1]\frac{\partial}{\partial x^i}\left[\begin{array}{c}u^1\\u^2\\u^3\\u^4\end{array}\right]=0.
$$

Let us now study a system of the form (\ref{eq:ExSubm}), where $\gamma_{(1)},\gamma_{(2)},\gamma_{(3)}$ are given in (\ref{eq:Gamma3}). In other words, we are interested in determining an integrable system of PDEs of the form
\begin{equation}\label{eq:LabelEx}
\frac{\partial u}{\partial r_\nu}=\sum_{\mu=1}^3f_{\nu\mu}\gamma_{(\mu)}(u),\qquad u=(u^1,u^2,u^3,u^4)\in \mathbb{R}^4,\qquad \nu=1,\ldots,3.\end{equation}

To be integrable, one has to obey the conditions (\ref{eq:IntCon}) for our system of PDEs. 
In particular,
one obtains
\begin{equation}\label{eq::ParSys}
    \begin{gathered}
\frac{\partial f_{\mu1}}{\partial r_\alpha}-\frac{\partial f_{\alpha 1}}{\partial r_\mu}=0,\qquad \frac{\partial f_{\mu2}}{\partial r_\alpha}-\frac{\partial f_{\alpha 2}}{\partial r_\mu}+f_{\mu2}f_{\alpha1}-f_{\alpha2}f_{\mu1}=0,\\
\frac{\partial f_{\mu3}}{\partial r_\alpha}-\frac{\partial f_{\alpha 3}}{\partial r_\mu}-f_{\mu3}f_{\alpha1}+f_{\alpha3}f_{\mu1}=0,
\end{gathered}
\end{equation}
for $1\leq \mu<\alpha\leq 3$. Then, there exists a function $g\in C^\infty(\mathbb{R}^3)$ such that
$$
f_{\alpha1}=\frac{\partial g}{\partial r_\alpha},\qquad \alpha=1,2,3.
$$
Thus, let us consider a solution for the remaining equations in (\ref{eq::ParSys}) assuming
$$
0=\frac{\partial f_{\mu2}}{\partial r_\alpha}+f_{\mu2}f_{\alpha 1}\Rightarrow 0=\frac{\partial}{\partial r_\alpha}(\ln f_{\mu2}+g)\Rightarrow f_{\mu2}=e^{-g}\kappa_{\mu2}, \qquad \mu=1,2,3,
$$
for certain constants $\kappa_{12},\kappa_{22},\kappa_{32}$. Similarly, 
$$
0=\frac{\partial f_{\mu3}}{\partial r_\alpha}-f_{\mu3}f_{\alpha 1}\Rightarrow 0= \frac{\partial}{\partial r_\alpha}(\ln f_{\mu3}-g)=0\Rightarrow f_{\mu3}=e^{g}\kappa_{\mu3},\qquad \mu=1,2,3,
$$
for some constants $\kappa_{13},\kappa_{23},\kappa_{33}$. Hence,
one can consider the matrix of coefficients of (\ref{eq:LabelEx}) are
$$
\left[\begin{array}{ccc}\frac{\partial g}{\partial r_1}&e^{-g}\kappa_{12}&e^{g}\kappa_{13}\\
\frac{\partial g}{\partial r_2}&e^{-g}\kappa_{22}&e^{g}\kappa_{23}\\
\frac{\partial g}{\partial r_3}&e^{-g}\kappa_{32}&e^{g}\kappa_{33}\\\end{array}\right].
$$
In particular, assume $g=r_1$. The above coefficient matrix becomes
$$
\left[\begin{array}{ccc}1&0&0\\
0&e^{-r_1}&0\\
0&0&e^{r_1}\\\end{array}\right]
$$
and the associated system of PDEs for the three-wave under study is
\begin{equation}\label{eq:SystemRec}
\frac{\partial u}{\partial r_1}=\gamma_{(1)},\qquad \frac{\partial u}{\partial r_2}=e^{-r_1}\gamma_{(2)},\qquad \frac{\partial u}{\partial r_3}=e^{r_1}\gamma_{(3)}.
\end{equation}
It is worth noting that there is another approach to the  construction of such a system of PDEs which involves putting the vector fields $\gamma_{(1)},\gamma_{(2)},\gamma_{(3)}$ into a quasi-rectifiable form, which was discussed previously. Note that, in view of the commutation relations (\ref{eq:GammaRelCom}), one has that
$$
\left[\frac{\partial}{\partial r_1}+\gamma_{(1)},\frac{\partial}{\partial r_2}+e^{-r_1}\gamma_{(2)}\right]\!=\!\left[\frac{\partial}{\partial r_1}+\gamma_{(1)},\frac{\partial}{\partial r_3}+e^{r_1}\gamma_{(3)}\right]\!=0
$$
and
$$
\!\left[\frac{\partial}{\partial r_2}+e^{-r_1}\gamma_{(2)},\frac{\partial}{\partial r_3}+e^{r_1}\gamma_{(3)}\right]=0
$$
and the system (\ref{eq:SystemRec}) is integrable. It is also worth noting that it is simple to obtain some coefficients depending on $r_1,r_2,r_3$ to multiply $\gamma_{(1)},\gamma_{(2)},\gamma_{(3)}$ and make them commute. 

This is different from the standard method, where we multiply $\gamma_{(1)},\gamma_{(2)},\gamma_{(3)}$ by functions on the space of dependent variables. The previous method can be applied to all quasi-rectifiable Lie algebras of vector fields detailed in the classification of Section \ref{Sec::ClaRecLieAlg}.

\section*{Acknowledgements}

A.M. Grundland was partially supported by an Operating Grant from NSERC of Canada.  J. de Lucas acknowledges a Simons--CRM professorship funded by the Simons Foundation and the Centre de Recherches Math\'ematiques (CRM) of the Universit\'e de Montr\'eal. J. de Lucas also acknowledges partial financial support provided by the Universit\'e du Qu\'ebec \`a Trois-Rivi\`eres for his visit to the CRM. 

\newpage
\section{Appendix: Classification of quasi-rectifiable indecomposable Lie algebras}

Let us summarise our classification of four- and five-dimensional quasi-rectifiable indecomposable Lie algebras. Our results follow from a simple but long application of the analysis of equation \eqref{eq:Charact}. Additional details on the parameters of indecomposable Lie algebras in the following tables can be found in \cite{WS14}. In any case, the specific values of such parameters are not relevant to the applications and results analysed in this work.
{\small 
\begin{table}[ht]
   \centering
    \begin{tabular}{|c|c|c|c|c|c|c|c|c|c|c|}
        \hline
         $\mathfrak{g}$& $[e_1,e_2]$ & $[e_1,e_3]$ & $[e_1,e_4]$ & $[e_2,e_3]$&$[e_2,e_4]$&$[e_3,e_4]$& Polyn. & quasi.\!\!\! rect. \\
 \hline
        $\mathfrak{h}_2\oplus \mathbb{R}^2$ & $e_2$ & $0$ & $0$ & $0$ &$0$&$0$& No & Yes\\
 \hline
        $\mathfrak{h}_2\oplus \mathfrak{h}_2$ & $e_2$ & $0$ & $0$ & $0$ &$0$&$e_4$& No & Yes\\
        \hline
        $\mathfrak{h}_3\oplus \mathbb{R}$ & $e_3$ & $0$ & $0$ & $0$ &$0$&$0$&$P_{123}$& No\\\hline
        $\mathfrak{n}_{4,1}$ & $0$ & $0$ & $0$ & $0$ &$e_1$&$e_2$& $P_{124}$& No\\
        \hline
             $\mathfrak{s}_{4,1} $ & $0$ & $0$ & $0$ & $0$ &$-e_1$&$-e_3$&$P_{124}$& No\\
              \hline
             $\mathfrak{s}_{4,2} $ & $0$ & $0$ & $-e_1$ & $0$ &$-e_1-e_2$&$-e_2-e_3$& $P_{123}$&No\\
        \hline
             $\mathfrak{s}_{4,3} $ & $0$ & $0$ & $-e_1$ & $0$&$-ae_2$ &$-be_3$&No& Yes\\
             \hline
             $\mathfrak{s}_{4,4} $ & $0$ & $0$ & $-e_1$ & $0$ &$-e_1-e_2$&$-ae_3$&$P_{124}$& No\\
        \hline
             $\mathfrak{s}_{4,5} $ & $0$ & $0$ & $-\alpha e_1$ & $0$ &$e_3-\beta e_2$&$-e_2-\beta e_3$&$P_{234}$& No\\
                     \hline
             $\mathfrak{s}_{4,6} $ & $0$ & $0$ & $0$ & $e_1$ &$-e_2$&$e_3$&$P_{123}$& No\\
                     \hline
             $\mathfrak{s}_{4,7} $ & $0$ & $0$ & $0$ & $e_1$ &$e_3$&$-e_2$&$P_{123}$& No\\
                     \hline
             $\mathfrak{s}_{4,8} $ & $0$ & $0$ & $-(1+a)e_1$ & $e_1$ &$ -e_2$&$-ae_3$&$P_{123}$& No\\        \hline
             $\mathfrak{s}_{4,9} $ & $0$ & $0$ & $-2\alpha e_1$ & $e_1$ &$e_3-\alpha e_2$&$-e_2-\alpha e_3$& $P_{123}$&No\\
                     \hline
             $\mathfrak{s}_{4,10} $ & $0$ & $0$ & $-2e_1$ & $e_1$ &$-e_2$&$-e_2-e_3$&$P_{123}$& No\\
                     \hline
             $\mathfrak{s}_{4,11} $ & $0$ & $0$ & $-e_1$ & $e_1$ &$-e_2$&$0$& $P_{123}$&No\\
                \hline
             $\mathfrak{s}_{4,12} $ & $0$ & $-e_1$ & $e_2$ & $-e_2$ &$-e_1$&$0$& $P_{124}$&No\\
        \hline
    \end{tabular}
    \caption{Classification of quasi-rectifiable non-Abelian indecomposable four-dimensional Lie algebras. Note that $\lambda\in (-1,1).$ The value of the relevant polynomial coefficients of $\vartheta\wedge \delta \vartheta$ for $\vartheta=\sum_{i=1}^4\lambda_ie^i$ for the basis $\{e^1,e^2,e^3,e^4\}$ dual to the basis $\{e_1,e_2,e_3,e_4\}$ of the Lie algebra $\mathfrak{g}$ is given. Note that $\mathfrak{h}_3$ is the Heisenberg Lie algebra. Only one of the polynomial coefficients of $\vartheta\wedge \delta\vartheta$ is necessary in order to show that there is no quasi-rectifiable basis. The polynomial $P_{123}$ is $2\lambda_2^2$ for every Lie algebra that is not quasi-rectifiable, except for $\mathfrak{h}_3\oplus \mathbb{R}$, which has $P_{123}=-2\lambda_3^2$. The polynomial $P_{124}$ is proportional to $\lambda_1^2$ for $\mathfrak{n}_{4,1}$,  $\mathfrak{s}_{4,1}$, and $\mathfrak{s}_{4,4}$, while it is proportional to $\lambda_1^2+\lambda_2^2$ for $\mathfrak{s}_{4,12}$. Finally, $P_{234}$ is proportional to $\lambda_2^2+\lambda_3^2$. The coefficients $a,b,\alpha$ take different values, which are of not relevant in this work (see \cite{WS14} for details).}
    \label{fig:Lie-algebra-classification4}
\end{table}}
\begin{landscape}

\begin{table}[ht]
   {\small \begin{tabular}{|c|c|c|c|c|c|c|c|c|c|}
         \hline
      $\mathfrak{g}$& $[e_1,e_5]$ & $[e_2,e_3]$&$[e_2,e_4]$&$[e_2,e_5]$&$[e_3,e_4]$&$[e_3,e_5]$& $[e_4,e_5]$&Rec Pol. & quasi-rectifiable\\
 \hline
        $\mathfrak{n}_{5,1}$ & $0$ &$0$&$0$&$0$&$0$ &$e_1$&$e_2$&$P_{235}$& No\\
 \hline
        $\mathfrak{n}_{5,2}$  & $0$ &$0$&$0$&$0$&$e_2$ &$e_1$&$e_3$&$P_{235}$& No\\
 \hline
        $\mathfrak{n}_{5,3}$  & $0$ &$0$&$e_1$&$0$&$0$ &$e_1$&$0$&$P_{124}$& No\\
 \hline
        $\mathfrak{n}_{5,4}$ & $0$ &$0$&$0$&$e_1$&$e_1$ &$0$&$e_2$&$P_{125}$& No\\
   \hline
        $\mathfrak{n}_{5,5}$ & $0$ &$0$&$0$&$e_1$&$0$ &$e_2$&$e_3$&$P_{125}$& No\\
   \hline
        $\mathfrak{n}_{5,6}$  & $0$ &$0$&$0$&$e_1$&$e_1$ &$e_2$&$e_3$&$P_{125}$& No\\
        \hline
             $\mathfrak{s}_{5,1} $  & $0$ &$0$&$0$&$-e_1$&$0$&$-e_2$&$-e_4$&$P_{125}$& No\\
              \hline
             $\mathfrak{s}_{5,2} $  & $0$ &$0$&$0$&$-e_1$&$0$&$-e_3$&$-e_3-e_4$&$P_{125}$& No\\
          \hline
             $\mathfrak{s}_{5,3} $  & $0$ &$0$&$0$&$-e_1$&$0$&$-e_3$&$-ae_4$&$P_{125}$& No\\
          \hline
             $\mathfrak{s}_{5,4} $  & $0$ &$0$&$0$&$-e_1$&$0$&$e_4-\alpha e_3$&$-e_3-\alpha e_4$&$P_{125}$& No\\
          \hline
             $\mathfrak{s}_{5,5} $  & $-e_1$&$0$&0&$-e_1-e_2$&$0$&$-e_3-e_2$&$-e_3-e_4$&$P_{125}$& No\\
          \hline
             $\mathfrak{s}_{5,6} $  & $-e_1$ &$0$&$0$&$-e_1-e_2$&$0$&$-ae_3$&$-e_3-ae_4$&$P_{125}$& No\\
          \hline
             $\mathfrak{s}_{5,7} $ & $-e_1$ &$0$&0&$-e_1-e_2$&$0$&$-e_2-e_3$&$-ae_4$&$P_{125}$& No\\
             \hline
             $\mathfrak{s}_{5,8} $  & $e_2-\alpha e_1$ &$0$&$0$&$-\alpha e_2-e_1$&$0$&$-e_1-e_4-\alpha e_3$&$-e_2-e_3-\alpha e_4$&$P_{125}$& No\\        \hline
             $\mathfrak{s}_{5,9} $  & $-e_1$ &$0$&$0$&$-a e_2$&$0$&$-be_3$&$-ce_4$&No& Yes\\
                     \hline
             $\mathfrak{s}_{5,10} $ &$-e_1$ & 0 &0&$-e_1-e_2$&$0$&$-a e_3$&$-be_4$&$P_{125}$& No\\
                     \hline
             $\mathfrak{s}_{5,11} $  &$-\alpha e_1$& 0 &0&$-\beta e_2$&$0$&$e_4-\gamma e_3$&$-e_3-\gamma e_4$&No& Yes\\
                \hline
             $\mathfrak{s}_{5,12} $  & $-e_1$ &$0$&$0$&$-e_1-e_2$&$0$&$\beta e_4-\alpha e_3$&$-\beta e_3-\alpha e_4$&$P_{125}$& No\\\hline
             $\mathfrak{s}_{5,13} $  & $e_2-\alpha e_1$ &$0$&$0$&$-e_1-\alpha e_2$&$0$&$\gamma e_4-\beta e_3$&$-\gamma e_3-\beta e_4$&$P_{125}$& No\\\hline
             $\mathfrak{s}_{5,14} $ & $0$ &$e_1$&$0$&$0$&$0$&$-e_2$&$-e_4$&$P_{123}$& No\\\hline
             $\mathfrak{s}_{5,15} $  & $0$ &$e_1$&$0$&$-e_2$&$0$&$e_3$&$-e_1$&$P_{123}$& No\\\hline
             $\mathfrak{s}_{5,16} $  & $0$ &$e_1$&$0$&$e_3$&$0$&$-e_2$&$-e_1$&$P_{123}$& No\\\hline
             $\mathfrak{s}_{5,17} $  & $0$ &$e_1$&$0$&$-e_2$&$0$&$e_3$&$-ae_4$&$P_{123}$& No\\\hline
             $\mathfrak{s}_{5,18} $  & $0$ &$e_1$&$0$&$e_2$&$0$&$-e_3-e_4$&$-e_4$&$P_{123}$& No\\\hline
             $\mathfrak{s}_{5,19} $  & $0$ &$e_1$&$0$&$e_3$&$0$&$-e_2$&$-\alpha e_4$&$P_{123}$& No\\\hline
             $\mathfrak{s}_{5,20} $  & $-2e_1$&$e_1$&$0$&$-e_2$&$0$&$-e_4$&$0$&$P_{123}$& No\\
        \hline
    \end{tabular}
    \caption{First part of the classification of quasi-rectifiable non-Abelian indecomposable five-dimensional Lie algebras. In this first table, $[e_1,e_\alpha]=0$ for $\alpha=2,3,4$. Only the coefficient of equation (\ref{eq:Charact}) is needed to prove that there is no quasi-rectifiable basis. The polynomial coefficient is proportional to $\lambda_1^2$ except for the cases $\mathfrak{s}_{5,8}$ and $\mathfrak{s}_{5,13}$, where the polynomial is proportional to $\lambda_1^2+\lambda_2^2$.}
    \label{fig:Lie-algebra-classification5}}
\end{table}
\end{landscape}

\begin{landscape}
{\tiny 
\begin{table}[ht]
    \centering
  {\small  \begin{tabular}{|c|c|c|c|c|c|c|c|c|c|c|c|}
         \hline
      $\mathfrak{g}$&  $[e_1,e_4]$ & $[e_1,e_5]$ & $[e_2,e_3]$&$[e_2,e_4]$&$[e_2,e_5]$&$[e_3,e_4]$&$[e_3,e_5]$& $[e_4,e_5]$&Polyn. & quasi-rect. \\\hline
             $\mathfrak{s}_{5,21} $ &  $0$ & $-2e_1$ &$e_1$&$0$&$-e_2-e_3$&$0$&$-e_3-e_4$&$-e_4$&$P_{123}$& No\\\hline
             $\mathfrak{s}_{5,22} $ &  $0$ & $-(a+1)e_1$ &$e_1$&$0$&$-e_2$&$0$&$-ae_3$&$-b e_4$&$P_{123}$& No\\\hline
             $\mathfrak{s}_{5,23} $ &  $0$ & $-(a+1)e_1$ &$e_1$&$0$&$-ae_2$&$0$&$-e_4-e_3$&$-e_4$&$P_{123}$& No\\\hline
             $\mathfrak{s}_{5,24} $ &  $0$ & $-2e_1$ &$e_1$&$0$&$-e_2-e_3$&$0$&$-e_3$&$-a e_4$&$P_{123}$& No\\\hline
             $\mathfrak{s}_{5,25} $ &  $0$ & $-2\alpha e_1$ &$e_1$&$0$&$e_3-\alpha e_2$&$0$&$-e_2-\alpha e_3$&$-\beta e_4$&$P_{123}$& No\\\hline
             $\mathfrak{s}_{5,26} $ &  $0$ & $-(a+1)e_1$ &$e_1$&$0$&$-e_2$&$0$&$-ae_3$&$-e_1-(a+1)e_4$&$P_{123}$& No\\\hline
             $\mathfrak{s}_{5,27} $ & $0$ & $-2e_1$ &$e_1$&$0$&$-e_2-e_3$&$0$&$-e_3$&$-e_1-2e_4$&$P_{123}$& No\\\hline
             $\mathfrak{s}_{5,28} $ & $0$ & $-2\alpha e_1$ &$e_1$&$0$&$-\alpha e_2-e_3$&$0$&$e_2-\alpha e_3$&$-e_1-2\alpha e_4$&$P_{123}$& No\\\hline
             $\mathfrak{s}_{5,29} $ &  $0$ & $-e_1$ &$e_1$&$0$&$0$&$0$&$-e_4-e_3$&$-e_4$&$P_{123}$& No\\\hline     
             $\mathfrak{s}_{5,30} $ & $0$ & $-e_1$ &$e_1$&$0$&$-e_2$&$0$&$0$&$-ae_4$&$P_{123}$& No\\\hline
             $\mathfrak{s}_{5,31} $ & $0$ & $-e_1$ &$e_1$&$0$&$-e_2$&$0$&$0$&$-e_1-e_4$&$P_{123}$& No\\\hline
             $\mathfrak{s}_{5,32} $ &$0$ &$-e_1$ &$e_1$&$0$&$0$&$-e_3-e_4$&$-e_4-e_1$&$0$&$P_{123}$& No\\\hline
             $\mathfrak{s}_{5,33} $ &  $0$ & $0$ &$0$&$e_1$&$e_2$&$e_2$&$2e_3$&$-e_4$&$P_{123}$& No\\\hline
             $\mathfrak{s}_{5,34} $ &  $0$ & $-3e_1$ &$0$&$e_1$&$-2e_2$&$e_2$&$-e_3$&$-e_4-e_3$&$P_{124}$& No\\\hline
             $\mathfrak{s}_{5,35} $ & $0$ & $-(a+2)e_1$ &$0$&$e_1$&$-(a+1)e_2$&$e_2$&$-ae_3$&$-e_4$&$P_{124}$& No\\\hline
             $\mathfrak{s}_{5,36} $ &  $0$ & $-2e_1$ &$0$&$e_1$&$-e_2$&$e_2$&$0$&$-e_4$&$P_{124}$& No\\\hline
             $\mathfrak{s}_{5,37} $ &  $0$ & $-e_1$ &$0$&$e_1$&$-e_2$&$e_2$&$-e_3$&$0$&$P_{124}$& No\\\hline
             $\mathfrak{s}_{5,38} $ & $0$ & $-e_1$ &$0$&$e_1$&$-e_2$&$e_2$&$-\epsilon e_1-e_3$&$0$&$P_{124}$& No\\\hline
               $\mathfrak{s}_{5,39} $ & $0$ & $0$ &$0$&$-e_2$&$0$&$0$&$-e_3$&$-e_1$&$P_{145}$& No\\\hline
             $\mathfrak{s}_{5,40} $ &  $0$ & $0$ &$0$&$-e_2$&$e_3$&$-e_3$&$-e_2$&$-e_1$&$P_{145}$& No\\\hline
             $\mathfrak{s}_{5,41} $ &$-e_1$ & $0$ &$e_1$&$0$&$-e_2$&$-ae_3$&$-be_3$&$0$&$P_{123}$& No\\\hline
             $\mathfrak{s}_{5,42} $ &  $-ae_1$ & $-e_1$ &0&$-e_2$&$0$&$-e_3$&$-e_2$&$0$&$P_{123}$& No\\\hline
             $\mathfrak{s}_{5,43} $ &  $-\alpha e_1$ & $-\beta e_1$ &$0$&$-e_2$&$e_3$&$-e_3$&$-e_2$&$0$&$P_{235}$& No\\\hline
             $\mathfrak{s}_{5,44} $ &  $-e_1$ & $0$ &$e_1$&$-e_2$&$-e_2$&$0$&$e_3$&$0$&$P_{123}$& No\\\hline
             $\mathfrak{s}_{5,45} $ &  $-2e_1$ & $0$ &$e_1$&$-e_2$&$-e_3$&$-e_3$&$e_2$&$0$&$P_{123}$& No\\\hline
             2$\mathfrak{n}_{1,1} $ &  $e_5$ & $0$ &$2e_3$&$e_4$&$-e_5$&$0$&$e_4$&$0$&$P_{145}$& No\\\hline
    \end{tabular}
    \caption{Continuation of the classification of quasi-rectifiable non-Abelian indecomposable five-dimensional Lie algebras. In this table, $[e_1,e_2]=[e_1,e_3]=0$ except for $2\mathfrak{n}_{1,1}$, where $[e_1,e_2]=2e_1$ and $[e_1,e_3]=-e_2$. Note that $\lambda\in (-1,1).$ Only one coefficient of equation (\ref{eq:Charact}) is needed to prove that there is no quasi-rectifiable basis. The written polynomial coefficient is proportional to $\lambda_1^2$ except for the cases $2\mathfrak{n}_{1,1}$ and $\mathfrak{s}_{5,43}$, whose coefficients are proportional to the polynomials $\lambda_2^2+\lambda_3^2$ and $\lambda_5^2$ respectively.}
    \label{fig:Lie-algebra-classification6}}
\end{table}}
\end{landscape}
\addcontentsline{toc}{section}{References}

\end{document}